\documentclass[11pt]{article}
\usepackage{amsmath,amssymb,amscd,amsthm}
\setlength{\topmargin}{-0.5in} \setlength{\textheight}{8.5in}
\setlength{\oddsidemargin}{-0.1in} \setlength{\evensidemargin}{0.in}
\setlength{\textwidth}{6.75in} \setlength{\headsep}{1.2cm}
\setlength{\parskip}{0.2cm} \setlength{\parindent}{0.4cm}

\pagestyle{plain}

\makeatletter\@addtoreset {equation}{section}\makeatother

\newtheorem{theorem}{Theorem}
\newtheorem{lemma}{Lemma}
\newtheorem{remark}{Remark}

\def\R{\mathbb{R}}

\newcommand{\wt}{\widetilde}

\usepackage{color}

%small note

%big note

%\newenvironment{proof1}{  \noindent {\it Proof }}{\hfill $\Box$}

\usepackage[dvips]{epsfig}
\usepackage{graphicx}

\begin{document}

\title{\bf Normal form for the symmetry-breaking bifurcation \\ in the nonlinear Schr\"{o}dinger equation}

\author{D.E. Pelinovsky$^1$ and T. Phan$^2$ \\
{\small $^{1}$ Department of Mathematics, McMaster University, Hamilton, Ontario, Canada, L8S 4K1}\\
{\small $^{2}$ Department of Mathematics, University of Tennessee, Knoxville, TN 37996} }

\maketitle

\begin{abstract}
We derive and justify a normal form reduction of the nonlinear Schr\"{o}dinger equation for
a general pitchfork bifurcation of the symmetric bound state that occurs in a double-well symmetric potential.
We prove persistence of normal form dynamics for both supercritical and subcritical pitchfork bifurcations
in the time-dependent solutions of the nonlinear Schr\"{o}dinger equation over long but finite time intervals.
\end{abstract}

\section{Introduction}

We consider the nonlinear Schr\"{o}dinger (NLS) equation with a focusing power
nonlinearity and an external potential (also known as the Gross-Pitaevskii equation),
\begin{equation}\label{GP}
i \Psi_t = -\Psi_{xx} + V(x) \Psi - |\Psi|^{2p} \Psi,
\end{equation}
where $\Psi(x,t) : \R \times \R \to \mathbb{C}$ is the wave
function, $p \in \mathbb{N}$ is the nonlinearity power,  and
$V(x) :\mathbb{R} \rightarrow \mathbb{R}$ is the external, symmetric, double-well potential
satisfying the following conditions:
\begin{enumerate}
\item[\textup{(H1)}] $V(x) \in L^{\infty}(\R)$ and $xV'(x) \in L^\infty(\mathbb{R})$;
\item[\textup{(H2)}] $\lim_{|x| \rightarrow \infty }V(x) =0$;
\item[\textup{(H3)}] $V(-x) = V(x)$ for all $x \in \R$;
\item[\textup{(H4)}] $L_0 = -\partial_x^2 + V(x)$ has the lowest eigenvalue $-E_0 <0$;
\item[\textup{(H5)}] $V(x)$ has a non-degenerate local maximum at $x = 0$ and two minima at $x = \pm
x_0$ for some $x_0 > 0$.
\end{enumerate}

The easiest way to think about
the double-well potential $V(x)$ is to represent it with
\begin{equation}\label{potential}
V(x) = \frac{1}{2} \left[ V_0(s-x) + V_0(s+x) \right], \quad s
\geq 0,
\end{equation}
where the single-well potential $V_0(x)$ satisfies (H1)--(H4) and has a global minimum at $x
= 0$ and no other extremum points. For sufficiently large $s > s_*$, where $s_*$ is the
inflection point of $V_0$, that is, $V_0''(s_*) = 0$, the sum of
two single-well potentials (\ref{potential}) becomes a
double-well potential we would like to consider.

The symmetric double-well potentials are used in the atomic physics of Bose--Einstein
condensation \cite{markus1} through a combination of parabolic and periodic (optical
lattice) potentials. Similar potentials were also examined in the
context of nonlinear optics, e.g. in optically induced photo-refractive crystals
\cite{zhigang} and in a structured annular core of an optical fiber \cite{Longhi}.
Physical relevance and simplicity of the model make the topic fascinating
for a mathematical research.

Bifurcations of stationary states and their stability in the NLS equation (\ref{GP})
under the assumptions (H1)--(H5) on the potential $V(x)$ were recently considered by
Kirr {\em et al.} \cite{KirrPelin}.

Let $\Psi(x,t) = e^{i E t} \phi(x;E)$ be a stationary state such that
$\phi(x;E)$ is a solution of the stationary nonlinear Schr\"{o}dinger equation
\begin{equation}\label{stationary}
(-\partial_x^2 + V) \phi - \phi^{2p+1}  + E \phi = 0.
\end{equation}
Via standard regularity theory, if $V \in L^{\infty}(\R)$, then
any weak solution $\phi(\cdot;E) \in H^1(\R)$ of the stationary equation
\eqref{stationary} belongs to $H^2(\mathbb R)$. Moreover, if $-E
\notin \sigma(L_0)$, then the solution $\phi(\cdot;E) \in H^2(\R)$ decays
exponentially fast to zero as $|x| \to \infty$.

Existence of symmetric stationary states $\phi$ for any $E > E_0$
bifurcating from the lowest eigenvalue $-E_0$ of the operator $L_0 = -\partial_x^2 + V(x)$
was first considered by Jeanjean and Stuart \cite{js:ubs}.
Kirr {\em et al.} \cite{KirrPelin} continued this research theme and obtained the
following bifurcation theorem.

\begin{theorem}[{\em Kirr et all. \cite{KirrPelin}}]
Consider the stationary NLS equation (\ref{stationary}) with $p \geq \frac{1}{2}$
and $V(x)$ satisfying (H1)--(H5).

(i) There exists a $C^1$ curve
$(E_0,\infty) \ni E \mapsto \phi(\cdot;E) \in H^2(\R)$ of positive symmetric
states bifurcating from the zero solution at $E = E_0$.
This curve undertakes the symmetry--breaking (pitchfork) bifurcation at
a finite $E_* \in (E_0,\infty)$, for which the second eigenvalue of the
operator
\begin{equation}\label{Jacobian}
L_+(E) = -\partial_x^2 + V(x) - (2p+1) \phi^{2p}(x;E) + E
\end{equation}
passes from positive values for $E < E_*$ to negative values for
$E > E_*$.

(ii) Let $\phi_*(x) = \phi(x;E_*)$ be the positive symmetric state at the
bifurcation point and $\psi_* \in H^2(\R)$ be the anti-symmetric eigenvector
of $L_+(E_*)$ corresponding to the second eigenvalue
$\lambda(E)$ such that $\lambda(E_*) = 0$ and $\lambda'(E_*) < 0$.
The $C^1$ curve $(E_0,\infty) \ni E \mapsto \phi(\cdot;E) \in H^2(\R)$
intersects transversely at $E = E_*$ with the $C^1$ curve of
positive asymmetric states $E \mapsto \varphi_{\pm}(\cdot;E) \in H^2(\R)$
that extends to $E > E_*$ if ${\cal Q} < 0$ and
to $E < E_*$ if ${\cal Q} > 0$, where
\begin{equation}
\label{bifurcation-coefficient}
{\cal Q} = 2 p^2 (2p+1)^2 \langle \phi_*^{2p-1} \psi_*^2, L_+^{-1}(E_*)
\phi_*^{2p-1} \psi_*^2 \rangle_{L^2} + \frac{1}{3} p (2p+1) (2p-1)
\langle \psi_*^2, \phi_*^{2p-2} \psi_*^2 \rangle_{L^2}.
\end{equation}
The asymmetric states $\varphi_+$ and $\varphi_-$ are centered at the left and
the right well of $V$, respectively. \label{theorem-Kirr}
\end{theorem}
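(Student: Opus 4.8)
The plan is to treat the statement in two stages: a spectral/continuation argument locating the bifurcation point $E_*$ for part~(i), and then a Lyapunov--Schmidt reduction at $E_*$ that exploits the reflection symmetry (H3) to produce a pitchfork and to read off the coefficient $\mathcal{Q}$ for part~(ii). Throughout I use that the reflection operator $x \mapsto -x$ commutes with everything in sight, because $V$ and the symmetric state $\phi$ are even.

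For part~(i), I would first recall that the positive symmetric branch $E \mapsto \phi(\cdot;E)$ bifurcates from the simple lowest eigenvalue $-E_0$ of $L_0$ (Jeanjean--Stuart), so that for $E-E_0$ small the state $\phi$ is positive, symmetric, and small in $H^2$, with $C^1$ dependence on $E$ coming from the one-dimensional kernel of $L_0+E_0$ and the implicit function theorem; global continuation to all $E>E_0$ rests on a priori bounds established in the cited works. To locate $E_*$ I track the spectrum of the self-adjoint operator $L_+(E)$ in \eqref{Jacobian}. Since $\phi>0$ is nodeless and $L_+(E)\phi = -2p\,\phi^{2p+1}<0$ pointwise (using \eqref{stationary}), the ground state of $L_+(E)$ is strictly negative, its ground eigenfunction is positive and symmetric, and by Sturm oscillation theory together with (H3) the second eigenfunction is \emph{anti-symmetric} with a single node at $x=0$. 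Denote its eigenvalue by $\lambda(E)$. Near $E_0$ one has $L_+(E)\approx L_0+E$, whose second eigenvalue is strictly positive, so $\lambda(E)>0$; for large $E$ the symmetric state resolves into two nearly identical humps localized at $\pm x_0$, and the exponentially small tunneling splitting pushes the anti-symmetric mode below zero, giving $\lambda(E)<0$. Continuity yields a crossing $\lambda(E_*)=0$, and transversality $\lambda'(E_*)<0$ follows from first-order perturbation theory, $\lambda'(E_*) = \langle \partial_E L_+(E_*)\psi_*,\psi_*\rangle/\|\psi_*\|_{L^2}^2$.

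For part~(ii), set $F(\phi,E) = (-\partial_x^2+V)\phi - \phi^{2p+1} + E\phi$, so $F(\phi(\cdot;E),E)\equiv 0$ and $D_\phi F(\phi(\cdot;E),E) = L_+(E)$. Writing $\phi = \phi(\cdot;E)+v$ and $\mu = E-E_*$ turns \eqref{stationary} into
$$ L_+(E)\,v = \binom{2p+1}{2}\phi^{2p-1}v^2 + \binom{2p+1}{3}\phi^{2p-2}v^3 + \cdots. $$
Since $\ker L_+(E_*) = \mathrm{span}\{\psi_*\}$ is one-dimensional, decompose $v = a\psi_* + w$ with $w\perp\psi_*$ and solve the range part by the implicit function theorem (invertibility of $L_+(E_*)$ on $\{\psi_*\}^\perp$), obtaining $w = a^2 w_2 + O(a\mu,a^3)$ with $w_2 = p(2p+1)\,L_+^{-1}(E_*)\big[\phi_*^{2p-1}\psi_*^2\big]$; here $\phi_*^{2p-1}\psi_*^2$ is even, hence orthogonal to the odd $\psi_*$, so $L_+^{-1}(E_*)$ applies. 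The bifurcation equation is the projection onto $\psi_*$. Its quadratic contribution $\binom{2p+1}{2}\langle\phi_*^{2p-1}\psi_*^2,\psi_*\rangle$ \emph{vanishes} because $\phi_*^{2p-1}\psi_*^2$ is even and $\psi_*$ is odd --- precisely the $\mathbb{Z}_2$ symmetry forcing a pitchfork rather than a transcritical bifurcation. Collecting the two cubic contributions --- the genuine cubic term, and the feedback of $w_2$ through the cross term $2a\psi_* w$ in $v^2$ --- and normalizing $\|\psi_*\|_{L^2}=1$, gives the reduced equation
$$ a\Big(\lambda'(E_*)\,\mu - \mathcal{Q}\,a^2 + o(\mu)+o(a^2)\Big) = 0, $$
with $\mathcal{Q}$ exactly as in \eqref{bifurcation-coefficient}: the first term $2p^2(2p+1)^2\langle\phi_*^{2p-1}\psi_*^2, L_+^{-1}(E_*)\phi_*^{2p-1}\psi_*^2\rangle$ arises from $w_2$ (noting $\binom{2p+1}{2}=p(2p+1)$), and the second from $\binom{2p+1}{3}\langle\psi_*^2,\phi_*^{2p-2}\psi_*^2\rangle$.

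Finally, the nonzero branch satisfies $a^2 = \lambda'(E_*)\mu/\mathcal{Q} + \cdots$; since $\lambda'(E_*)<0$, positivity of $a^2$ forces $\mu>0$ (i.e.\ $E>E_*$) when $\mathcal{Q}<0$ and $\mu<0$ (i.e.\ $E<E_*$) when $\mathcal{Q}>0$, which is the claimed super/subcritical dichotomy. The two signs $a=\pm\sqrt{\lambda'(E_*)\mu/\mathcal{Q}}$ yield $\varphi_+$ and $\varphi_-$ localized in the two wells, and parametrizing the asymmetric branch by $a$ gives tangent $(\partial_a\phi,\partial_a E)|_{a=0} = (\psi_*,0)$, which is transverse to the symmetric tangent $(\phi_E,1)$ in $H^2(\R)\times\R$. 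I expect the genuine obstacle to be part~(i) rather than part~(ii): the reduction is standard once $E_*$, simplicity of $\ker L_+(E_*)$, and $\lambda'(E_*)<0$ are in hand, whereas rigorously proving that $\lambda(E)$ changes sign --- equivalently, controlling the global symmetric branch for all $E>E_0$ and its two-hump concentration with exponentially small tunneling splitting as $E\to\infty$ --- is the delicate analytic step.
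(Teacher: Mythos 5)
Your proof of part (ii) is essentially correct --- the Lyapunov--Schmidt reduction, the parity argument killing the quadratic term, and the bookkeeping of the two cubic contributions (the direct cubic term and the feedback of $w_2$ through the cross term) reproduce ${\cal Q}$ of (\ref{bifurcation-coefficient}) exactly --- but it is not the route this paper takes. The paper treats Theorem \ref{theorem-Kirr} as a quoted result and, in Section 3.1, gives an \emph{alternative} proof of part (ii) only, by specializing its time-dependent modulation framework to stationary solutions: the perturbation is decomposed as $u = A\psi(E) + U$ against the eigenvectors $(\psi,\chi)$ of the linearized \emph{non-self-adjoint} system (\ref{linearized-system}), with the symplectic orthogonality conditions $\langle\phi,U\rangle_{L^2} = \langle\chi,U\rangle_{L^2}=0$ in place of your orthogonal splitting $v = a\psi_* + w$ with $w\perp\psi_*$; the price is a renormalized frequency ${\cal E}\neq E$ appearing in (\ref{E-new-definition}) and (\ref{E-cal-E}), and the payoff is that the same frame carries over verbatim to the time-dependent normal form of Theorem \ref{theorem-main}. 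Your approach is in fact the one the paper explicitly attributes to the original reference \cite{KirrPelin}, and it is the more economical choice if one only wants the stationary pitchfork. Two caveats. First, your normalization $\|\psi_*\|_{L^2}=1$ matches \cite{KirrPelin} but not this paper, which normalizes $\langle\chi,\psi\rangle_{L^2}=1$ and therefore retains the factor $\|\psi_*\|^2_{L^2}$ in front of $\lambda'(E_*)$ in (\ref{normal-form-stat-old}). Second, and more importantly, part (i) is proved neither here nor by you: the paper simply cites it, and your sketch of the sign change of $\lambda(E)$ (positivity near $E_0$ by the small-amplitude limit, negativity for large $E$ via an ``exponentially small tunneling splitting'') is heuristic --- the two-hump concentration picture you invoke belongs to the large-separation limit $s\to\infty$ of (\ref{potential}) rather than to a fixed double well as $E\to\infty$, and making the large-$E$ negativity of $\lambda(E)$ rigorous is precisely the hard analytic content of \cite{KirrPelin} that you correctly flag but do not supply.
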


Orbital stability of the stationary state $\phi(x;E)$
in the NLS equation (\ref{GP}) depends on the number of negative
eigenvalues of $L_+(E)$ and $L_-(E)$, where
\begin{equation}\label{Operator}
L_-(E) = -\partial_x^2 + V(x) - \phi^{2p}(x;E) + E.
\end{equation}
Since $L_-(E) \phi(E) = 0$ and $\phi(x;E) > 0$ for all $x \in \R$ and $E > E_0$,
the spectrum of $L_-(E)$ is non-negative for any $E > E_0$.
This fact simplifies the stability analysis of the stationary states.

Let us denote $N_s(E) = \| \phi(\cdot;E) \|_{L^2}^2$
and $N_a(E) = \| \varphi_+(\cdot;E) \|_{L^2}^2 = \| \varphi_-(\cdot;E) \|_{L^2}^2$.
In what follows, we always assume that
$$
N_s'(E_*) = 2 \langle \partial_E \phi_*, \phi_* \rangle_{L^2} > 0, \quad \mbox{\rm where} \quad
\partial_E \phi_*(x) = \partial_E \phi(x;E_*),
$$
that is, $N_s(E)$ is increasing near the bifurcation point $E = E_*$.
For example, this assumption is satisfied for large separation distance
$s$ between the two wells of $V$ given by (\ref{potential}), because $E_* \to E_0$
as $s \to \infty$ and $N_s(E_0) = 0$. The following stability theorem was also proven in \cite{KirrPelin}.

\begin{theorem}[{\em Kirr et all. \cite{KirrPelin}}]
Assume $N_s'(E_*) > 0$ in addition to conditions of Theorem \ref{theorem-Kirr}.
Then the symmetric state $\phi$ is orbitally stable for $E \leq E_*$ and unstable
for $E > E_*$. If in addition, ${\cal Q} < 0$, then $N_a(E)$ is an increasing function
of $E > E_*$ if ${\cal S} > 0$
and it is a decreasing function of $E > E_*$ if ${\cal S} < 0$, where
\begin{equation}
\label{represent-0}
{\cal S} = N_s'(E_*) + {\cal Q}^{-1} \left( \lambda'(E_*) \| \psi_* \|^2_{L^2} \right)^2.
\end{equation}
Consequently, the asymmetric states $\varphi_{\pm}$ near $E = E_*$
are orbitally stable for ${\cal S} > 0$ and unstable for ${\cal S} < 0$. \label{theorem-stability}
\end{theorem}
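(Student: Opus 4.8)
The plan is to combine the abstract orbital stability theory of Grillakis--Shatah--Strauss, in the Vakhitov--Kolokolov form adapted to the NLS, with a Lyapunov--Schmidt reduction near the bifurcation point. Linearizing \eqref{GP} about $e^{iEt}\phi$ and writing the perturbation as $u+iv$ produces the self-adjoint operators $L_+(E)$ and $L_-(E)$ of \eqref{Jacobian}--\eqref{Operator}. Since $\phi(\cdot;E)>0$ is the ground state of $L_-(E)$ and $L_-(E)\phi=0$, one has $L_-(E)\geq 0$ with $\ker L_-(E)=\mathrm{span}\{\phi\}$, so the number $n(L_-)$ of negative eigenvalues is $0$ and the gauge mode is the only neutral direction on the $L_-$ side. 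The stability count then depends only on $n(L_+)$, the number of negative eigenvalues of $L_+(E)$, together with the sign of the slope $N'(E)$, which enters through the identity $\langle L_+^{-1}(E)\phi,\phi\rangle=-\tfrac12 N'(E)$ obtained by differentiating \eqref{stationary} to get $L_+(E)\partial_E\phi=-\phi$.

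For the symmetric branch I would read $n(L_+)$ directly off Theorem \ref{theorem-Kirr}: the second eigenvalue of $L_+(E)$ is positive for $E<E_*$, zero at $E_*$, and negative for $E>E_*$, while the ground-state eigenvalue of $L_+(E)$ is always negative because $\langle L_+\phi,\phi\rangle=-2p\|\phi\|_{L^{2p+2}}^{2p+2}<0$. Hence $n(L_+)=1$ on $(E_0,E_*)$ and $n(L_+)=2$ on $(E_*,\infty)$. The Vakhitov--Kolokolov criterion then gives orbital stability for $E<E_*$ from $n(L_+)=1$ and $N_s'(E)>0$, and instability for $E>E_*$ from $n(L_+)=2$; the marginal case $E=E_*$ is settled using the transversality $\lambda'(E_*)<0$ of the crossing eigenvalue together with $N_s'(E_*)>0$.

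For the asymmetric states I would perform a Lyapunov--Schmidt reduction about the symmetric branch in the antisymmetric kernel direction. Writing $\varphi=\phi(\cdot;E)+a\psi_*+w$ with $w\perp\psi_*$ and solving the range equation gives $w=a^2 v_2+O(a^3)$, where $v_2=p(2p+1)L_+^{-1}(E_*)(\phi_*^{2p-1}\psi_*^2)$, while the orthogonality (bifurcation) equation collapses---after using $\langle L_+(E)\psi_*,\psi_*\rangle=\lambda'(E_*)(E-E_*)\|\psi_*\|_{L^2}^2+O((E-E_*)^2)$ and the vanishing of all parity-mismatched integrals---to $\lambda'(E_*)(E-E_*)\|\psi_*\|_{L^2}^2={\cal Q}\,a^2+O(a^3)$. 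This reproduces the pitchfork of Theorem \ref{theorem-Kirr} and yields $a^2={\cal Q}^{-1}\lambda'(E_*)\|\psi_*\|_{L^2}^2\,(E-E_*)+\ldots$. Expanding the mass gives $N_a(E)=N_s(E)+a^2\big(\|\psi_*\|_{L^2}^2+2\langle\phi_*,v_2\rangle\big)+O(a^3)$, and simplifying the bracket with $L_+^{-1}(E_*)\phi_*=-\partial_E\phi_*$ and the Feynman--Hellmann identity $\langle\psi_*,\partial_E L_+(E_*)\psi_*\rangle=\lambda'(E_*)\|\psi_*\|_{L^2}^2$ reduces it to exactly $\lambda'(E_*)\|\psi_*\|_{L^2}^2$. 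Differentiating at $E_*$ then gives $N_a'(E_*)=N_s'(E_*)+{\cal Q}^{-1}\big(\lambda'(E_*)\|\psi_*\|_{L^2}^2\big)^2={\cal S}$, which establishes the monotonicity dichotomy for $N_a$.

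It remains to count $n(L_+^a)$ for the operator $L_+^a(E)$ linearized at $\varphi_\pm$. The delicate point---and the step I expect to be the main obstacle---is that the eigenvalue of $L_+^a$ emerging from the simple zero eigenvalue of $L_+(E_*)$ and the crossing eigenvalue $\lambda(E)$ itself are both $O(a^2)$, so they must be resolved by consistent second-order perturbation theory that retains the reduced-resolvent contribution rather than a naive first-order shift. Carrying this out along the branch $E=E(a)$, the two $O(a^2)$ contributions recombine into $\mu_a=-2{\cal Q}\,a^2/\|\psi_*\|_{L^2}^2+O(a^3)$, which is positive precisely when ${\cal Q}<0$; hence $n(L_+^a)=1$ for the asymmetric states near $E_*$. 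Feeding $n(L_+^a)=1$ and $\mathrm{sgn}\,N_a'(E)=\mathrm{sgn}\,{\cal S}$ into the Vakhitov--Kolokolov criterion yields orbital stability of $\varphi_\pm$ for ${\cal S}>0$ and instability for ${\cal S}<0$, as claimed. Beyond this eigenvalue count, the residual care is in verifying the spectral hypotheses of the abstract theorem, namely the spectral gaps and nondegeneracy of the relevant kernels, uniformly for $E$ near $E_*$.
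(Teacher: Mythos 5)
Your proposal is sound and reaches the right conclusions, but it follows the route of the original reference \cite{KirrPelin} rather than the one taken in this paper. The paper deliberately presents an \emph{alternative} proof (Section 3.1): instead of a Lyapunov--Schmidt decomposition orthogonal with respect to $L_+(E)$ at fixed frequency, it works in the symplectic modulation frame $(\theta,E,A,B,U,W)$ subject to (\ref{orthogonality2}), sets $B=W=0$ for stationary solutions, and reads the pitchfork off the reduced equation (\ref{normal-form-stat}); a by-product of that frame is that the physical frequency of the asymmetric state is the renormalized parameter ${\cal E}$ of (\ref{E-cal-E}), a complication your fixed-$E$ decomposition avoids entirely. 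Both routes converge on the same three ingredients: the bifurcation equation $\lambda'(E_*)\|\psi_*\|^2_{L^2}(E-E_*)={\cal Q}a^2+O(a^3)$; the slope identity $N_a'(E_*)={\cal S}$ (your reduction of the bracket $\|\psi_*\|^2_{L^2}+2\langle\phi_*,v_2\rangle$ to $\lambda'(E_*)\|\psi_*\|^2_{L^2}$ via $L_+^{-1}(E_*)\phi_*=-\partial_E\phi_*$ and (\ref{derivative-result}) plays the role that the paper's passage from $Q$ to ${\cal Q}$ and from $E$ to ${\cal E}$ plays there --- the paper instead gets the cross term to vanish by imposing $U\perp\phi$ and recovers the missing contribution from the frequency renormalization); and the second eigenvalue of the linearization at $\varphi_\pm$, where your asserted $\mu_a=-2{\cal Q}a^2/\|\psi_*\|^2_{L^2}+O(a^3)$ agrees exactly with Lemma \ref{lemma-second-eigenvalue}, which obtains $\mu'(0)=0$, $h'(0)=2\theta_*$, and $\mu''(0)=-4{\cal Q}/\|\psi_*\|^2_{L^2}$ by the second-order perturbation theory with reduced resolvent that you correctly flag as the delicate step. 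What your approach buys is economy for the purely stationary statement; what the paper's buys is that the identical frame is reused for the time-dependent analysis of Section 4, so the stationary results fall out of the same machinery.

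Two caveats. First, the claim of orbital stability at the degenerate point $E=E_*$ itself, where $L_+(E_*)$ has a kernel direction $\psi_*$ transverse to the gauge symmetry, is not covered by the off-the-shelf Vakhitov--Kolokolov count; settling it requires a quartic (in the $\psi_*$ direction) coercivity argument involving ${\cal Q}$, not merely the transversality $\lambda'(E_*)<0$. The paper's alternative proof is equally silent on this endpoint, but if you intend your write-up to cover $E\leq E_*$ you should either supply that argument or restrict to $E<E_*$. Second, before invoking Grillakis--Shatah--Strauss on the asymmetric branch you should record that $n(L_-^a)=0$, which follows from the positivity of $\varphi_\pm$; it is implicit in your setup but needed for the count.
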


For any potential $V(x)$ represented by (\ref{potential}) with sufficiently large $s \to \infty$,
it was found in \cite{KirrPelin} that ${\cal Q} < 0$ for any $p \geq \frac{1}{2}$, hence,
the stable symmetric state $\phi$ for $E < E_*$ becomes unstable for $E > E_*$ and
the asymmetric states $\varphi_{\pm}$ exist for $E > E_*$. In the limit $s \to \infty$,
the boundary ${\cal S} = 0$ is equivalent to $p = p_*$, where
\begin{equation}\label{threshold}
p_* = \frac{3 + \sqrt{13}}{2} \approx 3.3028.
\end{equation}
If $p < p_*$, the asymmetric states $\varphi_{\pm}$ are stable for $E > E_*$.
If $p > p_*$, both symmetric and asymmetric states
are unstable for $E > E_*$. Therefore, we can classify the symmetry-breaking
bifurcation at $E = E_*$ as the supercritical (if ${\cal S} > 0$) or the subcritical
(if ${\cal S} < 0$) pitchfork bifurcations with respect to the squared $L^2$-norm,
which is a conserved quantity of the NLS equation (\ref{GP}) in time.
The functions $N_s(E)$ and $N_a(E)$ in the two different cases are shown
schematically on Figure \ref{figure-1}, where stable
branches are depicted by solid line and the unstable branches are depicted by dotted lines.

\begin{figure}
\begin{center}
\includegraphics[height=8cm]{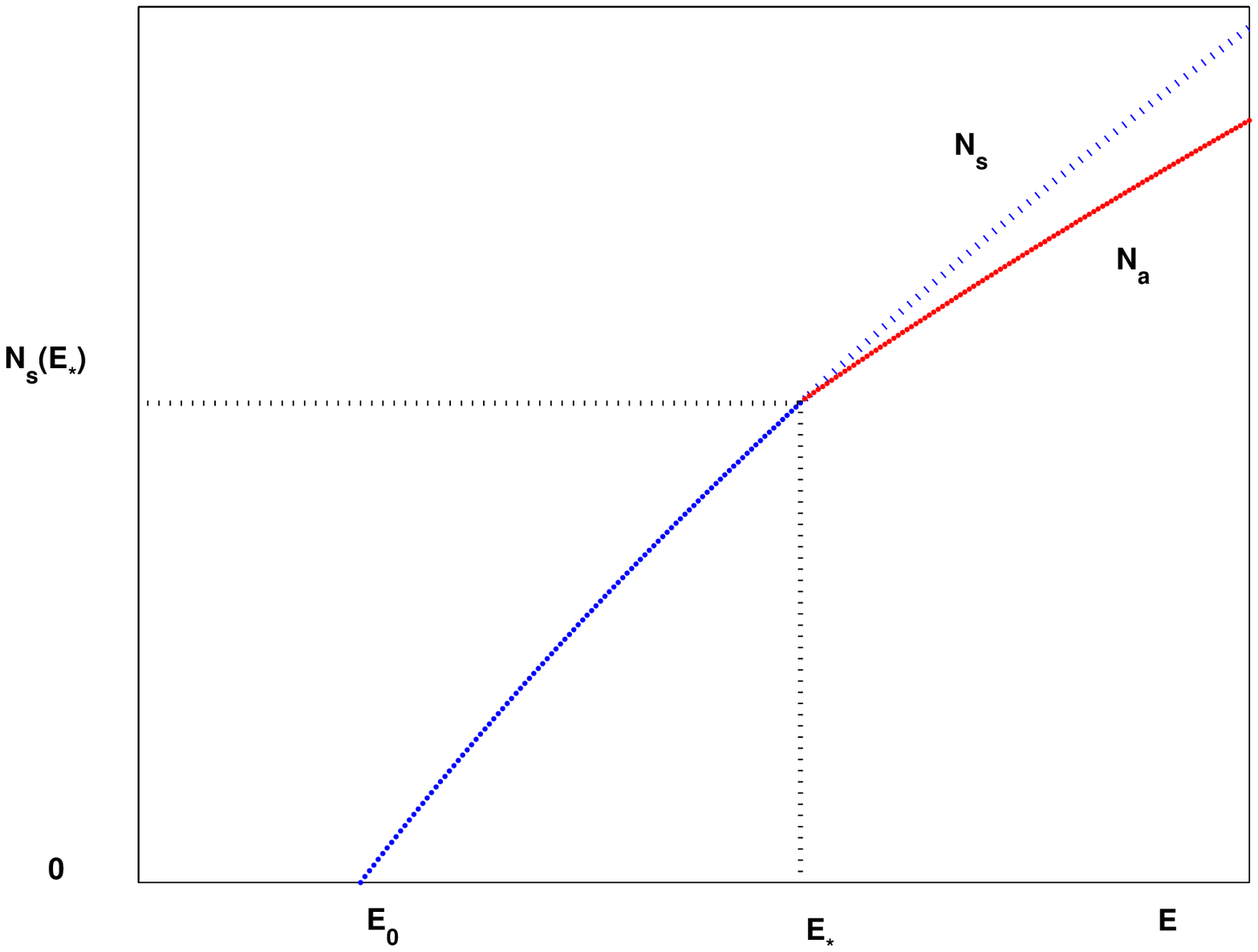}
\includegraphics[height=8cm]{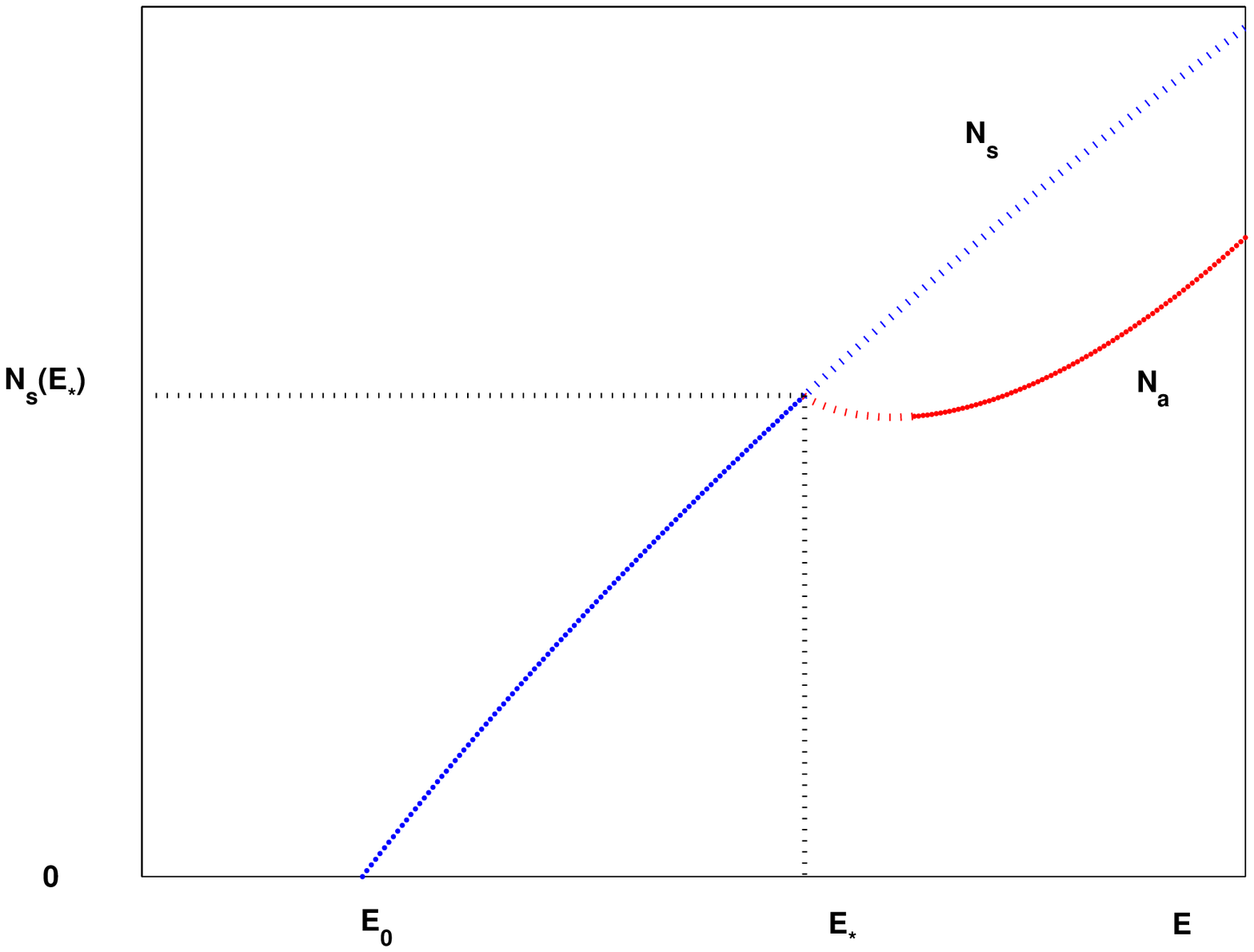}
\end{center}
\caption{Schematic representation of the supercritical (top)
and subcritical (bottom) pitchfork bifurcations in terms of the functions $N_s(E)$ and $N_a(E)$.
Unstable stationary states are shown by dotted curves.}
\label{figure-1}
\end{figure}

The classification into the supercritical
and subcritical pitchfork bifurcations is usually based on the analysis of the
normal form equations obtained from the center
manifold reductions and the near identity transformations. This analysis is the goal of this paper.
We shall look at the long but finite temporal dynamics of the normal form equations,
avoiding the complexity of the time evolution at infinite time intervals.
To enable near identity transformations up to any polynomial order, we shall
only consider the integer values of $p$.

The normal form equations have been considered previously in a
similar context. In the limit of large separation of the two potential wells, Kirr {\em
et al.} \cite{Kirr} derived a two-mode reduction of the
NLS equation. Persistence of this reduction
for periodic small-amplitude oscillations near stable
stationary states was addressed by Marzuola \& Weinstein
\cite{MW10}. Similar but more formal reduction to the two-mode
equations was developed by Sacchetti \cite{Sacchetti} using the
semi-classical analysis. In comparison with \cite{Kirr,MW10}, Sacchetti \cite{Sacchetti}
considered the defocusing version of the NLS equation,
where the anti-symmetric stationary state undertakes a similar
symmetry-breaking bifurcation. Based on the two-mode reduction,
Sacchetti \cite{Sacchetti2} also reported the same threshold $p_*$ as in (\ref{threshold})
that separates the supercritical and subcritical pitchfork
bifurcations.

Unlike these previous works, we shall deal with a {\em general}
symmetry-breaking bifurcation of the symmetric states. We develop
a simple but robust analysis, which justifies a general normal form equation
for the pitchfork bifurcation. {\em Arbitrary} bounded
solutions of the normal form equation are proved to shadow dynamics
of time-dependent solutions of the NLS equation (\ref{GP})
near the stationary bound states for long but finite time intervals.
Previously, only small-amplitude periodic solutions of the normal form equation were considered
within the two-mode approximations in the large separation limit \cite{MW10}.
Also, compared with the sophisticated analysis based on Strichartz estimates
and wave operators for the linear Schr\"{o}dinger equations in \cite{MW10}, our analysis
is only based on the spectral decompositions and Gronwall inequalities.
Thus, we show how basic analytical methods can be used to treat time-dependent
normal form equations for bifurcations in the nonlinear Schr\"{o}dinger equations.
Our main result is formulated in the following theorem, where we use notations of
Theorems \ref{theorem-Kirr} and \ref{theorem-stability}.

\begin{theorem}
\label{theorem-main} Assume that there exists $E_* \in (E_0,\infty)$ such that
$\lambda'(E_*) < 0$, $N_s'(E_*) > 0$, and ${\cal Q} < 0$. Fix ${\cal N}_0$ and define
$\Delta N = {\cal N}_0 - N_s(E_*)$. There exists $\varepsilon > 0$ such that for any
$|\Delta N| < \varepsilon$, there exists $T > 0$, $\Psi_0 \in H^1$ with ${\cal N}_0 = \| \Psi_0 \|_{L^2}^2$,
and functions $(\theta,E,A,B) \in C^1([0,T];\R^4)$ such that the NLS equation (\ref{GP}) admits
a solution $\Psi \in C([0,T];H^1(\R))$ with $\Psi(x,0) = \Psi_0(x)$ in the form
$$
\Psi(x,t) = e^{i \theta(t)} \left[ \phi(x;E(t)) + A \psi(x;E(t)) + i B \chi(x;E(t)) \right]
+ \tilde{\Psi}(x,t),
$$
where $\psi(x;E)$ and $\chi(x;E)$ satisfy
$$
L_+(E) \psi = - \Lambda^2(E) \chi, \quad L_-(E) \chi = \psi,
$$
subject to the normalization $\langle \chi,\psi\rangle_{L^2} = 1$, with the asymptotic expansion,
$$
\Lambda^2(E) = -\lambda'(E_*) \|\psi_*\|^2_{L^2} (E - E_*) + {\cal O}(E- E_*)^2 \quad \mbox{\rm as} \quad E \to E_*.
$$
Moreover, there are positive constants $C_0$, $C_1$, $C_2$, $C_3$, and $C_4$
such that $T \leq C_0 |\Delta N|^{-1/2}$,
$$
\| \tilde{\Psi}(\cdot,t) \|_{H^1} \leq C_1 |\Delta N|, \quad |\dot{\theta}(t) - E_*| \leq C_2 |\Delta N|, \quad
|E(t) - E_*| \leq C_3 |\Delta N|, \quad \mbox{\rm for all} \;\; t \in [0,T],
$$
and the trajectories of $(A,B)$ in the ellipsoidal domain,
$$
D = \left\{ (A,B) \in \R^2 : \quad A^2 + |\Delta N|^{-1} B^2 \leq C_4 |\Delta N| \right\},
$$
are homeomorphic to those of the second-order system,
\begin{equation}
\label{normal-form-equation-time}
\left\{ \begin{array}{l} \dot{A} = B, \\
\dot{B} = \left(-\lambda'(E_*) \|\psi_*\|^2_{L^2} (\Delta N) A + {\cal Q} {\cal S} A^3\right)/N_s'(E_*). \end{array} \right.
\end{equation}
\end{theorem}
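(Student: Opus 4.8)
The plan is to combine a modulation (Lyapunov--Schmidt) decomposition with a near-identity transformation and energy estimates, using only spectral projections and Gronwall inequalities. First I would posit the ansatz
$$\Psi(x,t)=e^{i\theta(t)}\big[\phi(x;E(t))+A(t)\psi(x;E(t))+iB(t)\chi(x;E(t))+\tilde u(x,t)+i\tilde w(x,t)\big],$$
with $\tilde\Psi=e^{i\theta}(\tilde u+i\tilde w)$, and impose four orthogonality conditions fixing $(\theta,E,A,B)$ uniquely, namely symplectic orthogonality of $\tilde\Psi$ to the four-dimensional tangent space spanned by $\phi$, $i\phi$, $\psi$, $i\chi$, chosen so as to render the linearized energy coercive on the remainder. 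Since $\phi,\partial_E\phi$ are symmetric and $\psi,\chi$ are anti-symmetric, these conditions are solvable by the implicit function theorem in a neighbourhood of the symmetric state, and the decomposition is $C^1$. The anticipated scaling is $A=\mathcal{O}(|\Delta N|^{1/2})$, $B=\mathcal{O}(|\Delta N|)$, $E-E_*=\mathcal{O}(|\Delta N|)$, $\tilde\Psi=\mathcal{O}(|\Delta N|)$, over times $t=\mathcal{O}(|\Delta N|^{-1/2})$, consistent with the domain $D$ and with $T\le C_0|\Delta N|^{-1/2}$.

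Substituting the ansatz into \eqref{GP}, splitting into real and imaginary parts and projecting onto the modes yields $\dot\theta=E_*+\mathcal{O}(|\Delta N|)$, while the relations $L_-\chi=\psi$, $L_+\psi=-\Lambda^2\chi$ and $\langle\chi,\psi\rangle_{L^2}=1$ give, at leading order, $\dot A=B$ and $\dot B=\Lambda^2(E)A+(\text{cubic})$. The parameter $E$ is not dynamical at this order but is slaved through conservation of mass: expanding ${\cal N}_0=\|\Psi\|_{L^2}^2=N_s(E)+A^2\|\psi_*\|_{L^2}^2+\mathcal{O}(|\Delta N|^2)$ gives $E-E_*=(\Delta N-A^2\|\psi_*\|_{L^2}^2)/N_s'(E_*)+\mathcal{O}(|\Delta N|^2)$. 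Inserting this into $\Lambda^2(E)=-\lambda'(E_*)\|\psi_*\|_{L^2}^2(E-E_*)+\mathcal{O}((E-E_*)^2)$ reproduces the linear term $-\lambda'(E_*)\|\psi_*\|_{L^2}^2(\Delta N)A/N_s'(E_*)$ and, together with the genuine cubic contribution of the nonlinearity (the second-order correction expressed through the resolvent $L_+^{-1}(E_*)$ in \eqref{bifurcation-coefficient}), the cubic coefficient ${\cal Q}{\cal S}/N_s'(E_*)$ after invoking the definition \eqref{represent-0} of ${\cal S}$. Checking that these contributions assemble into precisely ${\cal Q}{\cal S}$ is the most delicate algebraic step of the reduction.

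The analytic heart is the control of $\tilde\Psi$. I would first perform a near-identity transformation $\tilde\Psi=A^2L_+^{-1}(E_*)(\cdots)+(\text{lower order in }(A,B))+\hat\Psi$ removing the secular quadratic-in-$(A,B)$ forcing, which is $\mathcal{O}(|\Delta N|)$ (the resolvent term is exactly the source of the first integral in \eqref{bifurcation-coefficient}); the explicit correction is $\mathcal{O}(|\Delta N|)$ in $H^1$, accounting for the stated bound $\|\tilde\Psi\|_{H^1}\le C_1|\Delta N|$, while the forcing acting on the dynamical remainder $\hat\Psi$ is improved to $\mathcal{O}(|\Delta N|^{3/2})$ (coming from $\tfrac{d}{dt}(A^2)=2AB$). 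I would then estimate $\hat\Psi$ in $H^1$ via coercivity of the second variation of the conserved functional $H[\Psi]+E\,N[\Psi]$: by the spectral hypotheses $L_+(E_*)$ has one simple negative eigenvalue, a simple zero eigenvalue with anti-symmetric eigenfunction $\psi_*$, and is positive on the remainder, while $L_-(E_*)\ge0$ with kernel $\mathrm{span}\{\phi_*\}$; together with $N_s'(E_*)>0$ a constrained-coercivity argument of Grillakis--Shatah--Strauss type gives coercivity on the subspace fixed by the orthogonality conditions, once the zero mode is removed into $(A,B)$ and the kernel of $L_-$ into $\theta$. Crucially, the neutral direction persists under the drift of the $\mathcal{O}(|\Delta N|)$ eigenvalue of $L_+(E(t))$ precisely because it is carried by the modulation variables rather than by $\hat\Psi$. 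A Gronwall inequality $\tfrac{d}{dt}\|\hat\Psi\|_{H^1}^2\lesssim|\Delta N|^{3/2}\|\hat\Psi\|_{H^1}+|\Delta N|^{1/2}\|\hat\Psi\|_{H^1}^2$, integrated over $[0,T]$ with $T\le C_0|\Delta N|^{-1/2}$ and closed by a continuity/bootstrap argument from data with $\hat\Psi(\cdot,0)=0$, then yields $\|\hat\Psi\|_{H^1}\lesssim|\Delta N|$. Maintaining this estimate uniformly over the growing interval $[0,T]$ is the main obstacle.

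Finally, to identify the reduced $(A,B)$-dynamics with \eqref{normal-form-equation-time}, I would rescale $A=|\Delta N|^{1/2}a$, $B=|\Delta N|\,b$, $\tau=|\Delta N|^{1/2}t$, under which the reduced system becomes, on the fixed interval $\tau\in[0,C_0]$, an $\mathcal{O}(|\Delta N|^{1/2})$ perturbation of the planar system $a'=b$, $b'=(-\lambda'(E_*)\|\psi_*\|_{L^2}^2\,\mathrm{sgn}(\Delta N)\,a+{\cal Q}{\cal S}a^3)/N_s'(E_*)$, which depends on $\Delta N$ only through $\mathrm{sgn}(\Delta N)$. Continuous dependence of ODE solutions on the vector field over the finite $\tau$-interval then shows the true trajectory is $\mathcal{O}(|\Delta N|^{1/2})$-close to the normal-form trajectory, so that the reduced flow is conjugate to the normal-form flow on $D$ via a homeomorphism close to the identity. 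Local well-posedness of \eqref{GP} in $H^1$ together with the a priori bounds provides the solution $\Psi\in C([0,T];H^1(\R))$, and choosing initial data of the form $\Psi_0=e^{i\theta_0}[\phi(E_0)+A_0\psi+iB_0\chi]$ with $\|\Psi_0\|_{L^2}^2={\cal N}_0$ and $\tilde\Psi(\cdot,0)=0$ completes the construction.
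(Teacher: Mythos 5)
Your overall architecture (modulation decomposition with symplectic orthogonality, slaving of $E$ through mass conservation, near-identity transformation, Gronwall) matches the paper's, but there are two concrete gaps. First, you stop the near-identity transformation at the quadratic-in-$(A,B)$ order, so the forcing on the dynamical remainder $\hat\Psi$ is only $\mathcal{O}(|\Delta N|^{3/2})$ and Gronwall over $T\sim |\Delta N|^{-1/2}$ gives $\|\hat\Psi\|_{H^1}\lesssim |\Delta N|$. That suffices for the stated bound on $\tilde\Psi$, but it is too weak to close the reduction: the back-reaction of $\hat\Psi$ on the $\dot B$ equation through cross terms such as $\langle\psi,\phi^{2p-1}(A\psi)\hat\Psi\rangle$ is of size $|A|\,\|\hat\Psi\|\sim|\Delta N|^{3/2}$, i.e.\ exactly the same order as the cubic normal-form term ${\cal Q}{\cal S}A^3/N_s'(E_*)$ and as $\Lambda^2 A$. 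With an error of that size you cannot identify the cubic coefficient, and after your rescaling the perturbation of the planar vector field is $\mathcal{O}(1)$, not $\mathcal{O}(|\Delta N|^{1/2})$, so the continuous-dependence step collapses. The paper avoids this by carrying the explicit expansion of $(U,W)$ through cubic terms ($A^3U_1$, $AB^2U_2$, $A^2BW_1$, $B^3W_2$) and then quartic terms ($\sum_{i+j=4}{\bf z}_{ij}A^iB^j$), which leaves a residual forcing $\mathcal{O}(|A|^5+|B|^5)=\mathcal{O}(|\Delta N|^{5/2})$ and yields $\|\hat{\bf Z}\|_{H^1}\le C(\Delta N)^2$, making all error terms in the $(A,B)$ equations $o(|\Delta N|^{3/2})$.

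Second, finite-time closeness of solutions obtained from continuous dependence does not give that the \emph{trajectories} are homeomorphic to those of (\ref{normal-form-equation-time}): an $\mathcal{O}(|\Delta N|^{1/2})$ perturbation of a center can turn periodic orbits into slow spirals while keeping solutions close on $[0,C_0]$, and then no homeomorphism of phase portraits exists. The paper's mechanism for topological equivalence is different and essential: it expands \emph{both} conserved quantities ${\cal N}$ and ${\cal H}$ to quartic order in $(A,B)$, eliminates $E$ via ${\cal N}$, and shows that the $(A,B)$ trajectories lie on level curves of an explicit function $G=\frac{\Delta N}{N_s'(E_*)}\lambda'(E_*)\|\psi_*\|^2_{L^2}A^2+B^2-\frac{{\cal Q}{\cal S}}{2N_s'(E_*)}A^4+{\cal O}(\Delta N)^3$, which are precisely the level curves of the conserved energy of the planar Hamiltonian system (\ref{normal-form-equation-time}); this is also where the combination ${\cal Q}{\cal S}$ actually emerges (through the quartic Hamiltonian coefficient $D_1(E_*)$ together with the elimination of $E$), rather than from the $\dot B$ equation alone as you propose. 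You do use ${\cal N}$ but never ${\cal H}$; without the second conserved quantity the homeomorphism claim is not established. As a lesser point, the paper controls the remainder with the bounded semigroup $e^{tH(E_*)}$ on the range of $P_c(E_*)$ (purely imaginary spectrum at the marginal point) plus Duhamel, which sidesteps the delicate constrained-coercivity issues near the vanishing eigenvalue of $L_+$ that your Grillakis--Shatah--Strauss route would have to confront.
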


Trajectories of the second-order system (\ref{normal-form-equation-time}) with $\lambda'(E_*) < 0$
and $N'_s(E_*) > 0$ are shown on Figure \ref{figure-2} for four distinct cases of different values of
$\Delta N$ and ${\cal Q} {\cal S}$.

The article is organized as follows.
In Section 2, we use Theorem \ref{theorem-Kirr}(i) and
derive modulation equations for dynamics of time-dependent solutions of the
NLS equation near the stationary bound states at the onset of
the symmetry-breaking (pitchfork) bifurcation.
In Section 3, we consider the stationary modulation equations and recover the results of
Theorems \ref{theorem-Kirr}(ii) and \ref{theorem-stability}
from our system of equations. In Section 4, we justify the dynamics of the time-dependent modulation equations
and give a proof of Theorem \ref{theorem-main}.

\begin{figure}
\begin{center}
\includegraphics[height=6.5cm]{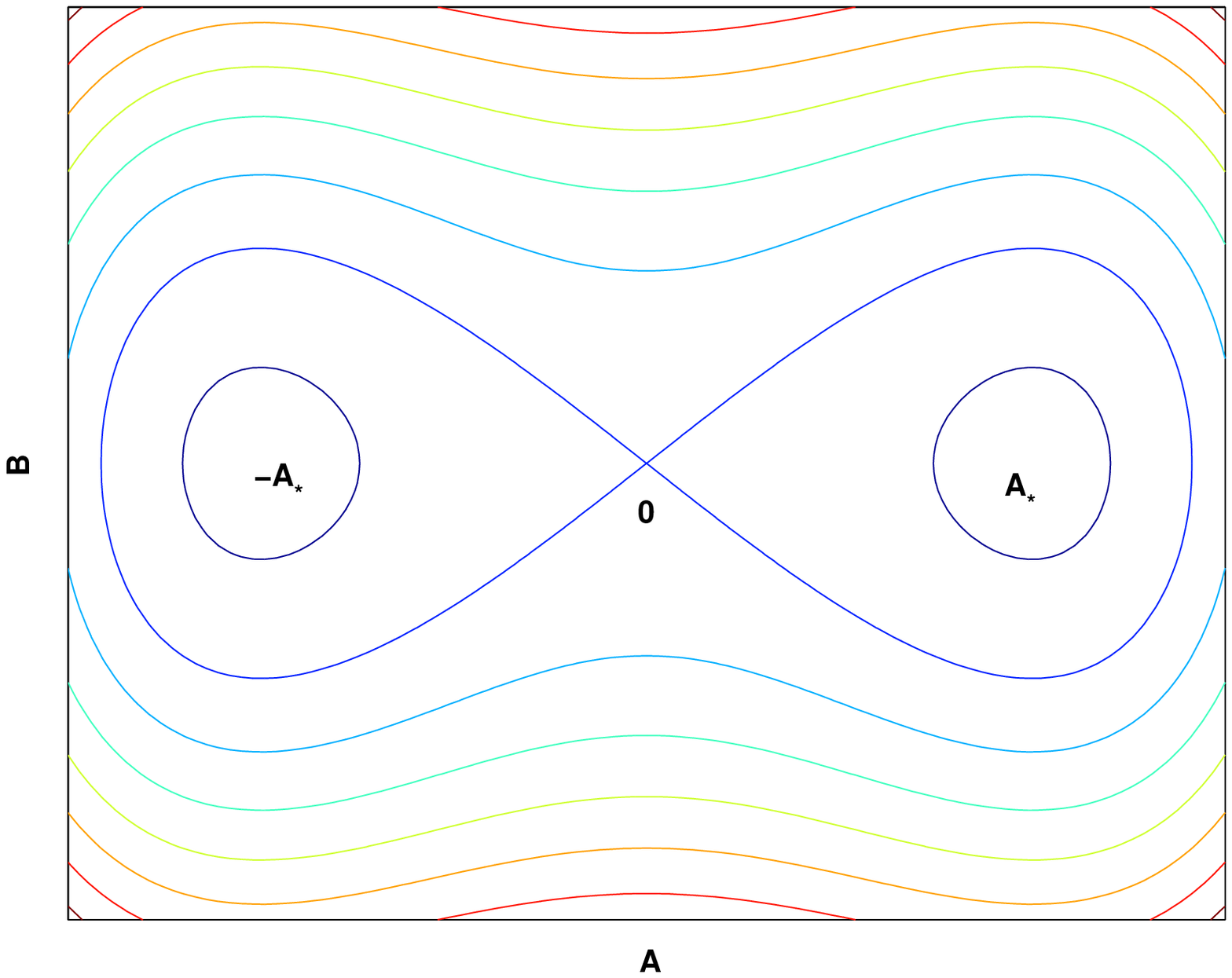} $\;\;$
\includegraphics[height=6.5cm]{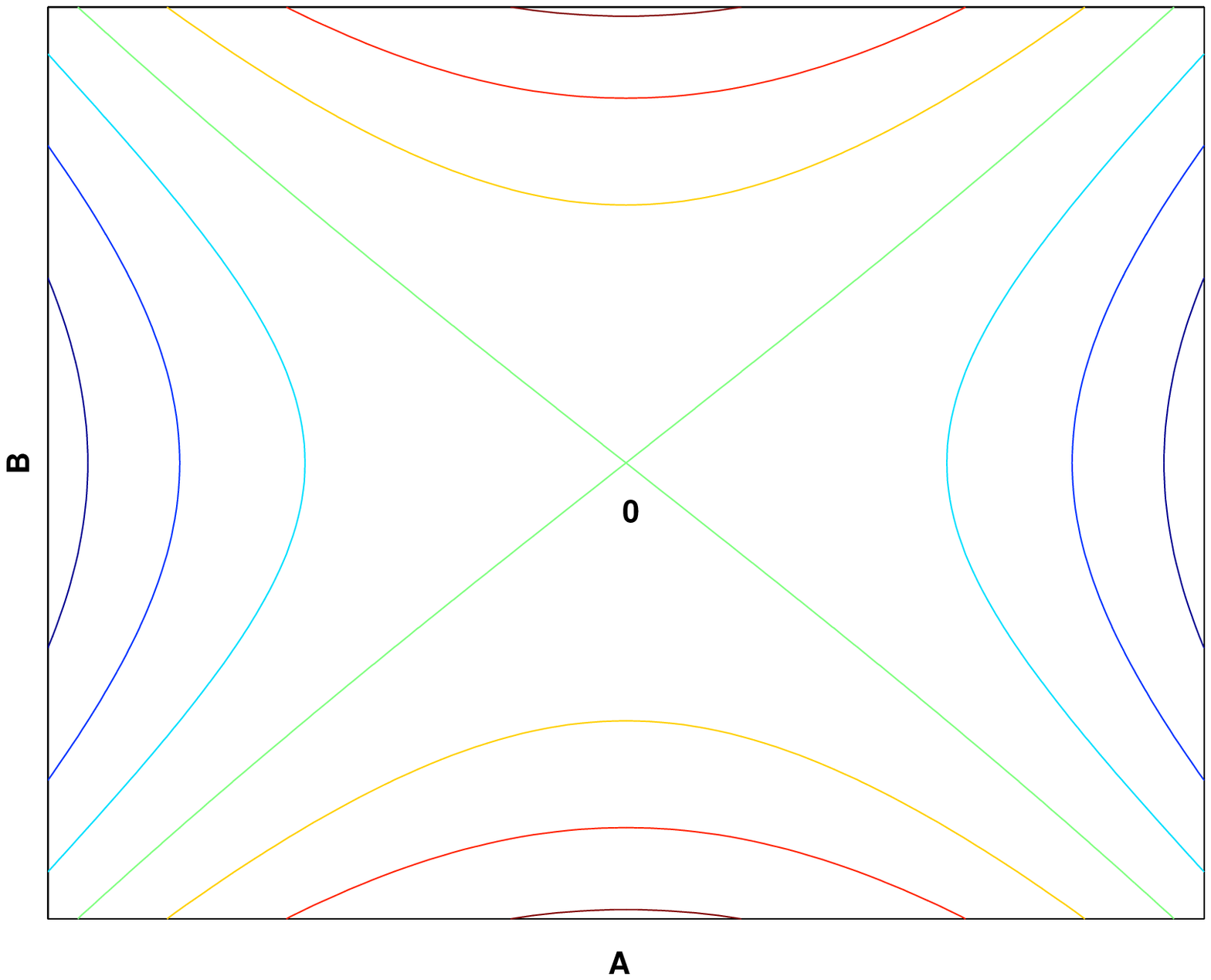}\\ \vspace{1cm}
\includegraphics[height=6.5cm]{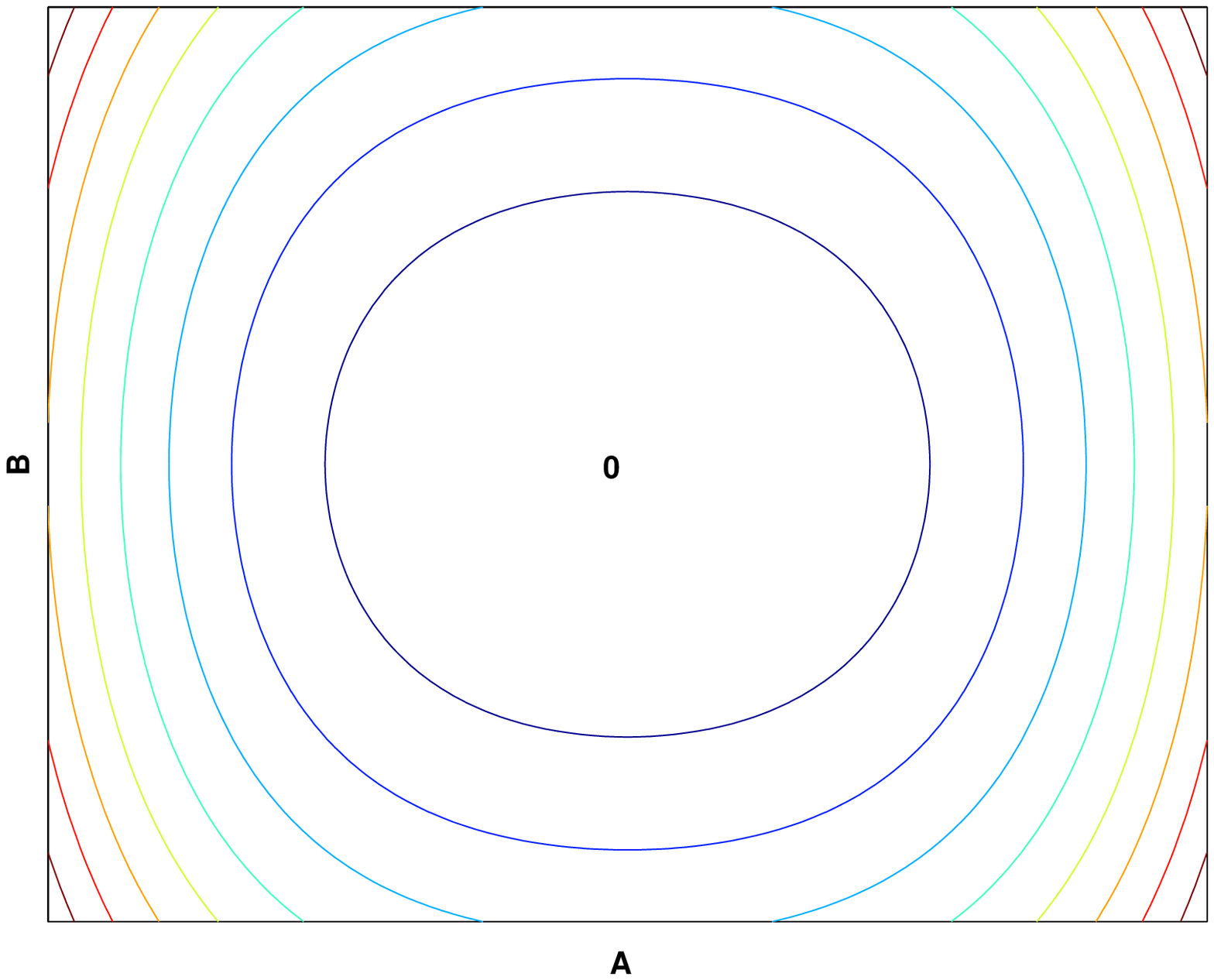} $\;\;$
\includegraphics[height=6.5cm]{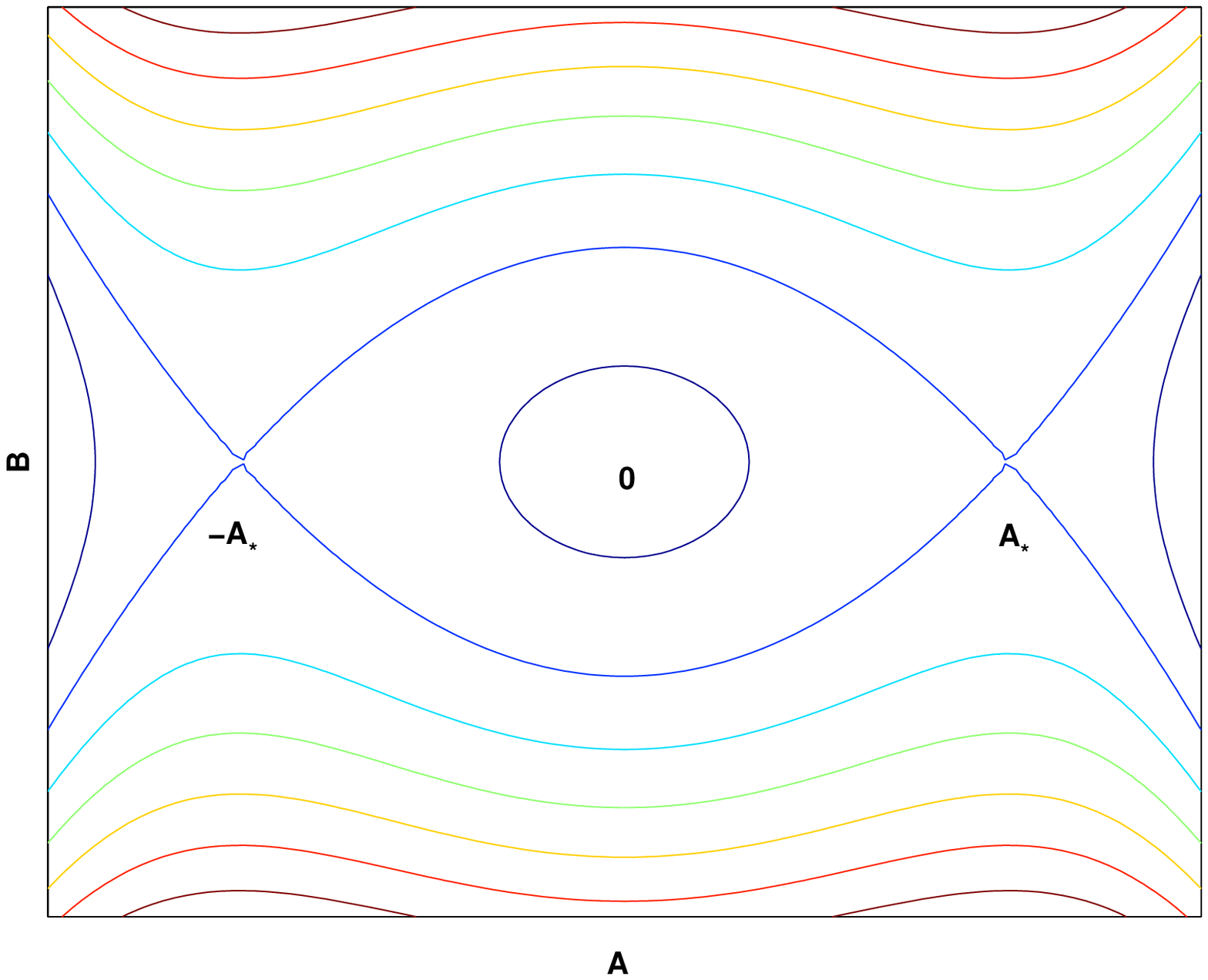}
\end{center}
\caption{Trajectories of the second-order system (\ref{normal-form-equation-time})
on the phase plane $(A,B)$ for $\Delta N > 0$ and ${\cal Q} {\cal S} < 0$ (top left);
$\Delta N > 0$ and ${\cal Q} {\cal S} > 0$ (top right);
$\Delta N < 0$ and ${\cal Q} {\cal S} < 0$ (bottom left);
$\Delta N < 0$ and ${\cal Q} {\cal S} < 0$ (bottom right).}
\label{figure-2}
\end{figure}

\section{Modulation equations for dynamics of bound states}

We shall derive a set of modulation equations which describe temporal dynamics of solutions of
the NLS equation (\ref{GP}) near the stationary bound states at the onset
of the symmetry-breaking bifurcation. We only use the statement of Theorem \ref{theorem-Kirr}(i)
on the existence of the symmetry-breaking bifurcation for the symmetric stationary state
of the NLS equation (\ref{GP}) under assumptions (H1)--(H5) and $p \in \mathbb{N}$.

\subsection{Primary decomposition near the symmetric stationary state}

Let $\phi(x;E)$ be a solution of the stationary NLS equation
(\ref{stationary}) with properties
$$
\phi(\cdot;E) \in H^2(\mathbb{R}) : \quad \phi(-x;E) = \phi(x;E) > 0 \quad \mbox{\rm for all} \;\; x \in \R.
$$
It is stated in Theorem \ref{theorem-Kirr}(i) that the $C^1$ curve
$E \mapsto \phi(\cdot;E) \in H^2(\R)$ exist for all $E \in (E_0,\infty)$.
If $p \in \mathbb{N}$, this curve is actually $C^{\infty}$ by the bootstrapping arguments.

We shall consider a solution of the NLS equation (\ref{GP}) in the form
\begin{equation}
\label{decomposition1}
\Psi(x,t) = e^{i \theta(t)} \left[ \phi(x;E(t)) + u(x,t) + i w(x,t) \right],
\end{equation}
where $(E,\theta)$ are coordinates of the stationary state. Direct substitution
of (\ref{decomposition1}) into (\ref{GP}) shows that the real functions
$(u,w)$ satisfy the system of time evolution equations
\begin{eqnarray}
\label{evolution-u}
u_t & = & L_- w + N_-(u,w) + (\dot{\theta} - E) w - \dot{E} \partial_E \phi, \\\label{evolution-w}
-w_t & = & L_+ u + N_+(u,w) + (\dot{\theta} - E) (\phi + u),
\end{eqnarray}
where $L_+$ and $L_-$ are defined by (\ref{Jacobian}) and (\ref{Operator}) and
the nonlinear terms are given explicitly by
\begin{eqnarray*}
N_+(u,w) & = & -(\phi + u)( \phi^2 + 2 \phi u + u^2 + w^2)^p + \phi^{2p} (\phi + (2p+1) u), \\
N_-(u,w) & = & -w [( \phi^2 + 2 \phi u + u^2 + w^2)^p - \phi^{2p}].
\end{eqnarray*}
For any $p \in \mathbb{N}$, we can use the Taylor series expansions
\begin{eqnarray}
\nonumber
N_+(u,w) & = & -p(2p+1) \phi^{2p-1} u^2 - p \phi^{2p-1} w^2 \\
\label{power-series-u} & \phantom{t} &
-\frac{1}{3} p (2p+1) (2p-1) \phi^{2p-2} u^3 - p(2p-1) \phi^{2p-2} u w^2 + {\cal O}(u^2+w^2)^2, \\
\label{power-series-w}
N_-(u,w) & = & -2p \phi^{2p-1} u w - p(2p-1) \phi^{2p-2} u^2 w - p \phi^{2p-2} w^3 + {\cal O}(u^2+w^2)^2.
\end{eqnarray}

To determine $(E,\theta)$ uniquely in the neighborhood of the stationary state
(for small $u$ and $w$), we add the standard conditions of symplectic orthogonality
\begin{eqnarray}
\label{orthogonality1}
\langle \phi, u \rangle_{L^2} = 0, \quad \langle \partial_E \phi, w \rangle_{L^2} = 0,
\end{eqnarray}
where we recall that
\begin{equation}
\label{null-space}
L_- \phi = 0, \quad L_+ \partial_E \phi = - \phi.
\end{equation}
Under symplectic orthogonality conditions (\ref{orthogonality1}),
the rate of changes of $(E,\theta)$ are uniquely determined from
the projection equations
\begin{eqnarray}
\label{projection1}
\left[ \begin{array}{cc} \langle \partial_E \phi, \phi - u \rangle_{L^2} &  -\langle \phi, w \rangle_{L^2} \\
-\langle \partial_E^2 \phi, w \rangle_{L^2} & \langle \partial_E \phi, \phi + u \rangle_{L^2} \end{array} \right]
\left[ \begin{array}{c} \dot{E} \\ \dot{\theta} - E \end{array} \right] =
\left[ \begin{array}{cc} \langle \phi, N_-(u,w) \rangle_{L^2} \\
- \langle \partial_E \phi, N_+(u,w) \rangle_{L^2} \end{array} \right].
\end{eqnarray}

We shall now study eigenvectors at the onset of the symmetry-breaking bifurcation
in order to build a frame for the secondary decomposition of the perturbations
$(u,w)$ near these eigenvectors.

\subsection{Linear eigenvectors}

It is stated in Theorem \ref{theorem-Kirr}(i) that there exists
a bifurcation value $E_* \in (E_0,\infty)$ such that the second eigenvalue
$\lambda(E)$ of $L_+(E)$ satisfies $\lambda(E_*) = 0$.
We shall denote $\phi_*(x) =\phi(x;E_*)$ at the bifurcation value $E = E_*$.
In many cases, we will suppress the $x$-argument in the function $\phi(x;E)$
to underline the $E$-dependence of this function.
In this setting, we have the following result.

\begin{lemma} \label{psi*chi*}
There exist odd functions $\psi_*, \chi_* \in H^2(\mathbb{R})$ such that
\begin{equation}
L_+(E_*)\psi_* =0, \quad L_-(E_*) \chi_* =\psi_*,
\quad \langle \chi_*, \psi_* \rangle_{L^2} =1.  \label{EF-EV-limiting}
\end{equation}
Moreover,
\begin{equation}
\label{derivative-result}
\lambda'(E_*) = 1 - 2p (2p+1) \frac{\langle \partial_E \phi_*, \phi_*^{2p-1} \psi_*^2 \rangle_{L^2}}{\| \psi_* \|^2_{L^2}}.
\end{equation}
\end{lemma}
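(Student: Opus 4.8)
The plan is to treat the two assertions separately: first construct the odd pair $(\psi_*,\chi_*)$ with the stated normalization, and then derive the formula for $\lambda'(E_*)$ by first-order perturbation theory.

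For the existence part, I would start from Theorem \ref{theorem-Kirr}(ii), which already provides an anti-symmetric (odd) eigenvector $\psi_* \in H^2(\R)$ of $L_+(E_*)$ at the zero eigenvalue, so that $L_+(E_*)\psi_* = 0$. Oddness is forced because $V$ and $\phi_*$ are even, hence $L_+(E_*)$ commutes with the parity operator $x \mapsto -x$, and the second eigenfunction of a one-dimensional Schr\"odinger operator has exactly one node, located at $x=0$. To produce $\chi_*$, I would solve $L_-(E_*)\chi_* = \psi_*$ via the Fredholm alternative. By (\ref{null-space}) we have $L_-(E_*)\phi_* = 0$, and since $\phi_* > 0$ is the even ground state, $\ker L_-(E_*) = \mathrm{span}\{\phi_*\}$. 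Because $\phi_*$ is even and $\psi_*$ is odd, $\langle \phi_*, \psi_* \rangle_{L^2} = 0$, so $\psi_*$ lies in the range of $L_-(E_*)$ and a solution exists in $H^2(\R)$. As $L_-(E_*)$ commutes with parity and $\psi_*$ is odd, I would take the odd part of any such solution, which is the unique odd $\chi_*$.

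For the normalization, I would use that the spectrum of $L_-(E_*)$ is non-negative (noted in the text, since $L_-(E)\phi=0$ with $\phi>0$). Then
$$
\langle \chi_*, \psi_* \rangle_{L^2} = \langle \chi_*, L_-(E_*)\chi_* \rangle_{L^2} \geq 0,
$$
with equality only if $\chi_* \in \ker L_-(E_*) = \mathrm{span}\{\phi_*\}$; but $\chi_*$ is odd while $\phi_*$ is even, so equality would force $\chi_* = 0$ and hence $\psi_* = 0$, a contradiction. Thus $\langle \chi_*, \psi_* \rangle_{L^2} > 0$, and rescaling $\psi_* \mapsto c\psi_*$, $\chi_* \mapsto c\chi_*$ with $c = \langle \chi_*, \psi_* \rangle_{L^2}^{-1/2}$ enforces $\langle \chi_*, \psi_* \rangle_{L^2} = 1$ while preserving both relations in (\ref{EF-EV-limiting}).

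For the derivative formula, I would differentiate the eigenvalue relation $L_+(E)\psi(E) = \lambda(E)\psi(E)$ in $E$, evaluate at $E=E_*$ using $\lambda(E_*)=0$ and $\psi(E_*)=\psi_*$, pair with $\psi_*$, and invoke self-adjointness of $L_+(E_*)$ together with $L_+(E_*)\psi_* = 0$ to annihilate the term $\langle \psi_*, L_+(E_*)\partial_E\psi(E_*)\rangle_{L^2}$. This yields $\lambda'(E_*)\|\psi_*\|_{L^2}^2 = \langle \psi_*, L_+'(E_*)\psi_* \rangle_{L^2}$. Differentiating (\ref{Jacobian}) gives $L_+'(E) = 1 - 2p(2p+1)\phi^{2p-1}\partial_E\phi$, so
$$
\lambda'(E_*)\|\psi_*\|_{L^2}^2 = \|\psi_*\|_{L^2}^2 - 2p(2p+1)\langle \partial_E\phi_*, \phi_*^{2p-1}\psi_*^2 \rangle_{L^2},
$$
and dividing by $\|\psi_*\|_{L^2}^2$ recovers (\ref{derivative-result}). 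The main obstacle is the construction step: invoking the Fredholm alternative in the correct functional setting and pinning down the odd solution together with the strict positivity of $\langle \chi_*, \psi_* \rangle_{L^2}$ needed for the normalization. The derivative formula is a routine first-order perturbation computation, the only care being the $C^1$ (in fact $C^\infty$) differentiability of the simple isolated eigenvalue $\lambda(E)$ and its eigenvector near $E_*$, which follows from Kato perturbation theory and the smooth dependence of $\phi(\cdot;E)$ noted after (\ref{decomposition1}).
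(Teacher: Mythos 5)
Your proposal is correct and follows essentially the same route as the paper: oddness of $\psi_*$ via parity/Sturm oscillation, solvability of $L_-(E_*)\chi_*=\psi_*$ from the simplicity of the even ground-state kernel of $L_-(E_*)$, positivity of $\langle\chi_*,\psi_*\rangle_{L^2}=\langle L_-^{-1}(E_*)\psi_*,\psi_*\rangle_{L^2}$ to enable the normalization, and first-order perturbation of $L_+(E)g(E)=\lambda(E)g(E)$ paired against $\psi_*$ for \eqref{derivative-result}. The only differences are cosmetic (explicit invocation of the Fredholm alternative and of taking the odd part of the solution, which the paper leaves implicit).
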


\begin{proof}
Let $g(x;E)$ be an eigenfunction of $L_+(E)$ with eigenvalue $\lambda(E)$.
By Sturm's Theorem, $g(x;E)$ is odd in $x$ because $\lambda(E)$ is the second eigenvalue
of $L_+(E)$. Let $\psi_*(x) = g(x;E_*)$. So, $\psi_*$ is an odd function such that
$L_+(E_*)\psi_* = \lambda(E_*)\psi_* =0$.

Since $\phi_*(x) =\phi(x;E_*)$ is even, positive and $L_-(E_*)\phi_* =0$,
we see that zero is the lowest eigenvalue of $L_-(E_*)$.
Therefore, it is a simple eigenvalue. Because $\psi_*$ is odd, there is an odd function $\chi_* \in H^2(\mathbb{R})$
such that $L_-(E_*) \chi_* =\psi_*$. On the other hand, we have
$$
\langle \chi_*, \psi_* \rangle_{L^2} = \langle L_-^{-1}(E_*) \psi_*, \psi_* \rangle_{L^2} > 0.
$$
By rescaling $\psi_*$ and $\chi_*$, we get $\langle \chi_*, \psi_* \rangle_{L^2} =1$. This completes the proof of the first part of the lemma.

To prove \eqref{derivative-result}, we note that due to the smooth continuation of $\phi(x;E)$ across $E_*$,
we can compute explicitly,
\begin{equation} \label{exL}
L'_+(E_*) = 1 - 2p (2p+1) \phi_*^{2p-1} \partial_E \phi_*, \quad
L'_-(E_*) = 1 - 2p \phi_*^{2p-1} \partial_E \phi_*.
\end{equation}
By differentiating the relation $L_+(E)g(E) = \lambda(E) g(E)$ at $E = E_*$, we get
\[
L_+'(E_*) \psi_* + L_+(E_*) \partial_E\psi_* = \lambda'(E_*) \psi_*.
\]
Taking the inner product of this equation with $\psi_*$, we get
\[
\lambda'(E_*) = \frac{\langle L_+'(E_*) \psi_*, \psi_* \rangle_{L^2}}{\| \psi_* \|^2_{L^2}}
= 1 - 2p (2p+1) \frac{\langle \partial_E \phi_*, \phi_*^{2p-1} \psi_*^2 \rangle_{L^2}}{\| \psi_* \|^2_{L^2}}.
\]
This completes the proof of the lemma.
\end{proof}

We would like now to extend the functions $(\psi_*,\chi_*)$ as the
eigenvectors of the linearized system associated with the
time-dependent system (\ref{evolution-u}) and (\ref{evolution-w}) near $E = E_*$.
Note that the eigenvectors of the linearized (non-self-adjoint) system are
different from the eigenvector $g$ of the (self-adjoint) operator $L_+(E)$ introduced in the
proof of Lemma \ref{psi*chi*}. The following lemma gives the extension of $(\psi_*,\chi_*)$ near $E = E_*$.

\begin{lemma} \label{linearized-functions}
There exists sufficiently small $\epsilon > 0$ such that for all $|E-E_*|< \epsilon$,
there exists a small eigenvalue $\Lambda(E)$ of the linearized system
\begin{eqnarray}
\label{linearized-system}
L_+(E) \psi(E) = -\Lambda^2(E) \chi(E), \quad L_-(E) \chi(E) = \psi(E),
\end{eqnarray}
where the eigenvector and eigenvalue satisfy the asymptotic expansion,
\begin{equation} \label{psichi-ex}
\psi(E) = \psi_* + {\cal O}_{H^2}(E-E_*), \quad \chi(E) = \chi_* + {\cal O}_{H^2}(E-E_*), \quad
\langle \chi(E), \psi(E) \rangle_{L^2} = 1,
\end{equation}
and
\begin{equation}
\label{perturbation-theory}
\Lambda^2(E) = -\lambda'(E_*) \| \psi_* \|^2_{L^2}(E-E_*) + {\cal O}(E - E_*)^2.
\end{equation}
Consequently, if $\lambda'(E_*) < 0$, the eigenvalue $\Lambda(E)$ is real for $E > E_*$
and purely imaginary for $E < E_*$.
\end{lemma}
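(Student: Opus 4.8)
The plan is to reduce the coupled system \eqref{linearized-system} to a single generalized eigenvalue equation and then run a Lyapunov--Schmidt reduction around the degenerate value $E=E_*$, where $L_+(E_*)$ has a nontrivial kernel. First I would note that $L_-(E)$ is boundedly invertible on the subspace of odd functions for all $E$ near $E_*$: since $L_-(E)\phi(\cdot;E)=0$ with $\phi(\cdot;E)>0$ even, the value $0$ is the simple lowest eigenvalue of $L_-(E)$ with an even eigenfunction, so $L_-(E)$ is strictly positive on the orthogonal complement, which contains all odd functions. Solving the second equation as $\chi=L_-^{-1}(E)\psi$ and writing $\mu=\Lambda^2$, the system becomes: find an odd $\psi$ and a scalar $\mu$ with
\[
G(\psi,\mu;E):=L_+(E)\psi+\mu\,L_-^{-1}(E)\psi=0 .
\]
Because \eqref{linearized-system} is homogeneous of degree one in the pair $(\psi,\chi)$ and leaves $\mu$ unchanged, the eigenvalue $\mu$ is independent of the scaling of $\psi$; I would therefore ignore the normalization $\langle\chi,\psi\rangle_{L^2}=1$ during the reduction and restore it at the very end by a scalar rescaling $(\psi,\chi)\mapsto c(\psi,\chi)$, which is admissible since $\langle\chi,\psi\rangle_{L^2}=\langle L_-^{-1}(E)\psi,\psi\rangle_{L^2}>0$. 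By Lemma \ref{psi*chi*} the base solution at $E=E_*$ is $(\psi_*,0)$.

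Next comes the Lyapunov--Schmidt step. On the odd subspace $L_+(E_*)$ has the one-dimensional kernel $\mathrm{span}(\psi_*)$ (the crossing second eigenvalue is simple in one dimension), and its range is $X=\{\psi_*\}^{\perp}$. Writing $\psi=\psi_*+v$ with $v\in X$ and letting $P,Q=I-P$ denote the orthogonal projections onto $\mathrm{span}(\psi_*)$ and $X$, the equation $G=0$ splits into the range equation $Q\,G(\psi_*+v,\mu;E)=0$ and the scalar bifurcation equation $\langle G(\psi_*+v,\mu;E),\psi_*\rangle_{L^2}=0$. Since all operators preserve parity, everything stays in the odd subspace, so the resulting functions are automatically odd. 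The linearization of the range equation in $v$ at the base point is $Q\,L_+(E_*)|_X$, which is invertible on $X$; using the $C^\infty$ dependence of $L_\pm(E)$ on $E$ (via the smooth branch $\phi(\cdot;E)$), the implicit function theorem yields a smooth solution $v=v(\mu,E)$ with $v(0,E_*)=0$ and the quantitative bound $v=\mathcal O(|\mu|+|E-E_*|)$.

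Finally I would substitute $v(\mu,E)$ into the scalar bifurcation equation and extract its leading coefficients using self-adjointness, $L_+(E_*)\psi_*=0$, and \eqref{exL}. The coefficient of $\mu$ is $\langle L_-^{-1}(E_*)\psi_*,\psi_*\rangle_{L^2}=\langle\chi_*,\psi_*\rangle_{L^2}=1$, while the coefficient of $(E-E_*)$ is $\langle L_+'(E_*)\psi_*,\psi_*\rangle_{L^2}=\lambda'(E_*)\|\psi_*\|_{L^2}^2$ by Lemma \ref{psi*chi*}; all remaining contributions are quadratic because $v=\mathcal O(|\mu|+|E-E_*|)$. Thus the bifurcation equation reads
\[
\mu+\lambda'(E_*)\|\psi_*\|_{L^2}^2\,(E-E_*)+\mathcal O\!\left(\mu^2+|\mu||E-E_*|+(E-E_*)^2\right)=0 .
\]
Since the $\mu$-derivative equals $1\neq0$ at the base point, a further application of the implicit function theorem solves uniquely for $\mu=\Lambda^2(E)$ and gives the expansion \eqref{perturbation-theory}. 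Back-substitution yields $\psi(E)=\psi_*+\mathcal O_{H^2}(E-E_*)$ and $\chi(E)=L_-^{-1}(E)\psi(E)=\chi_*+\mathcal O_{H^2}(E-E_*)$, and the rescaling by $c=\langle\chi,\psi\rangle_{L^2}^{-1/2}=1+\mathcal O(E-E_*)$ enforces $\langle\chi(E),\psi(E)\rangle_{L^2}=1$ without affecting the leading orders in \eqref{psichi-ex}. The sign statement follows immediately from \eqref{perturbation-theory}, since $\lambda'(E_*)<0$ makes $\Lambda^2(E)$ positive for $E>E_*$ and negative for $E<E_*$. The only genuine obstacle is the singularity of $L_+(E_*)$, which blocks a direct application of the implicit function theorem to the full system; the Lyapunov--Schmidt splitting circumvents it, and the decisive point is that the $\mu$-coefficient of the bifurcation equation is exactly the normalization $\langle\chi_*,\psi_*\rangle_{L^2}=1$, guaranteeing solvability for $\mu$.
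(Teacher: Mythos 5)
Your proposal is correct and follows essentially the same route as the paper: a perturbation of the degenerate zero eigenvalue at $E=E_*$, with the leading coefficient of $\Lambda^2(E)$ extracted by projecting onto $\psi_*$ and using $\langle L_+'(E_*)\psi_*,\psi_*\rangle_{L^2}=\lambda'(E_*)\|\psi_*\|^2_{L^2}$ together with $\langle\chi_*,\psi_*\rangle_{L^2}=1$, followed by the same rescaling to enforce the normalization. The only difference is that the paper invokes ``perturbation theory for isolated eigenvalues'' as a black box, whereas you make that step explicit through the reduction $\chi=L_-^{-1}(E)\psi$ and a Lyapunov--Schmidt splitting, which is a welcome elaboration rather than a different method.
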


\begin{proof}
Recall that $L_-(E_*) \phi_* = 0$, $L_+(E_*) \psi_* = 0$ and $L_-(E_*) \chi_* = \psi_*$, where
both $\psi_*$ and $\chi_*$ are odd in $x$ and $\phi_*$ is even in $x$. By the perturbation theory
for isolated eigenvalues, there exists a solution for $\Lambda(E)$, $\psi(E)$, and $\chi(E)$
in the system (\ref{linearized-system}) such that the eigenvectors satisfy the expansion \eqref{psichi-ex}.

By taking the inner product of $L_+(E)\psi(E) = -\Lambda^2(E)\chi(E)$ with $\psi(E)$ and using
\eqref{derivative-result}, \eqref{exL}, and \eqref{psichi-ex}, we get (\ref{perturbation-theory}).
So, it follows that if $\lambda'(E_*) < 0$ the eigenvalue $\Lambda(E)$ is real for $E > E_*$ and purely
imaginary for $E < E_*$.

On the other hand, for small $|E-E_*|$, we have
$$
\langle \chi(E), \psi(E) \rangle_{L^2} = \langle \chi_*, \psi_* \rangle_{L^2} + {\cal O}(E-E_*)
= 1 + {\cal O}(E-E_*) > 0.
$$
We can hence normalize $\psi(E)$ and $\chi(E)$ such that
$\langle \chi(E), \psi(E) \rangle_{L^2} = 1$. This completes the proof of the lemma.
\end{proof}

\begin{remark}
Under the normalization $\langle \chi(E), \psi(E) \rangle_{L^2} = 1$,
the $L^2$ norms of $\psi$ and $\chi$ are no longer normalized to unity,
in comparison with the normalization used in \cite{KirrPelin}.
\end{remark}

\subsection{Secondary decomposition near the linear eigenvectors}

Let us now decompose the perturbation terms into
\begin{equation}
\label{decomposition2}
u(x,t) = A(t) \psi(x;E) + U(x,t), \quad w(x,t) = B(t) \chi(x;E) + W(x,t),
\end{equation}
where $(A,B)$ are coordinates of the decomposition and
$(U,W)$ are the remainder terms. The linear eigenvectors
$(\psi,\chi)$ are solutions of the linearized system (\ref{linearized-system})
for $E$ near $E_*$. The remainder terms $(U,W)$
are required to satisfy the conditions of symplectic orthogonality
\begin{eqnarray}
\label{orthogonality2}
\langle \phi, U \rangle_{L^2} = 0, \quad \langle \partial_E \phi, W \rangle_{L^2} = 0, \quad
\langle \chi, U \rangle_{L^2} = 0, \quad \langle \psi, W \rangle_{L^2} = 0.
\end{eqnarray}

Substitution of (\ref{decomposition2}) into (\ref{evolution-u})--(\ref{evolution-w}) show
that $(U,W)$ satisfy the time-evolution equations,
\begin{eqnarray}
\nonumber
U_t & = & L_- W + N_-(A \psi + U,B \chi + W) \\
\label{time-W-U} & \phantom{t} & \phantom{text} + (\dot{\theta} - E) (B \chi + W)
- \dot{E} (\partial_E \phi + A \partial_E \psi) - (\dot{A} - B) \psi, \\
\nonumber
-W_t & = & L_+ U + N_+(A \psi + U,B \chi + W) \\
\label{time-U-W} & \phantom{t} & \phantom{text}  + (\dot{\theta} - E) (\phi + A \psi + U)
+ B \dot{E} \partial_E \chi + (\dot{B} - \Lambda^2 A) \chi.
\end{eqnarray}

Under the orthogonality conditions (\ref{orthogonality2}),
the rate of changes of $(E,\theta,A,B)$ are uniquely determined from
the projection equations
\begin{eqnarray} \label{system-theta-E}
\label{projection2}
{\cal M} \left[ \begin{array}{c} \dot{E} \\ \dot{\theta} - E \end{array} \right] =
\left[ \begin{array}{cc} \langle \phi, N_-(A \psi + U,B \chi + W) \rangle_{L^2} \\
- \langle \partial_E \phi, N_+(A \psi + U,B \chi + W) \rangle_{L^2} \end{array} \right],
\end{eqnarray}
where
\begin{equation} \label{A.def}
 \mathcal{M} = \left[ \begin{array}{cc} \langle \partial_E \phi, \phi - U \rangle_{L^2} &  -\langle \phi, W \rangle_{L^2} \\ -\langle \partial_E^2 \phi, W \rangle_{L^2} & \langle \partial_E \phi, \phi + U \rangle_{L^2} \end{array} \right],
\end{equation}
and
\begin{eqnarray}
\nonumber
\dot{A} - B & = & \langle \chi, N_-(A \psi + U,B \chi + W) \rangle_{L^2} \\  & \phantom{t} &
+ \dot{E} ( \langle \partial_E \chi, U \rangle_{L^2} - A \langle \partial_E \psi, \chi \rangle_{L^2}) +
(\dot{\theta} - E)( B \| \chi \|_{L^2}^2 + \langle \chi, W \rangle_{L^2}), \label{system-b-a} \\
\nonumber
\dot{B} - \Lambda^2 A & = & -\langle \psi, N_+(A \psi + U,B \chi + W) \rangle_{L^2} \\ & \phantom{t} &
+ \dot{E} ( \langle \partial_E \psi, W \rangle_{L^2} - B \langle \partial_E \chi, \psi \rangle_{L^2})
- (\dot{\theta} - E)( A \| \psi \|_{L^2}^2 + \langle \psi, U \rangle_{L^2}). \label{system-a-b}
\end{eqnarray}

In Section 4, we shall control the dynamics of small $(U,W)$,
$(E - E_*,\dot{\theta} - E_*)$, and $(A,B)$ in the system (\ref{time-W-U})--(\ref{system-a-b})
on long but finite time intervals.

\subsection{Conserved quantities}

The NLS equation (\ref{GP}) admits two conserved quantities given by
\begin{eqnarray}
\label{conservation-N}
{\cal N}[\Psi] = \int_{\R} |\Psi(x,t)|^2 dx
\end{eqnarray}
and
\begin{eqnarray}
\label{conservation-H}
{\cal H}[\Psi] = \int_{\R} \left[ |\Psi_x(x,t)|^2 +
V(x) |\Psi(x,t)|^2 - \frac{1}{p+1} |\Psi(x,t)|^{2p+2} \right] dx.
\end{eqnarray}
They are referred to as the power $N$ and the Hamiltonian $H$, respectively.

Let $N_s(E) = {\cal N}[\phi(\cdot;E)]$ and $H_s(E) = {\cal H}[\phi(\cdot;E)]$.
If $\Psi_0$ is an initial datum for the solution $\Psi$ of the NLS equation (\ref{GP}), we define
$$
{\cal N}_0 = {\cal N}[\Psi_0] \quad \mbox{\rm and} \quad {\cal H}_0 = {\cal H}[\Psi_0].
$$
Substitution of (\ref{decomposition1}) and (\ref{decomposition2})
into (\ref{conservation-N}) and (\ref{conservation-H}) gives
\begin{eqnarray*}
\label{conservation-N-expansion}
{\cal N}_0 = N_s(E) +
\int_{\R} \left[ ( A \psi + U)^2 + (B \chi + W)^2 \right] dx
\end{eqnarray*}
and
\begin{eqnarray*}
 \nonumber
{\cal H}_0 & = & H_s(E) + \int_{\R} \left[ (A \psi_x + U_x)^2 + (B \chi_x + W_x)^2
+ V  ( A \psi + U)^2 + V  (B \chi + W)^2 \right] dx \\
& \phantom{t} &  - \frac{1}{p+1} \int_{\R} \left[
\left( ( \phi + A \psi + U)^2 + (B \chi + W)^2 \right)^{p+1} - \phi^{2p+2} - 2(p+1)
\phi^{2p+1}  (A \psi + U) \right] dx, \label{conservation-H-expansion}
\end{eqnarray*}
where we have used the stationary equation (\ref{stationary}) for $\phi$
and the symplectic orthogonality (\ref{orthogonality2}).

\begin{remark}
By direct computation, we can verify that
\begin{eqnarray}
\label{var-principle-soliton}
H_s'(E) + E N_s'(E) = 0,
\end{eqnarray}
for any $E$, for which $\phi(\cdot;E) \in H^2(\R)$ exists.
\end{remark}

\section{Stationary normal-form equation}

We shall recover the results of Theorems \ref{theorem-Kirr}(ii) and \ref{theorem-stability}
on the existence and stability of stationary states from the system
of time evolution equations (\ref{time-W-U})--(\ref{system-a-b}).
Theorems \ref{theorem-Kirr} and \ref{theorem-stability}
were originally proved in \cite{KirrPelin} with the Lyapunov--Schmidt decomposition method
that relies on an orthogonal decomposition with respect to the self-adjoint operator $L_+(E)$.
On the other hand, the decomposition used in the derivation of system
(\ref{time-W-U})--(\ref{system-a-b}) relies on the symplectic orthogonality
conditions (\ref{orthogonality2}). Therefore, computations of this section
provide an alternative proof of Theorems \ref{theorem-Kirr}(ii) and \ref{theorem-stability}.

\subsection{Alternative proof of Theorems \ref{theorem-Kirr}(ii) and \ref{theorem-stability}}

We start with the simplification of the system (\ref{time-W-U})--(\ref{system-a-b})
for stationary solutions of the NLS equation (\ref{GP}).
Because of the symplectic orthogonality conditions (\ref{orthogonality2}), we
define the constrained $H^2$-space,
\begin{equation}
\label{constrained-h-2-space}
H^2_E = \left\{ U \in H^2(\mathbb{R}) : \quad \langle \phi(E), U \rangle_{L^2} = \langle \chi(E), U \rangle_{L^2} = 0 \right\},
\end{equation}
where the subscript indicates that the orthogonal projections are $E$-dependent.
The stationary solutions of the system (\ref{time-W-U})--(\ref{system-a-b}) near $E = E_*$
are described by the following theorem.

\begin{theorem}
Assume that $N_s'(E_*) \neq 0$. There exists sufficiently small $\epsilon > 0$
such that for all $|E-E_*|< \epsilon$,
the nonlinear Schr\"{o}dinger equation (\ref{GP}) admits a stationary solution
\begin{equation}
\label{solution-stationary-GP}
\Psi = e^{i t {\cal E}} \left[ \phi(x;E) + A \psi(x;E) + U \right],
\end{equation}
where
\begin{equation}
\label{E-new-definition}
{\cal E} = E - \frac{\langle \partial_E \phi, N_+(A \psi + U,0) \rangle_{L^2}}{\langle \partial_E \phi, \phi + U \rangle_{L^2}},
\end{equation}
while $U \in H^2_E$ and $A \in \R$ are uniquely defined from the implicit equations
\begin{equation}
\label{U-new-definition}
L_+(E) U = G(A,E,U), \quad  \langle \psi(E), G(A,E,U) \rangle_{L^2} =0.
\end{equation}
with
\begin{equation}
\label{LSbifurcation}
G(A,E,U) = \Lambda^2 A \chi + \frac{\langle \partial_E \phi, N_+(A \psi + U,0) \rangle_{L^2}}{\langle \partial_E \phi, \phi + U \rangle_{L^2}} (\phi + A \psi + U) - N_+(A \psi + U,0).
\end{equation}
Moreover, there exist positive constants $C_1$ and $C_2$ such that
\begin{equation}
\label{estimates-u-a}
\| U \|_{H^2} \leq C_1 |E - E_*|, \quad A^2 \leq C_2 |E - E_*|.
\end{equation}
\label{lemma-bifurcation-stat}
\end{theorem}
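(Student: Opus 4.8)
The plan is to read the stationary solutions of the evolution system (\ref{time-W-U})--(\ref{system-a-b}) off a Lyapunov--Schmidt reduction adapted to the symplectically orthogonal decomposition. First I would look for a real stationary profile by setting $\dot E=\dot A=\dot B=0$, $U_t=W_t=0$, $B=0$, $W=0$ and $\dot\theta=\mathcal E$ constant in (\ref{time-W-U})--(\ref{system-a-b}). Since $N_-(\,\cdot\,,0)=0$, equation (\ref{time-W-U}) holds identically and the first row of the projection system (\ref{projection2}) reduces to $\dot E=0$; the second row becomes $\langle\partial_E\phi,\phi+U\rangle_{L^2}(\mathcal E-E)=-\langle\partial_E\phi,N_+(A\psi+U,0)\rangle_{L^2}$. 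Because $\langle\partial_E\phi_*,\phi_*\rangle_{L^2}=\tfrac12 N_s'(E_*)\neq0$, this coefficient stays nonzero for small $|E-E_*|$ and small $U$, which both defines $\mathcal E$ through (\ref{E-new-definition}) and is exactly where the hypothesis $N_s'(E_*)\neq0$ enters. Substituting $\mathcal E-E$ into (\ref{time-U-W}) yields $L_+(E)U=G(A,E,U)$ with $G$ given by (\ref{LSbifurcation}), and the stationary reduction of (\ref{system-a-b}) is the scalar constraint $\langle\psi(E),G(A,E,U)\rangle_{L^2}=0$; together these are (\ref{U-new-definition}). A short parity computation, using that $\partial_E\phi$ is even while $\psi$ and $\chi$ are odd, shows in addition that $\langle\partial_E\phi,G\rangle_{L^2}=0$ holds \emph{automatically}, by the very construction of the $\mathcal E$-term.

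The heart of the argument is the invertibility of $L_+(E)$ on the constrained space $H^2_E$. I would show that $L_+(E)\colon H^2_E\to R_E:=\{f\in L^2:\langle\psi(E),f\rangle_{L^2}=\langle\partial_E\phi(E),f\rangle_{L^2}=0\}$ is an isomorphism with inverse bounded uniformly in $|E-E_*|<\epsilon$. The obstruction is the small eigenvalue $\lambda(E)\to0$ of $L_+(E)$ as $E\to E_*$, whose eigenfunction is close to $\psi$; however $\langle\chi,\psi\rangle_{L^2}=1\neq0$, so the constraint $U\perp\chi$ in the definition of $H^2_E$ excludes precisely this direction. Concretely, writing $U=\alpha g+U^\perp$ with $g$ the unit eigenfunction of the small eigenvalue and $U^\perp\perp g$, the constraint forces $|\alpha|\le C\|U^\perp\|$, whence $\|L_+U\|_{L^2}^2=\alpha^2\lambda^2+\|L_+U^\perp\|^2\ge c\|U^\perp\|^2\ge c'\|U\|^2$ uniformly by the spectral gap on $\{g\}^\perp$; elliptic regularity then upgrades this to the $H^2$ bound. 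The same algebra identifies the range: since $L_+^{-1}\phi=-\partial_E\phi$ and $L_+\psi=-\Lambda^2\chi$, the requirement that the preimage of $f$ satisfy $U\perp\phi,\chi$ translates into $f\perp\partial_E\phi,\psi$, i.e.\ $f\in R_E$. Because $G\in R_E$ under (\ref{U-new-definition}) (the first condition automatic, the second imposed), the equation $L_+U=G$ is solvable within $H^2_E$.

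With uniform invertibility in hand, I would solve the first equation of (\ref{U-new-definition}) for $U$ as a function of $(A,E)$ by a contraction argument applied to $U=(L_+|_{H^2_E})^{-1}\Pi_E G(A,E,U)$, where $\Pi_E$ removes the $\psi$-component so that the right-hand side lies in $R_E$. Since $G=\Lambda^2A\chi-N_+(A\psi+U,0)+\dots$ is $O(|E-E_*||A|+A^2+\|U\|^2)$, the map is a contraction on a small ball and yields a unique $U=U(A,E)$ with $\|U\|_{H^2}\le C(|E-E_*||A|+A^2)$. The full equation $L_+U=G$ (without $\Pi_E$) then holds if and only if the removed component vanishes, i.e.\ $\langle\psi,G\rangle_{L^2}=0$; this is the scalar reduced equation $\beta(A,E):=\langle\psi(E),G(A,E,U(A,E))\rangle_{L^2}=0$. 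Using (\ref{perturbation-theory}) and parity — the quadratic part of $N_+$ produces no $A^2$ term because $\langle\psi,\phi^{2p-1}\psi^2\rangle_{L^2}=0$ — one finds $\beta(A,E)=\Lambda^2(E)A+\kappa A^3+o(A^3)$, where $\kappa$ is the cubic coefficient to be identified with $\mathcal Q$ in the next subsection. The symmetric branch $A=0$ (with $U\equiv0$, $\mathcal E=E$) solves this for all $E$, while the nonzero branches satisfy $A^2=-\Lambda^2/\kappa+o(1)$, which exist on one side of $E_*$ according to the sign of $\kappa$. Absorbing the higher-order terms into $\tfrac{|\kappa|}{2}|A|^3\le|\Lambda^2||A|$ gives the a priori bound $A^2\le C_2|E-E_*|$, and then $\|U\|_{H^2}\le C(|E-E_*||A|+A^2)\le C_1|E-E_*|$, which are the estimates (\ref{estimates-u-a}).

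I expect the main obstacle to be the uniform invertibility of $L_+(E)$ on $H^2_E$ through the bifurcation point, where $L_+(E_*)$ is singular: one must show \emph{quantitatively} that the $\chi$-constraint neutralizes the collapsing eigenvalue $\lambda(E)$ and that the range is exactly $R_E$ uniformly in $E$, not merely at $E=E_*$. A secondary subtlety is that the reduced equation $\beta$ carries no quadratic term (it cancels by parity), so that the pitchfork scaling $A^2\asymp|E-E_*|$ and, in particular, the \emph{upper} bound on $A$ rest on the non-degeneracy of the cubic coefficient $\kappa$.
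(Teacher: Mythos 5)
Your proposal is correct and follows essentially the same Lyapunov--Schmidt route as the paper: set $B=W=0$, extract $\dot E=\dot A=0$ and the definition of ${\cal E}$ from the degenerate modulation equations, solve $L_+(E)U=G$ on the constrained space for $U=U(A,E)$, and close with the scalar pitchfork equation $\Lambda^2(E)A+\kappa A^3+o(A^3)=0$ to get the bounds on $A$ and $U$. The only difference is one of detail rather than strategy: you make explicit the uniform coercivity of $L_+(E)$ on $H^2_E$ through the bifurcation point (via $\langle\chi,\psi\rangle_{L^2}=1$), which the paper compresses into a single appeal to the Implicit Function Theorem, and you leave the cubic coefficient abstract where the paper writes it out via the near-identity correction $\Theta$.
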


\begin{proof}
For real-valued stationary solutions, we can set $B = 0$ and $W = 0$
in the system (\ref{time-W-U})--(\ref{system-a-b}), which give $w = 0$ and
$N_-(A \psi + U,0) = 0$. The modulation equations (\ref{projection2})--(\ref{system-a-b})
become degenerate and can be rewritten in the form,
\begin{eqnarray*}
\langle \partial_E \phi, \phi - U \rangle_{L^2} \dot{E} & = & 0, \\
\langle \partial_E \phi, \phi + U \rangle_{L^2} (\dot{\theta} - E) & = &
- \langle \partial_E \phi, N_+(A \psi + U,0) \rangle_{L^2}, \\
\dot{A} + \dot{E} (-\langle \partial_E \chi, U \rangle_{L^2}
+ A \langle \partial_E \psi, \chi \rangle_{L^2}) & = & 0, \\
- \Lambda^2 A + (\dot{\theta} - E)( A \| \psi \|_{L^2}^2
+ \langle \psi, U \rangle_{L^2}) & = & -\langle \psi, N_+(A \psi + U,0) \rangle_{L^2}.
\end{eqnarray*}

Assuming $N_s'(E_*) \neq 0$ and the smallness of $\| U \|_{L^2}$ for small $|E - E_*|$,
we get $\langle \partial_E \phi, \phi - U \rangle_{L^2} \not=0$ for small $|E - E_*|$.
So, from the first equation, it follows that $\dot{E}=0$. Then, from the third equation,
we infer that $\dot{A} =0$. The second equation gives (\ref{E-new-definition}) with
the correspondence $\dot{\theta} = {\cal E}$.
The fourth equation gives $\langle \psi, G(A,E,U) \rangle_{L^2} =0$, where
$G$ is defined by (\ref{LSbifurcation}) and the normalization $\langle \psi, \chi \rangle_{L^2} = 1$
is used.

The time-evolution system (\ref{time-W-U})--(\ref{time-U-W}) implies
that $U$ becomes time-independent and satisfies
the stationary equation $L_+ U = G(A,E,U)$. So, the
system (\ref{U-new-definition}) is verified
and we shall prove the existence and uniqueness of small solutions
of this system satisfying the estimates
(\ref{estimates-u-a}) for small $|E - E_*|$.

Since $L_+(E_*) \psi_* = 0$ and $\psi(E) \to \psi_*$ in $L^2(\R)$ as $E \to E_*$,
the Implicit Function Theorem gives the existence of a unique local map
\begin{equation}
\label{stationary-map}
\R^2 \ni (A,E) \mapsto U \in H^2(\R) \quad \mbox{\rm near} \quad (A,E) = (0,E_*),
\end{equation}
such that $U$ satisfies equation $L_+(E) U = G(A,E,U)$ subject to the constraint
$\langle \chi(E), U \rangle_{L^2} = 0$, provided that $\langle \psi(E), G(A,E,U) \rangle_{L^2} = 0$.
Let $U_{A,E}$ denote this map for small $A$ and $|E - E_*|$. The map is $C^{\infty}$ if
$p \in \mathbb{N}$.

In addition, it follows from equation
\begin{equation}
\label{gen-ker-L-plus}
L_+(E) \partial_E \phi(E) = - \phi(E)
\end{equation}
that $U_{A,E}$ satisfies $\langle \phi(E), U_{A,E} \rangle_{L^2} = 0$ under the constraint
$\langle \partial_E \phi(E), G(A,E,U) \rangle_{L^2} = 0$, which is identically satisfied.
Therefore, $U_{A,E} \in H^2_E$, according to the definition (\ref{constrained-h-2-space}).

We note that $G(A,E,U)$ is quadratic in $A$ as $A \to 0$. Therefore,
we proceed with a near identity transformation for the map (\ref{stationary-map}),
\begin{equation}
\label{near-identity-transform}
U_{A,E} = A^2 \Theta(x;E) + {\cal O}_{H^2}(A^3),
\end{equation}
where $\Theta \in H^2_E$ is a unique solution of the inhomogeneous equation,
\begin{equation}
\label{correction-theta}
L_+(E) \Theta = p(2p+1) \phi^{2p-1} \psi^2 - p (2p+1) \frac{\langle \partial_E \phi, \phi^{2p-1} \psi^2
\rangle_{L^2}}{\langle \partial_E \phi, \phi \rangle_{L^2}} \phi.
\end{equation}

It remains to control the value of $A$ in terms of the small value of $|E - E_*|$.
Substituting (\ref{perturbation-theory}) and (\ref{near-identity-transform}) to equation
$\langle \psi(E), G(A,E,U) \rangle_{L^2} = 0$, we obtain
\begin{equation}
\label{normal-form-stat}
\lambda'(E_*) \| \psi_* \|^2_{L^2} (E - E_*) A - Q A^3 + {\cal O}(A^4, A^2(E - E_*), (E - E_*)^2) = 0,
\end{equation}
where
\begin{eqnarray*}
Q & = & \frac{1}{3} p (2p+1) (2p-1) \langle \psi_*^2, \phi_*^{2p-2} \psi_*^2 \rangle_{L^2}
+ 2 p(2p+1) \langle \psi_*^2, \phi_*^{2p-1} \Theta_* \rangle_{L^2} \\
& \phantom{t} &  \phantom{text} - p(2p+1) \frac{\langle \partial_E \phi_*, \phi_*^{2p-1} \psi_*^2 \rangle_{L^2}}{\langle \partial_E \phi_*, \phi_* \rangle_{L^2}} \| \psi_* \|_{L^2}^2.
\end{eqnarray*}
Therefore, either $A = 0$ (and $U_{A,E} \equiv 0$) or $A$ is a non-zero root of equation (\ref{normal-form-stat})
that satisfies the second estimate (\ref{estimates-u-a}). Thanks to the expansion (\ref{near-identity-transform}),
the first estimate (\ref{estimates-u-a}) is also satisfied. This concludes the proof of the theorem.
\end{proof}

We will now show that the results following from Theorem \ref{lemma-bifurcation-stat} and
the normal form equation (\ref{normal-form-stat}) are equivalent to the
results of Theorems \ref{theorem-Kirr}(ii) and \ref{theorem-stability}.
Using (\ref{gen-ker-L-plus}), we let
$$
\Theta = \theta + p (2p+1) \frac{\langle \partial_E \phi, \phi^{2p-1} \psi^2
\rangle_{L^2}}{\langle \partial_E \phi, \phi \rangle_{L^2}} \partial_E \phi, \quad
\theta = p(2p+1) L_+^{-1}(E) \phi^{2p-1} \psi^2.
$$
On the other hand, expanding (\ref{E-new-definition}) gives
\begin{equation}
\label{E-cal-E}
{\cal E} = E + p(2p+1) A^2 \frac{\langle \partial_E \phi, \phi^{2p-1} \psi^2 \rangle_{L^2}}{\langle \partial_E \phi, \phi \rangle_{L^2}} + {\cal O}(A^3).
\end{equation}

Using (\ref{derivative-result}), we conclude that
the normal-form equation (\ref{normal-form-stat}) is equivalent to equation
\begin{equation}
\label{normal-form-stat-old}
\lambda'(E_*) \| \psi_* \|^2_{L^2} ({\cal E} - E_*) A - {\cal Q} A^3
+ {\cal O}(A^4, A^2({\cal E} - E_*), ({\cal E} - E_*)^2) = 0,
\end{equation}
where ${\cal Q}$ is given by (\ref{bifurcation-coefficient}) and ${\cal E}$ is a renormalized parameter of
the stationary state $\phi$.

We can see from (\ref{normal-form-stat-old}) that besides zero solution $A = 0$
that corresponds to the symmetric state $\phi(x;{\cal E})$ with ${\cal E} = E$,
there are two nonzero solutions that correspond to the asymmetric states,
\begin{equation}
\label{asymmetric-expansion}
\varphi_{\pm}(x;{\cal E}) = \phi(x;E) \pm A \psi(x;E) + A^2 \Theta(x;E) + {\cal O}_{H^2}(A^3),
\end{equation}
where ${\cal E}$ is related to $E$ by the expansion (\ref{E-cal-E}) and
$A$ is a positive root of the stationary normal-form equation (\ref{normal-form-stat-old})
provided that ${\rm sign}\left(({\cal E} - E_*){\cal Q}\right) = -1$ if $\lambda'(E_*) < 0$.
If ${\cal Q} < 0$, the asymmetric states exist for ${\cal E} > E_*$. If ${\cal Q} > 0$,
the asymmetric states exist for ${\cal E} < E_*$. This concludes the
alternative proof of Theorem \ref{theorem-Kirr}(ii).

To establish the alternative proof of Theorem \ref{theorem-stability}, we need the following result.

\begin{lemma}
Let $\varphi_{\pm}$ be the asymmetric states (\ref{asymmetric-expansion})
that exist for ${\rm sign}\left(({\cal E} - E_*){\cal Q}\right) = -1$ near ${\cal E} = E_*$
and define
$$
L_+(A) = -\partial_x^2 + V(x) - (2p+1) \varphi_+^{2p} + {\cal E}(A),
$$
where ${\cal E}(A)$ is given by the expansion (\ref{normal-form-stat-old}).
Then, the second eigenvalue of $L_+(A)$
is positive for ${\cal Q} < 0$ and negative for ${\cal Q} > 0$.\label{lemma-second-eigenvalue}
\end{lemma}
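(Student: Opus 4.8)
The plan is to obtain the second eigenvalue of $L_+(A)$ by analytic perturbation theory, viewing $L_+(A)$ as an $A$-perturbation of $L_+(E_*)$. At $A=0$ we have $L_+(0)=L_+(E_*)$, whose second eigenvalue is exactly $\lambda(E_*)=0$ with the odd eigenfunction $\psi_*$. Since $-(2p+1)\phi_*^{2p}<-\phi_*^{2p}$ forces the lowest eigenvalue of $L_+(E_*)$ to be strictly negative (while $0$ is the ground state of $L_-(E_*)$), and the rest of its spectrum is strictly positive, the zero eigenvalue is simple and isolated. Hence for small $A$ there is a unique eigenvalue $\mu(A)$ of $L_+(A)$ near $0$, it is simple, it is the second eigenvalue, and the whole lemma reduces to determining the sign of $\mu(A)$ for $A\neq 0$ on the asymmetric branch. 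So the target is the expansion $\mu(A)=\mu_1 A+\mu_2 A^2+\mathcal O(A^3)$.

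First I would expand $L_+(A)=L_+(E_*)+A\,W_1+A^2\,W_2+\mathcal O(A^3)$. Using \eqref{asymmetric-expansion} for $\varphi_+$, the relations \eqref{E-cal-E} and the stationary normal form \eqref{normal-form-stat-old} to record that both $\mathcal E(A)-E_*$ and $E-E_*$ are $\mathcal O(A^2)$, and the explicit derivative \eqref{exL}, one finds $W_1=-2p(2p+1)\phi_*^{2p-1}\psi_*$, while $W_2$ collects three $\mathcal O(A^2)$ contributions: the quadratic term $-(2p+1)[2p\phi_*^{2p-1}\Theta_*+p(2p-1)\phi_*^{2p-2}\psi_*^2]$ from $\varphi_+^{2p}$, the shift $(E-E_*)L_+'(E_*)/A^2$, and the constant shift from $\mathcal E(A)$. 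The first-order coefficient vanishes, $\mu_1=\langle W_1\psi_*,\psi_*\rangle/\|\psi_*\|^2=0$, because $\phi_*$ is even and $\psi_*$ odd, so $\phi_*^{2p-1}\psi_*^3$ is odd and integrates to zero. This parity cancellation is the analytic signature of the pitchfork and leaves $\mu(A)=\mu_2 A^2+\mathcal O(A^3)$.

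The heart of the proof is $\mu_2$. Rayleigh--Schr\"odinger theory (legitimate since $0$ is simple and isolated) gives $\mu_2\|\psi_*\|^2=\langle W_2\psi_*,\psi_*\rangle-\langle W_1\psi_*,L_+^{-1}(E_*)W_1\psi_*\rangle$, where $L_+^{-1}(E_*)$ acts on the range of $L_+(E_*)$ and is well defined on $W_1\psi_*=-2p(2p+1)\phi_*^{2p-1}\psi_*^2\perp\psi_*$. The second term equals $-4p^2(2p+1)^2\langle\phi_*^{2p-1}\psi_*^2,L_+^{-1}(E_*)\phi_*^{2p-1}\psi_*^2\rangle$, exactly $-2$ times the first term of $\mathcal Q$ in \eqref{bifurcation-coefficient}. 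For $\langle W_2\psi_*,\psi_*\rangle$ I would use the defining equation \eqref{correction-theta} for $\Theta_*$ to rewrite $\langle\psi_*^2,\phi_*^{2p-1}\Theta_*\rangle$ in terms of $\langle\phi_*^{2p-1}\psi_*^2,L_+^{-1}(E_*)\phi_*^{2p-1}\psi_*^2\rangle$, together with \eqref{derivative-result} and the value of $E-E_*$ read off from \eqref{normal-form-stat-old}; I expect all $L_+^{-1}$-contributions and all $\partial_E\phi_*$-contributions to cancel, leaving $\langle W_2\psi_*,\psi_*\rangle=-\tfrac23 p(2p+1)(2p-1)\langle\psi_*^2,\phi_*^{2p-2}\psi_*^2\rangle$, precisely $-2$ times the second term of $\mathcal Q$. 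Adding the two pieces yields $\mu_2\|\psi_*\|^2=-2\mathcal Q$, so $\mu(A)=-2\mathcal Q\,\|\psi_*\|^{-2}A^2+\mathcal O(A^3)$ and $\operatorname{sign}\mu(A)=-\operatorname{sign}\mathcal Q$, which is the claim.

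The main obstacle is exactly this second-order bookkeeping: three distinct $\mathcal O(A^2)$ sources must be assembled and seen to collapse, via \eqref{correction-theta} and \eqref{derivative-result}, onto the two terms of $\mathcal Q$. As an independent check I would use the Lyapunov--Schmidt viewpoint: with $\mathcal G(\Phi,\mathcal E)=-\Phi''+V\Phi-\Phi^{2p+1}+\mathcal E\Phi$ one has $L_+(A)=D_\Phi\mathcal G$ at $\Phi=\varphi_+$, and the reduced bifurcation function is $\langle\psi,\mathcal G\rangle$; since $\langle\psi,L_+(E)U\rangle=-\Lambda^2\langle\chi,U\rangle=0$ by \eqref{linearized-system} and $U\in H^2_E$, this reduced function equals the stationary normal form $\lambda'(E_*)\|\psi_*\|^2(\mathcal E-E_*)A-\mathcal Q A^3+\cdots$. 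The small eigenvalue is the reduced Jacobian (Schur complement), $\mu(A)=\|\psi_*\|^{-2}\partial_A\langle\psi,\mathcal G\rangle$, and along the branch $\lambda'(E_*)\|\psi_*\|^2(\mathcal E-E_*)=\mathcal Q A^2$ gives $\partial_A\langle\psi,\mathcal G\rangle=-2\mathcal Q A^2$, in agreement with the direct computation; the point needing justification there is the identification of the Schur complement with $\partial_A\langle\psi,\mathcal G\rangle$, which again rests on the spectral gap established in the first paragraph.
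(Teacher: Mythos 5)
Your proposal is correct and follows essentially the same route as the paper: analytic (Rayleigh--Schr\"odinger) perturbation theory in $A$ about the simple isolated zero eigenvalue of $L_+(E_*)$, with the first-order coefficient vanishing by parity and the second-order coefficient assembling to $\mu(A) = -2\mathcal{Q}\,\|\psi_*\|_{L^2}^{-2}A^2 + \mathcal{O}(A^3)$, exactly matching the paper's $\mu''(0) = -4\mathcal{Q}/\|\psi_*\|_{L^2}^2$ obtained via $h'(0)=2\theta_*$. Your claimed cancellations in $\langle W_2\psi_*,\psi_*\rangle$ do check out, and the Lyapunov--Schmidt cross-check is a sensible supplement not present in the paper.
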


\begin{proof}
Using (\ref{normal-form-stat}) and (\ref{asymmetric-expansion}), we expand
$\varphi_{\pm}$ for small $A$ by
\begin{eqnarray}
\label{perturbation-varphi}
\varphi_{\pm} = \phi_* \pm A \psi_* + A^2 \left( \Theta_* + \frac{Q}{\lambda'(E_*) \| \psi_* \|^2_{L^2}} \partial_E \phi_* \right)
+ {\cal O}_{H^2}(A^3).
\end{eqnarray}

Let $h(A)$ be the eigenfunction of $L_+(A)$ for the second
eigenvalue $\mu(A)$. We have $h(0) = \psi_*$ and $\mu(0) = 0$ is a simple eigenvalue,
so that the analytic perturbation theory for self-adjoint operators applies.
Using (\ref{normal-form-stat-old}) and (\ref{perturbation-varphi}), we compute
$$
L_+'(0) = -2p(2p+1) \phi_*^{2p-1} \psi_*
$$
and
$$
L_+''(0) = -2p(2p+1)(2p-1) \phi_*^{2p-2} \psi_*^2 -
4 p (2p+1) \phi_*^{2p-1} \psi_*  \left( \Theta_* + \frac{Q}{\lambda'(E_*) \| \psi_* \|^2_{L^2}} \partial_E \phi_* \right)
+ \frac{2 {\cal Q}}{\lambda'(E_*) \| \psi_* \|^2_{L^2}}.
$$

Algorithmic computations yield $h'(0) = 2 \theta_*$, $\mu'(0) = 0$, and, after tedious computations,
$$
\mu''(0) = \frac{\langle L_+''(0) \psi_* + 4 L_+'(0) \theta_*, \psi_* \rangle_{L^2}}{\| \psi_*\|^2_{L^2}} = -\frac{4 {\cal Q}}{\| \psi_*\|^2_{L^2}}.
$$
Therefore, $\mu(A) > 0$ for small $A$ if ${\cal Q} < 0$ and $\mu(A) < 0$ for small $A$ if ${\cal Q} > 0$.
\end{proof}

Let $N_s(E) = \| \phi(\cdot;E) \|_{L^2}^2$ and
$N_a({\cal E}) = \| \varphi_{\pm}(\cdot;{\cal E}) \|^2_{L^2}$.
Using equations (\ref{E-cal-E}) and (\ref{normal-form-stat-old}), we expand $N_a({\cal E})$ in the power series,
\begin{eqnarray*}
N_a({\cal E}) & = & N_s(E) + A^2 \| \psi \|^2_{L^2} + {\cal O}(A^4) \\
& = & N_s(E_*) + N_s'(E_*) (E - E_*) + A^2 \| \psi_* \|^2_{L^2} + {\cal O}((E-E_*)^2,(E-E_*)A^2,A^4) \\
& = & N_s(E_*) + {\cal S} ({\cal E} - E_*) + {\cal O}({\cal E}-E_*)^2,
\end{eqnarray*}
where ${\cal S}$ is given by (\ref{represent-0}). Assuming that ${\cal Q} < 0$,
the asymmetric states exist for ${\cal E} > E_*$. If ${\cal S} > 0$, then
$N_a({\cal E})$ increases with ${\cal E}$, whereas if ${\cal S} < 0$,
then $N_a({\cal E})$ decreases with ${\cal E}$.

Orbital stability and instability of asymmetric stationary states follows from the classical
theorem of Grillakis, Shatah, \& Strauss \cite{GSS} because Lemma \ref{lemma-second-eigenvalue}
shows that the operator $L_+(A)$ linearized at $\varphi_{\pm}({\cal E})$ has one
negative eigenvalue if ${\cal Q} < 0$ and ${\cal E} > E_*$. On the other hand, the symmetric state $\phi(E)$
is unstable for $E > E_*$ by a theorem of Grillakis \cite{Gr}
because Theorem \ref{theorem-Kirr}(i) shows that $L_+(E)$ linearized at $\phi(E)$
has two negative eigenvalues for $E > E_*$. This concludes the alternative proof of Theorem \ref{theorem-stability}.

\subsection{Limit of large separation of potential wells}

The case of large separation of potential wells, when $s \to \infty$ in the double-well
potential (\ref{potential}), gives a good example of explicit computations of
numerical coefficients ${\cal Q}$ and ${\cal S}$. Knowing these numerical coefficients enables
the explicit classification of the stationary states in Theorem \ref{theorem-Kirr}
and \ref{theorem-stability} and verifies the conditions of Theorem \ref{theorem-main}.
The following theorem gives the asymptotic result when $s \to \infty$.

\begin{theorem} Let $V$ be given by \eqref{potential}. There exists sufficiently large $s_0 >0$
such that for all $s > s_0$, we have
$N_s'(E_*) >0$, $\lambda'(E_*) <0$, $\mathcal{Q}<0$, and
\begin{equation*}
\mathcal{S} >0, \ \text{if} \quad p < p_*, \quad \text{and} \quad \mathcal{S} <0 \ \text{if} \quad \ p < p_*,
\end{equation*}
where $p_*$ is the positive root of the equation $1 + 3p -p^2 =0$. \label{theorem-large-separation}
\end{theorem}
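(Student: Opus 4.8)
The plan is to compute the leading-order asymptotics of the four coefficients $N_s'(E_*)$, $\lambda'(E_*)$, $\mathcal{Q}$, and $\mathcal{S}$ in the joint limit in which $s\to\infty$ and, as a consequence, $E_*\to E_0$ and the bifurcation amplitude vanishes. I would introduce the normalized lowest eigenfunction $\eta_0$ of the single well $-\partial_x^2+V_0$ and write the nearly degenerate symmetric and antisymmetric modes as $\eta_\pm(x)=\frac{1}{\sqrt{2}}\left[\eta_0(x+s)\pm\eta_0(x-s)\right]$, whose linear eigenvalues differ by an exponentially small tunneling splitting $\delta(s)$. To leading order the symmetric state and the marginal odd eigenvector factorize as $\phi_*\approx a_*\eta_+$ and $\psi_*\approx c\,\eta_-$, where $a_*\to 0$ and, because the normalization $\langle\chi_*,\psi_*\rangle_{L^2}=1$ forces $c$ to be small as well; both $a_*^{2p}$ and $c^2$ scale like $\delta(s)$.

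The central simplification, which I would isolate as a preliminary lemma, is that every integral occurring in the coefficients reduces to a single-well moment $I_k=\int_{\R}\eta_0^k\,dx$. Since products of left-localized and right-localized factors integrate to quantities exponentially small in $s$, one may replace $(\eta_0^L\pm\eta_0^R)^m$ by $(\eta_0^L)^m+(\eta_0^R)^m$ under the integral sign; in particular $\eta_+^2$ and $\eta_-^2$ agree to leading order, and a short computation gives $\int\eta_+^{2p+2}\approx\int\eta_+^{2p}\eta_-^2\approx\int\eta_-^4\eta_+^{2p-2}\approx 2^{-p}I_{2p+2}$.

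Next I would use the fact that on the even subspace $L_+(E_*)$ has a single near-null eigenvalue tied to the symmetric mode. From $L_+(E_*)\phi_*=-2p\,\phi_*^{2p+1}$ one identifies this eigenvalue, to leading order, with the Rayleigh quotient $\lambda_{\rm even}=-2p\,\|\phi_*\|_{L^{2p+2}}^{2p+2}/\|\phi_*\|_{L^2}^2<0$, whence $\partial_E\phi_*\approx-\phi_*/\lambda_{\rm even}$. Inserting these into \eqref{derivative-result} gives $\lambda'(E_*)\to 1-(2p+1)=-2p<0$; then $N_s'(E_*)=2\langle\partial_E\phi_*,\phi_*\rangle_{L^2}>0$; and, after combining the two terms of \eqref{bifurcation-coefficient},
\[
\mathcal{Q}\approx-\frac{4}{3}\,p(p+1)(2p+1)\,a_*^{2p-2}c^4\,2^{-p}I_{2p+2}<0,
\]
where the quadratic form in the first term is dominated by its projection onto the near-null even direction, $\langle\phi_*^{2p-1}\psi_*^2,L_+^{-1}(E_*)\phi_*^{2p-1}\psi_*^2\rangle_{L^2}\approx\langle\eta_+,\phi_*^{2p-1}\psi_*^2\rangle_{L^2}^2/\lambda_{\rm even}$. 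Finally I would substitute these asymptotics into \eqref{represent-0}: the constants $c$, $a_*$ and the moment $I_{2p+2}$ cancel between the two contributions to $\mathcal{S}$, leaving
\[
\mathcal{S}\;\propto\;\frac{1}{p}-\frac{3p}{(p+1)(2p+1)}\;=\;\frac{-p^2+3p+1}{p(p+1)(2p+1)}
\]
with positive proportionality constant, so that $\mathcal{S}>0$ for $p<p_*$ and $\mathcal{S}<0$ for $p>p_*$, with $p_*$ the positive root of $1+3p-p^2=0$.

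The hard part will be the rigorous treatment of the first term in $\mathcal{Q}$: one inverts $L_+(E_*)$ on an even function while the even part of the spectrum has an eigenvalue collapsing to zero as $s\to\infty$. This requires uniform (in $s$) resolvent bounds that isolate the two near-null directions — the even $\phi_*$ and the odd $\psi_*$ — from the remainder of the spectrum, which stays bounded away from zero by an $O(1)$ gap, together with a proof that the near-singular projection supplies the leading contribution while the regular part is of strictly higher order in $\delta(s)$. Controlling the errors in the factorizations $\phi_*\approx a_*\eta_+$, $\psi_*\approx c\,\eta_-$ and in the moment reductions to the order needed — so that the stated leading coefficients, and in particular the sign-determining polynomial $-p^2+3p+1$, are unaffected — is where the bulk of the technical effort lies.
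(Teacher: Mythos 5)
Your computation is correct and lands on exactly the paper's asymptotic formulas ($\lambda'(E_*)\to -2p$, $N_s'(E_*)\to \frac{1}{p}\|\phi_*\|_{L^2}^4/\|\phi_*\|_{L^{2p+2}}^{2p+2}$, ${\cal Q}\to -\frac{4}{3}p(p+1)(2p+1)C_*^2\|\phi_*\|_{L^{2p+2}}^{2p+2}$ with $C_*=c^2/a_*^2$, and ${\cal S}\propto 1+3p-p^2$), but you get there by a genuinely different route. You build everything from the explicit two-mode tunneling ansatz $\eta_\pm$ and single-well moments $I_k$, and you evaluate the resolvent term in ${\cal Q}$ by projecting onto the near-null even direction --- which, as you rightly note, is the technically delicate step, since it requires uniform-in-$s$ resolvent bounds separating the collapsing even eigenvalue from the rest of the spectrum. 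The paper sidesteps that difficulty entirely with the \emph{exact} algebraic identity $L_+^{-1}(E)\phi^{2p+1}=-\frac{1}{2p}\phi$, which follows for every $E$ from applying $L_+$ to $\phi$ and using the stationary equation \eqref{stationary}; combined with the convergence $\psi_*^2\to C_*\phi_*^2$ in $L^\infty$ (imported as a black box from \cite{KirrPelin}), the first term of \eqref{bifurcation-coefficient} collapses to $-\frac{C_*^2}{2p}\|\phi_*\|_{L^{2p+2}}^{2p+2}$ in one line, with no spectral decomposition of $L_+^{-1}$ at all. Likewise, instead of your Rayleigh-quotient identification $\partial_E\phi_*\approx -\phi_*/\lambda_{\rm even}$, the paper uses the small-amplitude expansion of the symmetric branch near $E=E_0$ (valid because $E_*\to E_0$) to compute $\langle\partial_E\phi_*,\phi_*^{2p+1}\rangle$ and $N_s'(E_*)$. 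Your approach is more self-contained and makes the tunneling mechanism explicit, but the bulk of the technical effort you defer (the singular inversion and the error control in the factorizations $\phi_*\approx a_*\eta_+$, $\psi_*\approx c\,\eta_-$) is precisely what the exact identity and the citation to \cite{KirrPelin} let the paper avoid; if you keep your route, you should at least notice that $L_+\phi=-2p\phi^{2p+1}$ makes the ``hard part'' of your plan unnecessary.
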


\begin{proof}
We recall from \cite{KirrPelin} that
$$
E_* \to E_0 \quad \mbox{\rm and} \quad \psi_*^2 \to C_* \phi_*^2 \;\; \mbox{\rm in} \;\; L^{\infty}(\R)
\quad \mbox{\rm as} \quad s \to \infty,
$$
where $C_* > 0$ is a normalization constant. Using the exact identity
$$
L_+^{-1}(E) \phi^{2p+1} = -\frac{1}{2p} \phi,
$$
we can hence simplify the expression (\ref{bifurcation-coefficient}) to the form,
\begin{eqnarray}
\label{asymptotic-1}
{\cal Q} \to -\frac{4}{3} p (p + 1) (2p + 1) C_*^2 \| \phi_* \|^{2p+2}_{L^{2p+2}} \quad \mbox{\rm as} \quad s \to \infty.
\end{eqnarray}
Therefore, ${\cal Q} < 0$ for any $p \in \mathbb{N}$ if $s$ is sufficiently large.

Because $E_* \to E_0$, we can approximate $\phi_*$ and $\partial_E \phi_*$ using the small-amplitude expansion
for the symmetric states of the stationary equation (\ref{stationary}),
$$
\phi(x;E) = a \phi_0(x) + {\cal O}(a^{1+2p}), \quad E = E_0 + a^{2p} \frac{\| \phi_0 \|_{L^{2p+2}}^{2p+2}}{\| \phi_0 \|^2_{L^2}} + {\cal O}(a^{4p}),
$$
where $\phi_0 \in H^2(\R)$ is the eigenfunction of the operator $L_0 = -\partial_x^2 + V(x)$
for the lowest eigenvalue $-E_0$ and $a \in \R$ is a small parameter of the expansion.
As a result, we obtain
\begin{eqnarray}
\label{asymptotic-2}
\langle \partial_E \phi_*, \phi_*^{2p+1} \rangle_{L^2} \to \frac{1}{2p} \| \phi_* \|^{2}_{L^{2}},
\quad \mbox{\rm as} \quad s \to \infty.
\end{eqnarray}
and
\begin{eqnarray}
\label{asymptotic-23}
N_s'(E_*) = 2\langle \partial_E \phi_*, \phi_* \rangle_{L^2} \to \frac{1}{p} \frac{\| \phi_* \|^4_{L^2}}{\| \phi_* \|^{2p+2}_{L^{2p+2}}}, \quad \mbox{\rm as} \quad s \to \infty.
\end{eqnarray}
In addition, it follows from (\ref{derivative-result}) and (\ref{asymptotic-2}) that
\begin{eqnarray}
\label{asymptotic-2-3}
\lambda'(E_*) \to -2p \quad \mbox{\rm as} \quad s \to \infty.
\end{eqnarray}
Therefore, $N_s'(E_*) > 0$ and $\lambda'(E_*) < 0$ as $s \to \infty$.

Substituting (\ref{asymptotic-1})--(\ref{asymptotic-2-3}) into the expression (\ref{represent-0}), we obtain
\begin{equation}
\label{represent-1}
{\cal S} \to \frac{(1+3p-p^2) \| \phi_* \|^4_{L^2}}{p(1+p)(1+2p) \| \phi_* \|^{2p+2}_{L^{2p+2}}} \quad \mbox{\rm as} \quad s \to \infty.
\end{equation}
Therefore, ${\cal S} > 0$ for $p < p_*$ and ${\cal S} < 0$ for $p > p_*$,
where $p_*$ is given by the positive root of $1 + 3p - p^2 = 0$, that is, by (\ref{threshold}).
\end{proof}

\section{Time-dependent normal form equations}

We rewrite in the abstract form the modulation equations
(\ref{system-theta-E})--(\ref{system-a-b}) for $(\theta,E,A,B)$,
\begin{eqnarray}
\label{system-finite-dimen}
\left\{ \begin{array}{l}
\dot{\theta} - E = R_{\theta}(E,A,B,U,W), \\
\dot{E} = R_E(E,A,B,U,W), \\
\dot{A} - B = R_A(E,A,B,U,W), \\
\dot{B} - \Lambda^2(E) A = R_B(E,A,B,U,W),  \end{array} \right.
\end{eqnarray}
and the system (\ref{time-W-U})--(\ref{time-U-W})
for the remainder terms $(U,W)$,
\begin{eqnarray}\label{system-infinite-dimen}
\left\{ \begin{array}{l} U_t = L_-(E) W + R_U(E,A,B,U,W), \\
- W_t = L_+(E) U +R_W(E,A,B,U,W), \end{array} \right.
\end{eqnarray}
where $R_{\theta}$, $R_E$, $R_A$, $R_B$, $R_U$ and $R_W$ are some
functionals on the solution. These functionals can be computed explicitly. Indeed,
it follows from \eqref{system-theta-E} that
\begin{equation}
\label{R-theta-E.def}
\left [\begin{matrix}
R_{E} \\ R_{\theta} \end{matrix} \right] = \mathcal{M}^{-1}
\left[ \begin{array}{cc} \langle \phi, N_-(A \psi + U,B \chi + W) \rangle_{L^2} \\
- \langle \partial_E \phi, N_+(A \psi + U,B \chi + W) \rangle_{L^2} \end{array} \right],
\end{equation}
where matrix $\mathcal{M}$ is given by (\ref{A.def}).
If $\| U \|_{L^2}, \| W \|_{L^2} \ll 1$ and $N_s'(E_*) \neq 0$,
then $\mathcal{M}$ invertible and
\begin{equation}
\label{M.expansion}
\mathcal{M}^{-1} = \langle \partial_E \phi, \phi \rangle_{L^2}^{-1}
                      \left [\begin{matrix} 1 & 0 \\ 0 & 1\end{matrix} \right]
                     + \mathcal{O}(\| U \|_{L^2} + \| W \|_{L^2}).
\end{equation}
On the other hand, equations \eqref{system-b-a} and \eqref{system-a-b} yield
\begin{eqnarray} \nonumber
R_A & =\langle \chi, N_-(A \psi + U,B \chi + W) \rangle_{L^2} + R_{E}
                    ( \langle \partial_E \chi, U \rangle_{L^2} - A \langle \partial_E \psi, \chi \rangle_{L^2}) + \\ \label{RA.def}
               & \quad \quad \quad \quad R_{\theta}( B \| \chi \|_{L^2}^2 + \langle \chi, W \rangle_{L^2}), \\ \nonumber
R_B & = -\langle \psi, N_+(A \psi + U,B \chi + W) \rangle_{L^2} +
                       R_{E} ( \langle \partial_E \psi, W \rangle_{L^2} - B \langle \partial_E \chi, \psi \rangle_{L^2}) \\ \label{RB.def}
               & \quad \quad \quad \quad - R_{\theta}( A \| \psi \|_{L^2}^2 + \langle \psi, U \rangle_{L^2}).
\end{eqnarray}
Now, the system (\ref{time-W-U})--(\ref{time-U-W}) simplified with
the modulation equations (\ref{system-finite-dimen}) yields
\begin{equation} \label{RUW.def}
\left\{
\begin{split}
R_U & =  N_-(A \psi + U,B \chi + W) + R_{\theta} (B \chi + W) - R_{E} (\partial_E \phi + A \partial_E \psi) - R_A \psi, \\
R_W & = N_+(A \psi + U,B \chi + W) + R_{\theta} (\phi + A \psi + U) + B R_{E} \partial_E \chi + R_{B}\chi.
\end{split}
\right.
\end{equation}

\begin{remark}
If $A = B = 0$, then $U = W = 0$ is an invariant solution of
the system (\ref{system-infinite-dimen}), which
give zero values of $R_{\theta}$, $R_E$, $R_A$, and $R_B$ in
the system (\ref{system-finite-dimen}) for any $E$.  \label{remark-2}
\end{remark}

Observation of Remark \ref{remark-2} inspires us to consider the power series expansions
for solutions of the systems (\ref{system-finite-dimen}) and (\ref{system-infinite-dimen}).
Taking into account the spatial symmetry of eigenfunctions, we can see that
$U$, $W$, $R_{\theta}$, and $R_E$ are quadratic with respect to $(A,B)$,
whereas $R_A$ and $R_B$ are cubic with respect to $(A,B)$. Therefore, we write
\begin{eqnarray} \label{system-truncated}
\left\{ \begin{array}{l}
\dot{\theta} - E = C_1(E) A^2 + C_2(E) B^2 + \wt{R}_\theta(E,A,B,\tilde{U},\tilde{W}), \\
\dot{E} = C_3(E) A B + \wt{R}_E(E,A,B,\tilde{U},\tilde{W}), \\
\dot{A} - B = C_4(E) A^2 B + C_5(E) B^3 + \wt{R}_A(E,A,B,\tilde{U},\tilde{W}), \\
\dot{B} - \Lambda^2(E) A = C_6(E) A^3 + C_7(E) A B^2 + \wt{R}_B(E,A,B,\tilde{U},\tilde{W}),  \end{array} \right.
\end{eqnarray}
and
\begin{eqnarray} \label{remainder-terms}
\left\{ \begin{array}{l}
U = A^2 \Theta(x;E) + B^2 \Delta(x;E) + A^3 U_1(x;E) + A B^2 U_2(x;E) + \wt{U}(x,t), \\
W = A B \Gamma(x;E) + A^2 B W_1(x;E) + B^3 W_2(x;E) + \wt{W}(x,t), \end{array} \right.
\end{eqnarray}
where $\wt{R}_{\theta}$, $\wt{R}_{E}$, $\wt{R}_{A}$, and $\wt{R}_{B}$ are the error terms,
whereas $\wt{U}$ and $\wt{W}$ are the remainder terms.

\begin{remark}
$B = 0$ and $\tilde{W} = 0$ yield the invariant reduction
of the stationary system (\ref{system-truncated}) to the algebraic equation,
$$
-\Lambda^2(E) A = C_6(E) A^3 + \wt{R}_B(E,A,0,\tilde{U},0),
$$
from which it follows that $C_6(E_*) = Q$ in notations of Section 3.1.
\end{remark}

Let us first explicitly compute the coefficients $C_1, C_2, \ldots C_7$
and determine the functions $\Theta, \Delta, \ldots, W_2$. We shall then
estimate the error and remainder terms in (\ref{system-truncated})
and (\ref{remainder-terms}) as quadric with respect to $(A,B)$.
Working in a small neighborhood of $(0,0)$ on the phase plane $(A,B)$
and using $|\Delta N|$ as a small parameter, we consider an ellipsoidal region
on the $(A,B)$-plane such that
\begin{equation}
\label{region}
\exists C > 0 : \quad A^2 + |\Delta N|^{-1} B^2 \leq C |\Delta N|.
\end{equation}

Let $T > 0$ be the maximal time until which we consider solutions of the
modulation equations (\ref{system-truncated}) in the domain (\ref{region}).
We assume (and prove in Section 4.4) that there are positive constants $C_0$, $C_1$, and $C_2$
such that
\begin{equation}
\label{region-2}
T \leq C_0 |\Delta N|^{-1/2},
\end{equation}
and
\begin{equation}
\label{region-3}
 |\dot{\theta} - E_*| \leq C_1 |\Delta N|, \quad
|E - E_*| \leq C_2 |\Delta N|.
\end{equation}
The following theorem provides the control of the error
terms of the system (\ref{system-truncated}) and the remainder terms of the
decomposition (\ref{remainder-terms}).

\begin{theorem}
\label{theorem-remainder}
Assume (\ref{region})--(\ref{region-3}).
There exists $\varepsilon > 0$ such that for any $|\Delta N| < \varepsilon$, there
are positive constants $C_1$ and $C_2$ such that
\begin{equation}
\label{estimate-theorem-1}
\sup_{t \in [0,T]} \left( \| \tilde{U}(\cdot,t) \|_{H^1} + \| \tilde{W}(\cdot,t) \|_{H^1} \right) \leq C_1 (\Delta N)^2
\end{equation}
and
\begin{equation}
\label{estimate-theorem-2}
\sup_{t \in [0,T]} \left( |\tilde{R}_{\theta}| + |\tilde{R}_E| +|\tilde{R}_A| +|\tilde{R}_B| \right) \leq C_2 (\Delta N)^2.
\end{equation}
\end{theorem}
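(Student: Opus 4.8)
The plan is to prove the two estimates by first pinning down, order by order, the polynomial corrections in (\ref{remainder-terms}) together with the scalar coefficients $C_1,\dots,C_7$ of (\ref{system-truncated}), and then deriving a forced linear Hamiltonian system for the genuine remainders $(\wt U,\wt W)$ that can be closed by an energy estimate. Throughout I would use the grading suggested by the region (\ref{region}), in which $|A|$ is of order $|\Delta N|^{1/2}$, while $|B|$, $\Lambda^2(E)$ and $E-E_*$ are of order $|\Delta N|$: assign weight $1$ to $A$, weight $2$ to $B$, $\Lambda^2$ and $E-E_*$, and weight $+1$ to $\partial_t$ (since $T$ is of order $|\Delta N|^{-1/2}$ by (\ref{region-2})). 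A monomial $A^mB^n$ then has size $|\Delta N|^{(m+2n)/2}$, so the target bound $(\Delta N)^2$ is exactly weight $4$. The organizing input is the reversibility symmetry $\Psi(x,t)\mapsto\overline{\Psi(x,-t)}$ of (\ref{GP}), which acts as $(t,A,B,\theta,U,W)\mapsto(-t,A,-B,-\theta,U,-W)$ and forces $U,R_W,R_\theta,R_B$ to be even in $B$ and $W,R_U,R_E,R_A$ to be odd in $B$; this is precisely the parity pattern of the ansatz (\ref{remainder-terms})--(\ref{system-truncated}), and it guarantees that the monomials deliberately left in $(\wt U,\wt W)$ all have weight $\ge 4$.

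First I would construct the corrections. Substituting (\ref{remainder-terms}) into (\ref{system-infinite-dimen}), using the slaving relations $L_+(E)U\simeq-R_W$, $L_-(E)W\simeq-R_U$ and matching each monomial, one finds that $\Theta,\Delta,U_1,U_2$ solve inhomogeneous equations $L_+(E)(\cdot)=(\text{explicit forcing})$ on the constrained space $H^2_E$ of (\ref{constrained-h-2-space}), while $\Gamma,W_1,W_2$ solve $L_-(E)(\cdot)=(\text{explicit forcing})$ on $\{\partial_E\phi\}^{\perp}$; the Fredholm solvability conditions at the resonant orders fix $C_1,\dots,C_7$, and in particular reproduce $C_6(E_*)=Q$ and the $\Theta$ of (\ref{correction-theta}). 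Since $L_-(E)$ is invertible on $\{\phi(E)\}^{\perp}$ and $L_+(E)$ is invertible on the complement of its near-null direction for $|E-E_*|<\epsilon$ (as already used in Lemmas \ref{psi*chi*}, \ref{linearized-functions} and Theorem \ref{lemma-bifurcation-stat}), each coefficient function lies in $H^2(\R)$ with $H^2$-norm and $E$-derivative bounded uniformly for $|E-E_*|<\epsilon$.

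Next I would derive the remainder system. Inserting the full ansatz into (\ref{system-infinite-dimen}) and cancelling all matched monomials by construction yields
\[
\wt U_t=L_-(E)\wt W+F_U,\qquad -\wt W_t=L_+(E)\wt U+F_W,
\]
where $F_U$ (odd in $B$) and $F_W$ (even in $B$) collect the omitted polynomial monomials, the modulation terms produced by $\dot E$ (order $|\Delta N|^{3/2}$), $\dot B-\Lambda^2A$ (order $|\Delta N|^{3/2}$) and $\dot A-B$ (order $|\Delta N|^{2}$) acting through the $E$-dependence of the profiles and through $R_U,R_W$, and the nonlinear cross terms linear in $(\wt U,\wt W)$. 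After matching, the first two groups are of weight $\ge 4$, and the cross terms carry a prefactor $|A|+|B|\le C|\Delta N|^{1/2}$; a key point of the parity bookkeeping is that $F_U$ is in fact weight $\ge 5$, whereas $F_W$ retains a weight-$4$ part (the $A^4$ and $A^2(E-E_*)$ terms that the ansatz leaves inside $\wt U$). Thus
\[
\|F_U\|_{H^1}\le C(\Delta N)^{5/2}+C|\Delta N|^{1/2}\big(\|\wt U\|_{H^1}+\|\wt W\|_{H^1}\big),\qquad
\|F_W\|_{H^1}\le C(\Delta N)^{2}+C|\Delta N|^{1/2}\big(\|\wt U\|_{H^1}+\|\wt W\|_{H^1}\big).
\]
The natural energy is $I(t)=\langle L_+(E)\wt U,\wt U\rangle_{L^2}+\langle L_-(E)\wt W,\wt W\rangle_{L^2}$. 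The symplectic orthogonality (\ref{orthogonality2}) removes the near-null direction of $L_+$ and, together with $N_s'(E_*)>0$ (so that $\langle\partial_E\phi,\phi\rangle_{L^2}\neq0$), the kernel direction of $L_-$; with $\lambda'(E_*)<0$ and standard coercivity of the Grillakis--Shatah--Strauss type \cite{GSS} one obtains $I(t)\ge c(\|\wt U\|_{H^1}^2+\|\wt W\|_{H^1}^2)$ uniformly for $|E-E_*|<\epsilon$, and the second-order operator $L_-(E)L_+(E)$ governing the remainder has spectrum bounded away from zero on the constrained subspace. Differentiating $I$, the skew cross term cancels and the time-dependence of $L_\pm(E)$ contributes only $|\dot E|\,I\le C|\Delta N|^{3/2}I$, leaving $\dot I=2\langle L_+\wt U,F_U\rangle-2\langle L_-\wt W,F_W\rangle+(\dot E\text{ terms})$.

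The main obstacle is to close at the sharp order $(\Delta N)^2$ rather than $(\Delta N)^{3/2}$. A direct Gronwall estimate on $\sqrt I$ gives $\tfrac{d}{dt}\sqrt I\le C(\Delta N)^2+C|\Delta N|^{1/2}\sqrt I$ and hence only $\sqrt I\le C(\Delta N)^{3/2}$, because integrating the weight-$4$ part of $F_W$ over a time of length $|\Delta N|^{-1/2}$ costs half a power. The resolution is that this leading forcing is slowly varying and non-resonant relative to the $O(1)$-gapped flow of $(\wt U,\wt W)$: I would substitute $L_-\wt W=\wt U_t-F_U$ to write $-2\langle L_-\wt W,F_W\rangle=-2\langle\wt U_t,F_W\rangle+2\langle F_U,F_W\rangle$ and integrate $\langle\wt U_t,F_W\rangle$ by parts in time. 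The boundary term $\langle\wt U,F_W\rangle$ is absorbed into $\epsilon I+C(\Delta N)^4$, while the new integrand involves $\partial_tF_W$, which is of weight $5$ (order $(\Delta N)^{5/2}$) since differentiating a slow weight-$4$ term produces a factor $\dot A\sim B$ or $\dot E\sim AB$. This yields $I(t)\le C(\Delta N)^4+C(\Delta N)^{5/2}\int_0^t\sqrt I\,ds$, and a Bihari/Gronwall argument with $T\le C_0|\Delta N|^{-1/2}$ and $I(0)=0$ gives $\sqrt I\le C(\Delta N)^2$ on $[0,T]$, which is (\ref{estimate-theorem-1}); this order is in fact saturated by the quasi-static $A^4$ response, so the bound is sharp. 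Finally, (\ref{estimate-theorem-2}) follows by inserting this into the explicit formulas for $\wt R_\theta,\wt R_E,\wt R_A,\wt R_B$: the omitted polynomial parts are of weight $\ge4$ (indeed $\ge5$ for $\wt R_A,\wt R_B$, the latter because the quadratic part of $R_B$ vanishes by the parity $\langle\psi,(\text{even})\rangle_{L^2}=0$), and the $(\wt U,\wt W)$-dependent parts carry the prefactor $|A|+|B|\le C|\Delta N|^{1/2}$, so all four error terms are bounded by $C(\Delta N)^2$.
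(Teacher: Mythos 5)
Your proposal reaches the right estimates and your weighted bookkeeping ($A\sim|\Delta N|^{1/2}$, $B,\Lambda^2,E-E_*\sim|\Delta N|$, target weight $4$), your construction of $\Theta,\Delta,\Gamma,U_1,U_2,W_1,W_2$ via Fredholm solvability on the constrained spaces, and your identification of the critical loss (a weight-$4$ forcing integrated over $T\sim|\Delta N|^{-1/2}$ costs half a power) all match the paper, which carries out the same construction in Lemmas \ref{quadratic} and \ref{cubic} and estimate (\ref{wtUW-eqn}). Where you genuinely diverge is in how the critical weight-$4$ forcing is neutralized. The paper does it at the level of the equation: it extends the near-identity transformation one more order, writing $\wt{\bf R}=\sum_{i+j=4}{\bf f}_{ij}(E)A^iB^j+\hat{\bf F}$ and peeling off the quasi-static response $\wt{\bf Z}=\sum_{i+j=4}{\bf z}_{ij}(E)A^iB^j+\hat{\bf Z}$, so that the residual $\hat{\bf R}$ is of weight $\geq 5$; it then closes with the Duhamel formula, the uniform bound on the semigroup $P_c^*(E_*)e^{tH(E_*)}P_c(E_*)$ on the continuous spectral subspace (a consequence of the spectrum being purely imaginary and gapped away from zero there), and a plain Gronwall inequality, which gives $T|\Delta N|^{5/2}\sim(\Delta N)^2$ directly. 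You instead keep the weight-$4$ term in $F_W$ and kill it inside a Grillakis--Shatah--Strauss-type energy estimate by integrating $\langle\wt U_t,F_W\rangle$ by parts in time. These are two implementations of the same non-resonance fact (the forcing is slowly varying relative to the $O(1)$ spectral gap), and both are valid; the paper's version buys an explicit quartic profile and a one-line Gronwall at the price of computing ${\bf z}_{ij}$, while yours avoids both the quartic solve and any semigroup estimate at the price of a more delicate energy computation. Two points in your version deserve care: (i) the coercivity of $I$ rests on $\langle L_+\partial_E\phi,\partial_E\phi\rangle_{L^2}=-\tfrac12 N_s'(E_*)<0$ and on $\langle\chi,\psi\rangle_{L^2}=1\neq0$ (transversality of the constraints to the negative and near-null directions), not on the sign of $\lambda'(E_*)$; and (ii) before integrating by parts in time you should split $F_W$ into its polynomial weight-$4$ part and the $(\wt U,\wt W)$-dependent rest, since $\partial_tF_W$ otherwise contains $\wt U_t,\wt W_t$, which are only controlled weakly --- the rest already carries the prefactor $|A|+|B|\lesssim|\Delta N|^{1/2}$ and closes under ordinary Gronwall, so only the polynomial part needs the time integration by parts. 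With these adjustments your argument closes at the sharp order $(\Delta N)^2$, and the deduction of (\ref{estimate-theorem-2}) from (\ref{estimate-theorem-1}) via the explicit formulas is the same as the paper's.
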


The proof of Theorem \ref{theorem-remainder} is given in Sections 4.1 and 4.2.

\subsection{Power series expansions}

For explicit computations, we use the power series expansions (\ref{power-series-u})--(\ref{power-series-w})
and the decompositions (\ref{decomposition2}) and \eqref{remainder-terms} to expand
\begin{eqnarray*}
 N_+ & = & -p(2p+1)\phi^{2p-1}\psi^2A^2 -p\phi^{2p-1}\chi^2B^2 -p(2p+1) \left(
 \frac{2p-1}{3}\phi^{2p-2}\psi^3 + 2\phi^{2p-1}\psi\Theta \right) A^3 \\
& \quad & \phantom{texttext} -p \left( (2p-1)\phi^{2p-2}\psi\chi^2 +2(2p+1)\phi^{2p-1}\psi\Delta +2\phi^{2p-1}\chi\Gamma \right) AB^2 +\wt{N}_+, \\
N_- & = & -2p\phi^{2p-1}\psi\chi AB -p \left( \phi^{2p-2}\chi^3 +2\phi^{2p-1}\chi\Delta \right) B^3 \\
& \quad & \phantom{texttext} -p \left( (2p-1)\phi^{2p-2}\chi\psi^2 +2\phi^{2p-1}\psi\Gamma +2\phi^{2p-1}\chi\Delta
\right) A^2B +\wt{N}_-,
\end{eqnarray*}
where $\wt{N}_+$ and $\wt{N}_-$ are of the form
\begin{equation} \label{N+-.est}
\wt{N}_+, \wt{N}_- = \mathcal{O}((A^2 + B^2)^2 + (A + B)(\wt{U} + \wt{W}) +
\wt{U}^2 + \wt{W}^2).
\end{equation}
From equations \eqref{RUW.def}, \eqref{system-truncated}, and \eqref{remainder-terms}
we have
\begin{equation} \label{RuRw-dec}
\left\{
\begin{split}
R_U  & = f_{1,1}(E) AB + f_{2,1}(E) A^2B + f_{0,3}(E) B^3 + \tilde{F}_U(E,A,B,\tilde{U},\tilde{W}), \\
R_{W} & = g_{2,0}(E) A^2 + g_{0,2}(E) B^2 + g_{3,0}(E) A^3 + g_{1,2}(E) AB^2 +
\tilde{F}_W(E,A,B,\tilde{U},\tilde{W}),
\end{split}
\right.
\end{equation}
where
\begin{eqnarray*}
f_{1,1} & = & -2p\phi^{2p-1}\psi\chi -C_3\partial_E\phi, \\
g_{2,0} & = & -p(2p+1)\phi^{2p-1}\psi^2 + C_1\phi, \\
g_{0,2} & = & -p\phi^{2p-1}\chi^2 + C_2\phi, \\
f_{2,1} & = & C_1\chi -C_3\partial_E\psi - C_4\psi -p(2p-1)\phi^{2p-2}\chi\psi^2 -2p \phi^{2p-1}\psi\Gamma -2p\phi^{2p-1}\chi\Theta, \\
f_{0,3} & = & C_2\chi -C_5\psi -p\phi^{2p-2}\chi^3 -2p\phi^{2p-1}\chi\Delta, \\
g_{1,2} & = & C_2\psi +C_3\partial_E\chi + C_7\chi -p
\left( (2p-1)\phi^{2p-2}\psi\chi^2 +2(2p+1)\phi^{2p-1}\psi\Delta +2\phi^{2p-1}\chi\Gamma \right),\\
g_{3,0} & = & C_1\psi +C_6\chi -p(2p+1)
\left( \frac{1}{3}(2p-1)\phi^{2p-2}\psi^3 +2\phi^{2p-1}\psi\Theta \right),
\end{eqnarray*}
and
\begin{eqnarray*}
\tilde{F}_U & = & (C_1A^2 +C_2B^2)W + (B\chi +W)R_\theta -(\partial_E\phi +A\partial_E\psi)R_E -R_A\psi \\
& \phantom{t} & \quad -2p\phi^{2p-1} \left( A\psi(A^3U_1 +AB^2U_2 +\wt{U}) +B\chi(A^2BW_1 +B^3W_2 +\wt{W}) \right)  \\
& \phantom{t} & \quad -2p\phi^{2p-1}UW -p(2p-1)\phi^{2p-1} \left( (2AU\psi +U^2)(B\chi +W) +A^2\psi^2W \right) \\
& \phantom{t} & \quad -p\phi^{2p-2} \left( 3B^2\chi^2W + 3B\chi W^2 +W^3 \right) + \wt{N}_-, \\
\tilde{F}_W & = & R_\theta(\phi +A\psi +U) + (C_1A^2 + C_2B^2)U + R_B\chi + R_E\partial_E \chi  + \wt{N}_+.
\end{eqnarray*}

Substituting (\ref{remainder-terms}) and \eqref{RuRw-dec} into
the system (\ref{system-infinite-dimen})
and computing the time derivative of $(E,A,B)$ using
the system (\ref{system-truncated}), we obtain
\begin{eqnarray} \label{quadratic-system}
\left\{ \begin{array}{lcl}
L_+(E) \Theta + \Lambda^2(E) \Gamma + g_{2,0}(E) & = & 0, \\
L_+(E) \Delta + \Gamma + g_{0,2}(E) & = & 0, \\
L_-(E) \Gamma - 2 \Theta - 2 \Lambda^2(E) \Delta + f_{1,1}(E) & = & 0, \end{array} \right.
\end{eqnarray}
and
\begin{eqnarray} \label{cubic-system}
\left\{ \begin{array}{lcl}
L_+(E) U_1 + \Lambda^2(E) W_1 + g_{3,0}(E) & = & 0, \\
L_+(E) U_2 + 2W_1 + 3\Lambda^2(E) W_2 + g_{1,2}(E) & = & 0, \\
L_-(E) W_1 - 3U_1 -2\Lambda^2(E) U_2 -  f_{2,1}(E) & = & 0, \\
L_-(E) W_2 - U_2 - f_{0,3}(E) & = & 0, \end{array} \right.
\end{eqnarray}
and
\begin{eqnarray}
\label{system-infinite-dimen-revised}
\left\{ \begin{array}{l} \tilde{U}_t = L_-(E) \tilde{W} + \tilde{R}_U(E,A,B,\tilde{U},\tilde{W}), \\
- \tilde{W}_t = L_+(E) \tilde{U} + \tilde{R}_W(E,A,B,\tilde{U},\tilde{W}). \end{array} \right.
\end{eqnarray}
where
\begin{equation*}
\begin{split}
\wt{R}_U & = \wt{F}_U - (A^2\partial_E\Theta + B^2\partial_E \Delta + A^3\partial_EU_1
+AB^2\partial_EU_2)(C_3AB +R_E)\\
& \quad \quad -(2A\Theta + 3AU_1 +B^2U_2)(C_4A^2B +C_5B^3 +\wt{R}_A)\\
& \quad \quad - (2B\Delta + 2ABU_2)(C_6A^3 +C_7AB^2 +\wt{R}_B), \\
\wt{R}_W & = \wt{F}_W + (AB\partial_E\Gamma + A^2B\partial_EW_1 + B^3\partial_EW_2)(C_3AB +R_E), \\
& \quad \quad + (B\Gamma + 2ABW_1)(C_4A^2B +C_5B^3 +\wt{R}_A)  \\
& \quad \quad + (A\Gamma + A^2W_1 + 3B^2W_2)(C_6A^3 +C_7AB^2 +\wt{R}_B).
\end{split}
\end{equation*}

We shall now introduce two constrained $L^2$ spaces by
\begin{eqnarray}
\label{system-orthogonality-1}
L^2_+ & = & \left\{ U \in L^2(\mathbb{R}) : \quad \langle \phi, U \rangle_{L^2} = \langle \chi, U \rangle_{L^2} = 0 \right\}, \\
\label{system-orthogonality-2}
L^2_- & = & \left\{ W \in L^2(\mathbb{R}) : \quad \langle \partial_E \phi, W \rangle_{L^2} =
\langle \psi, W \rangle_{L^2} = 0 \right\}.
\end{eqnarray}
Note that the orthogonal projections depend on $E$ but we omit this dependence for the notational convenience.
We can also define $H^s_{\pm}(\mathbb{R})$ as constrained $H^s$ spaces for any $s \geq 0$. In notations of
Section 3.1, we have used $H^2_E \equiv H^2_+$.

The following two lemmas describe solutions of the systems (\ref{quadratic-system}) and (\ref{cubic-system}).

\begin{lemma} \label{quadratic}
There exists sufficiently small $\epsilon > 0$ such that for all $|E-E_*| < \epsilon$,
there exist a unique solution $\Theta, \Delta \in H^2_+(\mathbb{R})$ and
$\Gamma \in H^2_-(\mathbb{R})$ of the system \eqref{quadratic-system}. Moreover,
these solutions are even in $x$, $C^2$ in $E$, and satisfy
\begin{equation} \label{L2-es-quadratic}
\exists C > 0 : \quad \|\partial_E^\alpha \Theta \|_{H^2} + \|\partial_E^\alpha \Delta \|_{H^2} +
\|\partial_E^\alpha \Gamma \|_{H^2} \leq C, \quad \alpha =0, 1, 2.
\end{equation}
\end{lemma}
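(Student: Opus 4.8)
The plan is to treat \eqref{quadratic-system} as a single linear, $E$-dependent system for the three functions $(\Theta,\Delta,\Gamma)$ together with the three as-yet-undetermined constants $(C_1,C_2,C_3)$ hidden inside $g_{2,0}$, $g_{0,2}$, $f_{1,1}$, where these constants will be fixed by the Fredholm solvability conditions. The observation that drives everything is a parity reduction. Since $\phi$ is even while $\psi,\chi$ are odd (Lemmas \ref{psi*chi*} and \ref{linearized-functions}), the three sources $g_{2,0}=-p(2p+1)\phi^{2p-1}\psi^2+C_1\phi$, $g_{0,2}=-p\phi^{2p-1}\chi^2+C_2\phi$ and $f_{1,1}=-2p\phi^{2p-1}\psi\chi-C_3\partial_E\phi$ are all even in $x$, and $L_\pm(E)$ preserve parity; hence I seek even solutions. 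On the even subspace the constraints $\langle\chi,\Theta\rangle_{L^2}=\langle\chi,\Delta\rangle_{L^2}=0$ defining $H^2_+$ in \eqref{system-orthogonality-1} and the constraint $\langle\psi,\Gamma\rangle_{L^2}=0$ defining $H^2_-$ in \eqref{system-orthogonality-2} hold automatically, so the active constraints collapse to $\Theta,\Delta\perp\phi$ and $\Gamma\perp\partial_E\phi$.

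Next I record the mapping properties on even functions. Since the kernel of $L_+(E)$ is spanned by the odd function $\psi(E)$ and the only even eigenvalue of $L_+(E)$ near $E=E_*$ is the strictly negative, simple ground-state eigenvalue (bounded away from zero), the operator $L_+(E)$ restricted to even $H^2$ is an isomorphism onto even $L^2$, with $L_+^{-1}\phi=-\partial_E\phi$ by \eqref{gen-ker-L-plus}. By contrast, $0$ is the lowest, simple eigenvalue of $L_-(E)$ with even eigenfunction $\phi(E)$, so $L_-(E)$ on even functions has one-dimensional kernel $\mathrm{span}\{\phi\}$ and the Fredholm alternative requires any right-hand side to be orthogonal to $\phi$.

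I then solve the system at $E=E_*$, where $\Lambda^2(E_*)=0$ by \eqref{perturbation-theory}, in a triangular cascade. The first equation reads $L_+\Theta=-g_{2,0}$; writing $\Theta=p(2p+1)L_+^{-1}(\phi_*^{2p-1}\psi_*^2)+C_1\partial_E\phi_*$ and imposing $\langle\phi_*,\Theta\rangle_{L^2}=0$ fixes $C_1$ uniquely because $\langle\phi_*,\partial_E\phi_*\rangle_{L^2}=\tfrac12 N_s'(E_*)\neq0$, reproducing \eqref{correction-theta}. With $\Theta$ in hand, the third equation becomes $L_-\Gamma=2\Theta-f_{1,1}$; its solvability condition $\langle\phi_*,2\Theta-f_{1,1}\rangle_{L^2}=0$ fixes $C_3$ (again using $\langle\phi_*,\partial_E\phi_*\rangle_{L^2}\neq0$ and $\langle\phi_*,\Theta\rangle_{L^2}=0$), after which $\Gamma$ is determined up to the kernel direction $\phi_*$, and that residual multiple is pinned down by $\langle\partial_E\phi_*,\Gamma\rangle_{L^2}=0$. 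Finally the second equation $L_+\Delta=-\Gamma-g_{0,2}$ is solved for $\Delta$ with $C_2$ fixed by $\langle\phi_*,\Delta\rangle_{L^2}=0$. Thus the augmented problem has a unique even solution at $E_*$; equivalently, the bounded linear operator $\mathcal A(E_*)$ sending $(\Theta,\Delta,\Gamma,C_1,C_2,C_3)$ to the three residuals of \eqref{quadratic-system} together with the three constraint functionals is an isomorphism.

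Finally I pass to $E$ near $E_*$ and handle regularity. All coefficients, the scalar $\Lambda^2(E)$, and the projections entering $\mathcal A(E)$ depend on $E$ continuously in operator norm (indeed $C^\infty$ for $p\in\mathbb N$, since $E\mapsto\phi(\cdot;E)$ is $C^\infty$ and $E\mapsto(\psi(E),\chi(E),\Lambda^2(E))$ is smooth by Lemma \ref{linearized-functions}); hence $\mathcal A(E)$ stays invertible for $|E-E_*|<\epsilon$ by a Neumann-series perturbation of $\mathcal A(E_*)^{-1}$, giving existence and uniqueness of $(\Theta,\Delta,\Gamma)\in H^2_+\times H^2_+\times H^2_-$. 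Smoothness in $E$ follows because $\mathcal A(E)^{-1}$ and the data are smooth in $E$; differentiating \eqref{quadratic-system} once and twice in $E$ shows that $\partial_E^\alpha(\Theta,\Delta,\Gamma)$ solves the same augmented system with $H^2$ right-hand sides built from lower-order $E$-derivatives, which yields $\partial_E^\alpha(\Theta,\Delta,\Gamma)\in H^2$ for $\alpha=0,1,2$, and the uniform bound \eqref{L2-es-quadratic} then follows from continuity of these maps on the compact interval $|E-E_*|\le\epsilon$. I expect the main obstacle to be the coupling of the three equations through the off-diagonal terms and the $\Lambda^2$ factors: the resolution is to recognize the triangular solvable structure at $E_*$ and that $C_1,C_2,C_3$ are precisely the constants enforcing the Fredholm conditions, with the non-degeneracy $N_s'(E_*)\neq0$ guaranteeing that each solvability condition can be satisfied.
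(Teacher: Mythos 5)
Your proposal is correct and arrives at the same constants $C_1,C_2,C_3$ as \eqref{C-1-2-3}, but it is organized differently from the paper's proof. The paper first forces $g_{2,0},g_{0,2}\in L^2_-$ and $f_{1,1}\in L^2_+$ (equivalent to your Fredholm conditions via $L_+\partial_E\phi=-\phi$ and $L_-\phi=0$), and then, for $E$ near $E_*$, solves the coupled system by a fixed-point argument in $\Gamma$: for fixed $\Gamma\in L^2_-$ the first two equations determine $\Theta_\Gamma,\Delta_\Gamma$, and the map $\Gamma\mapsto L_-^{-1}\left(2\Theta_\Gamma+2\Lambda^2\Delta_\Gamma-f_{1,1}\right)$ is a contraction because $\|\Theta_{\Gamma_1}-\Theta_{\Gamma_2}\|_{H^2}\leq C|\Lambda(E)|^2\|\Gamma_1-\Gamma_2\|_{L^2}$ and $\Lambda^2(E)={\cal O}(E-E_*)$. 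You instead use parity to trivialize two of the four orthogonality constraints, solve the system exactly at $E=E_*$ where it becomes triangular (since $\Lambda^2(E_*)=0$), and then perturb the augmented operator (unknowns plus Lagrange-type constants) by a Neumann series. Both routes rest on the same two facts --- $N_s'(E_*)\neq0$ makes each solvability condition satisfiable, and $\Lambda^2(E)\to0$ makes the coupling weak --- so they are of comparable strength; your cascade has the advantage of exhibiting the explicit formulas for $\Theta$ and $C_1$ that the paper uses elsewhere (e.g.\ \eqref{correction-theta}), while the paper's contraction works directly in the symplectically constrained spaces without invoking parity. One small caveat: since you restrict to the even subspace from the outset, you literally prove uniqueness only among even solutions; to obtain uniqueness in $H^2_+\times H^2_+\times H^2_-$ as stated, add that the odd part of any solution solves the homogeneous system subject to the remaining constraints $\langle\chi,\cdot\rangle_{L^2}=\langle\psi,\cdot\rangle_{L^2}=0$, and hence vanishes by the same invertibility argument (there are no free constants in the odd sector). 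This is a one-line addendum rather than a flaw in the method.
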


\begin{proof}
To solve the system (\ref{quadratic-system}) near $E = E_*$, we recall that
operators $L_+(E_*)$ and $L_-(E_*)$ are not invertible so that we shall
set $g_{2,0}, g_{0,2} \in L^2_-(\mathbb{R})$ and $f_{1,1} \in L^2_+(\mathbb{R})$
thanks to the symplectic orthogonality. These
constraints set up uniquely the coefficients $C_1$, $C_2$, and $C_3$,
\begin{equation}
\label{C-1-2-3}
C_1 = p (2p+1) \frac{\langle \partial_E \phi, \phi^{2p-1} \psi^2
\rangle_{L^2}}{\langle \partial_E \phi, \phi \rangle_{L^2}}, \quad
C_2 = p \frac{\langle \partial_E \phi, \phi^{2p-1} \chi^2
\rangle_{L^2}}{\langle \partial_E \phi, \phi \rangle_{L^2}}, \quad
C_3 = - 2p \frac{\langle \psi, \phi^{2p} \chi
\rangle_{L^2}}{\langle \partial_E \phi, \phi \rangle_{L^2}}.
\end{equation}
For $E$ close to $E_*$, the existence and uniqueness of the solution
$\Theta, \Delta \in H^2_+(\mathbb{R})$ and $\Gamma \in H^2_-(\mathbb{R})$ of
the system \eqref{quadratic-system} follow from the gap
between zero (or small) eigenvalues of $L_{\pm}(E)$ and the rest of the spectrum
of $L_\pm(E)$ using the fixed point arguments.

Indeed, for any fixed $E$ such that $|E-E_*|$ is sufficiently small,
for any $\Gamma \in L^2_-(\mathbb{R})$, there exists a unique solution
$\Theta_\Gamma, \Delta_{\Gamma} \in H^2_+(\mathbb{R})$ of the system
\begin{equation}
\label{equation-tech-1}
L_+(E) \Theta_{\Gamma} + \Lambda^2(E) \Gamma  =  -g_{2,0}(E), \quad
L_+(E) \Delta_{\Gamma} + \Gamma  =  -g_{0,2}(E).
\end{equation}
Then, for any two $\Gamma_1, \Gamma_2 \in L^2_-(\mathbb{R})$, we have
\begin{equation*}
L_+(E) \left(\Theta_{\Gamma_1} - \Theta_{\Gamma_2} \right) = \Lambda^2(E)
\left( \Gamma_2 -\Gamma_1\right), \quad
L_+(E) \left(\Delta_{\Gamma_1} - \Delta_{\Gamma_2} \right) =
\Gamma_2 -\Gamma_1.
\end{equation*}
So, by using the elliptic regularity, there is a positive constant $C(E)$ such that
\begin{equation*}
\|\Theta_{\Gamma_1} -\Theta_{\Gamma_2} \|_{H^2} \leq C(E) |\Lambda(E)|^2
\|\Gamma_2 -\Gamma_1 \|_{L^2}, \quad \| \Delta_{\Gamma_1} -\Delta_{\Gamma_2} \|_{H^2} \leq C(E)
\|\Gamma_2 -\Gamma_1 \|_{L^2}.
\end{equation*}
We recall that $\Lambda^2(E) \to 0$ as $E \to E_*$.
From these estimates and by applying the fixed point arguments,
we get the existence of $\Gamma \in H^2_-(\mathbb{R})$ which solves the equation
\begin{equation}
\label{equation-tech-2}
L_-(E) \Gamma - 2 \Theta_\Gamma - 2 \Lambda^2(E) \Delta_\Gamma  = -f_{1,1}(E).
\end{equation}

To estimate the $H^2$-norm of the three solutions, we write equations (\ref{equation-tech-1}) as
\[
L(E_*) \Theta_{\Gamma} =  -g_{2,0}(E) - \left( L_+(E) -L_+(E_*) \right) \Theta_{\Gamma} -\Lambda^2(E) \Gamma
\]
and
\[
L(E_*) \Delta_{\Gamma} =  -g_{0,2}(E) - \left( L_+(E) -L_+(E_*) \right) \Delta_{\Gamma} - \Gamma.
\]
Because $\| L_+(E) -L_+(E_*) \|_{L^{\infty}} = {\cal O}(E-E_*)$ and $\Lambda^2(E) = {\cal O}(E-E_*)$
are sufficiently small for $|E - E_*| < \epsilon$, the standard regularity theorem for elliptic equations
implies that there exists an $E$-independent constant $C > 0$ such that
\begin{equation}
\label{equation-tech-3}
\| \Theta_{\Gamma} \|_{H^2} \leq  C \left( 1 + \epsilon \|\Gamma \|_{H^2} \right), \quad
\| \Delta_{\Gamma} \|_{H^2} \leq  C \left( 1 + \|\Gamma \|_{H^2} \right).
\end{equation}
From these estimates and equation (\ref{equation-tech-2}), we get for some constants $C, \tilde{C} > 0$,
\begin{equation*}
\| \Gamma \|_{H^2} \leq C \left( 1 + \|\Theta_{\Gamma} \|_{H^2} +
\epsilon \|\Delta_{\Gamma} \|_{H^2} \right)
\leq \tilde{C} \left( 1 + \epsilon \|\Gamma \|_{H^2} \right).
\end{equation*}
So, if $|E-E_*| < \epsilon$ is small enough, such that
$\tilde{C} \epsilon < 1$, there is an $E$-independent constant $C > 0$ such that
$\|\Gamma \|_{H^2} \leq  C$. From this estimate and estimate (\ref{equation-tech-3}),
we also obtain the $H^2$-estimates of $\Theta$ and $\Delta$.

Because $L_\pm(E)$, $g_{2,0}(E)$, $g_{0,2}(E)$, and $f_{1,1}(E)$ are all $C^2$ in $E$, we see that
$\Theta$, $\Delta$, and $\Gamma$ are all $C^2$ in $E$. By differentiating the system \eqref{quadratic-system} and using the same method as the one we just used, we also obtain the $H^2$-estimates of $\partial_E^\alpha\Theta,
\partial_E^\alpha\Delta$ and $\partial_E^\alpha \Gamma$ with $\alpha =1,2$.
\end{proof}

\begin{lemma} \label{cubic}
There exists sufficiently small $\epsilon > 0$ such that for all $|E-E_*| < \epsilon$,
there exist a unique solution $U_1, U_2 \in H^2_+(\mathbb{R})$ and $W_1, W_2 \in H^2_-(\mathbb{R})$
of the system (\ref{cubic-system}). Moreover, these solutions are odd in $x$, $C^2$ in $E$, and satisfy
\begin{equation}
\label{L2-es-cubic}
\exists C > 0 : \quad  \| \partial_E^\alpha U_j \|_{H^2} + \|\partial_E^\alpha W_j \|_{H^2} \leq C,
\quad j =1,2, \quad \alpha =0,1,2.
\end{equation}
\end{lemma}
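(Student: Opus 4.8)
The plan is to reproduce the argument of Lemma~\ref{quadratic}, since the system (\ref{cubic-system}) has the identical analytic structure---four coupled elliptic equations governed by the operators $L_\pm(E)$, non-invertible at $E=E_*$---only one polynomial order higher. The first step is to record that, by the very definition of the constrained spaces (\ref{system-orthogonality-1})--(\ref{system-orthogonality-2}), the maps $L_+(E)\colon H^2_+(\mathbb{R})\to L^2_-(\mathbb{R})$ and $L_-(E)\colon H^2_-(\mathbb{R})\to L^2_+(\mathbb{R})$ are well defined and, at $E=E_*$, are isomorphisms: their kernels $\psi_*$ and $\phi_*$ have been projected out by the constraints $\langle\chi_*,\cdot\rangle_{L^2}=0$ and $\langle\psi_*,\cdot\rangle_{L^2}=0$, and $\langle\chi_*,\psi_*\rangle_{L^2}=1$ pins down the complementary directions. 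By perturbation these remain isomorphisms for $|E-E_*|<\epsilon$. For (\ref{cubic-system}) to be consistent inside this framework the forcing must lie in the respective codomains, that is $g_{3,0},g_{1,2}\in L^2_-(\mathbb{R})$ and $f_{2,1},f_{0,3}\in L^2_+(\mathbb{R})$. Since $U_1,U_2,W_1,W_2$ are odd---inherited from the oddness of $\psi,\chi$ together with the evenness of $\Theta,\Delta,\Gamma$ supplied by Lemma~\ref{quadratic}---orthogonality to the even functions $\partial_E\phi$ and $\phi$ is automatic, and these memberships collapse to the single scalar conditions $\langle\psi,g_{3,0}\rangle_{L^2}=\langle\psi,g_{1,2}\rangle_{L^2}=0$ and $\langle\chi,f_{2,1}\rangle_{L^2}=\langle\chi,f_{0,3}\rangle_{L^2}=0$, which determine the four free coefficients $C_6,C_7,C_4,C_5$ uniquely (the $\chi$-coefficient of $g_{3,0}$ is $C_6$, and so on).

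Next, for each fixed $E$ with $|E-E_*|<\epsilon$ I would solve (\ref{cubic-system}) by a fixed-point argument built on the smallness $\Lambda^2(E)=\mathcal{O}(E-E_*)$ from (\ref{perturbation-theory}), exactly as $\Lambda^2(E)$ was exploited in Lemma~\ref{quadratic}. The crucial observation is that at $\Lambda^2=0$ the system is \emph{triangular}: the first equation gives $U_1=-L_+^{-1}g_{3,0}$, the third then gives $W_1=L_-^{-1}(3U_1+f_{2,1})$, the second gives $U_2=-L_+^{-1}(2W_1+g_{1,2})$, and the fourth gives $W_2=L_-^{-1}(U_2+f_{0,3})$, using the isomorphisms above. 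The omitted terms $+\Lambda^2(E)W_1$, $+3\Lambda^2(E)W_2$, and $-2\Lambda^2(E)U_2$ reintroduce a backward coupling, but each carries the small factor $\Lambda^2(E)$; regarding $(U_1,W_1,U_2,W_2)$ as a single element of $H^2_+\times H^2_-\times H^2_+\times H^2_-$ and these terms as a perturbation of the triangular map, one obtains a contraction for $\epsilon$ small, hence a unique solution, whose oddness is preserved. Uniform $H^2$ bounds independent of $E$ then follow by moving $L_\pm(E)-L_\pm(E_*)=\mathcal{O}(E-E_*)$ and the $\Lambda^2$-terms to the right-hand side and applying elliptic regularity to $L_\pm(E_*)$, precisely as in the estimates (\ref{equation-tech-3}).

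Finally, for the $C^2$ dependence on $E$ and the bounds (\ref{L2-es-cubic}) with $\alpha=1,2$, I would differentiate (\ref{cubic-system}) once and twice in $E$. Because $L_\pm(E)$, $\Lambda^2(E)$, and the forcing terms are all $C^2$ in $E$---the forcing through its dependence on $\Theta,\Delta,\Gamma$, which are $C^2$ by Lemma~\ref{quadratic}---the derivatives $\partial_E^\alpha(U_j,W_j)$ satisfy the same linear system with inhomogeneities assembled from already-bounded lower-order data, so the identical contraction and regularity argument yields their uniform $H^2$ bounds. The main obstacle is the stronger coupling of the cubic system relative to the quadratic one: the third equation feeds on both $U_1$ and $U_2$ while the second feeds on $W_1$, so the delicate point is to verify that the $\Lambda^2=0$ system is genuinely triangular in the order $U_1\to W_1\to U_2\to W_2$ and that every back-coupling carries the small factor $\Lambda^2(E)$. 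Once this is confirmed, the contraction closes exactly as in Lemma~\ref{quadratic}.
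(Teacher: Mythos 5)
Your proposal matches the paper's proof: the paper likewise determines $C_4,\dots,C_7$ by requiring $g_{3,0},g_{1,2}\in L^2_-(\mathbb{R})$ and $f_{2,1},f_{0,3}\in L^2_+(\mathbb{R})$ and then simply states that "the rest of the proof is similar to that of Lemma~\ref{quadratic}," which your triangular-at-$\Lambda^2=0$ observation and $\mathcal{O}(\Lambda^2(E))$ back-coupling estimate make explicit. One small correction: the kernel $\phi_*$ of $L_-(E_*)$ is excluded from $H^2_-(\mathbb{R})$ by the constraint $\langle\partial_E\phi,\cdot\rangle_{L^2}=0$ (using $N_s'(E_*)\neq 0$), not by $\langle\psi,\cdot\rangle_{L^2}=0$, since $\langle\psi_*,\phi_*\rangle_{L^2}=0$ by parity.
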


\begin{proof}
To solve the system (\ref{cubic-system}) near $E = E_*$, we set $g_{3,0}, g_{1,2} \in L^2_-(\mathbb{R})$
and $f_{2,1}, f_{0,3} \in L^2_+(\mathbb{R})$ thanks to the symplectic orthogonality.
These constraints set up uniquely the coefficients $C_4$, $C_5$, $C_6$, and $C_7$,
\begin{equation} \label{C4-7}
\begin{split}
C_4 & =  - C_3 \langle \partial_E \psi, \chi \rangle_{L^2} + C_1 \| \chi \|^2_{L^2}
- 2 p \langle \chi^2, \phi^{2p-1} \Theta \rangle_{L^2} - p (2p-1)
\langle \chi^2, \phi^{2p-2} \psi^2 \rangle_{L^2}\\
\quad & \quad \quad \quad \quad  - 2p \langle \chi \psi, \phi^{2p-1} \Gamma \rangle_{L^2}, \\
C_5 & =  C_2 \| \chi \|^2_{L^2}
- 2 p \langle \chi^2, \phi^{2p-1} \Delta \rangle_{L^2} - p
\langle \chi^2, \phi^{2p-2} \chi^2 \rangle_{L^2}, \\
C_6 & = - C_1 \| \psi \|^2_{L^2}
+ 2 p (2p+1) \langle \psi^2, \phi^{2p-1} \Theta \rangle_{L^2}
+ \frac{1}{3} p (2p+1) (2p-1) \langle \psi^2, \phi^{2p-2} \psi^2 \rangle_{L^2}, \\
C_7 & =  - C_3 \langle \partial_E \chi, \psi \rangle_{L^2} - C_2 \| \psi \|^2_{L^2}
+ 2 p (2p+1) \langle \psi^2, \phi^{2p-1} \Delta \rangle_{L^2}
+ 2p \langle \psi \chi, \phi^{2p-1} \Gamma \rangle_{L^2} \\
\ & \quad \quad \quad + p (2p-1) \langle \psi^2, \phi^{2p-2} \chi^2 \rangle_{L^2},
\end{split}
\end{equation}
The rest of the proof is similar to that of Lemma \ref{quadratic}.
\end{proof}

To end this section, we estimate the error terms $\tilde{R}_{\theta}$, $\tilde{R}_E$, $\tilde{R}_A$, and $\tilde{R}_B$ of the system \eqref{system-truncated}.

\begin{lemma} There is $C > 0$ such that
\begin{equation} \label{wtUW-eqn}
|\wt{R}_{\theta,E,A,B}| \leq C \left( (A^2 + B^2)^2 + (A + B)(\| \wt{U} \|_{L^2} + \| \wt{W} \|_{L^2}) +
\|\wt{U} \|^2_{L^2} + \|\wt{W} \|_{L^2}^2\right).
\end{equation}
\end{lemma}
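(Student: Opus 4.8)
The plan is to prove all four estimates at once by writing each error term as the exact functional minus the subtracted monomials, $\wt R_\theta = R_\theta - C_1 A^2 - C_2 B^2$, $\wt R_E = R_E - C_3 AB$, $\wt R_A = R_A - C_4 A^2 B - C_5 B^3$, and $\wt R_B = R_B - C_6 A^3 - C_7 AB^2$, and then expanding $R_\theta, R_E$ from \eqref{R-theta-E.def} and $R_A, R_B$ from \eqref{RA.def}--\eqref{RB.def} with the Taylor expansions of $N_\pm$, the decomposition \eqref{remainder-terms}, and the expansion \eqref{M.expansion} of $\mathcal{M}^{-1}$. At the outset I would record the elementary bounds $\|U\|_{L^2}\le C(A^2+B^2+\|\wt U\|_{L^2})$ and $\|W\|_{L^2}\le C(A^2+B^2+\|\wt W\|_{L^2})$, which follow from \eqref{remainder-terms} and the uniform estimates \eqref{L2-es-quadratic}--\eqref{L2-es-cubic}, together with the fact that on the ellipsoidal region \eqref{region} both $A$ and $B$ are small, so that $A^2+B^2\le C(|A|+|B|)$ and every monomial of degree $\ge 4$ is controlled by $(A^2+B^2)^2$.

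The core of the argument is twofold. First, I would exploit the reflection symmetry $S:(x,A,B)\mapsto(-x,-A,-B)$: since $\phi,\partial_E\phi,\Theta,\Delta,\Gamma$ are even and $\psi,\chi,U_1,U_2,W_1,W_2$ are odd in $x$ by Lemmas \ref{quadratic} and \ref{cubic}, while the decomposition \eqref{decomposition2}, \eqref{remainder-terms} is invariant, $E$ and $\theta$ are fixed by $S$ whereas $A,B$ change sign. Hence $R_\theta, R_E$ are even and $R_A, R_B$ are odd in $(A,B)$; so $R_\theta, R_E$ carry no cubic and $R_A, R_B$ no quadratic monomials, no wrong-order term survives the subtraction, and the polynomial remainders are quartic and quintic respectively, hence $\le C(A^2+B^2)^2$. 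Second, I would show that the surviving leading coefficients are \emph{exactly} the $C_i$ of \eqref{C-1-2-3} and \eqref{C4-7}. Rather than matching them by hand, I would use that the $C_i$ were fixed in Lemmas \ref{quadratic} and \ref{cubic} precisely so that $g_{i,j}\in L^2_-$, $f_{i,j}\in L^2_+$ and \eqref{quadratic-system}--\eqref{cubic-system} are solvable in $H^2_\pm$. Since by \eqref{RUW.def} the coefficients $g_{i,j},f_{i,j}$ of $R_W, R_U$ in \eqref{RuRw-dec} already carry $C_4\psi, C_5\psi, C_6\chi, C_7\chi$ (and $C_1,C_2,C_3$ at quadratic order) through the terms $R_A\psi$ and $R_B\chi$, orthogonality to the one-dimensional kernels $\mathrm{span}\{\psi_*\}$ of $L_+(E_*)$ and $\mathrm{span}\{\phi_*\}$ of $L_-(E_*)$ is algebraically the same statement as equating the leading coefficient of $R_A,R_B$ (and $R_\theta,R_E$) to the corresponding $C_i$. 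The subtracted monomials therefore remove the entire leading polynomial part by construction.

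With the leading part gone, the surviving terms split into three groups, each estimated separately. (a) Genuine higher-order polynomial terms, coming from the $\mathcal{O}((A^2+B^2)^2)$ part of $\wt N_\pm$ in \eqref{N+-.est}, from the off-diagonal part of \eqref{M.expansion} applied to the $\mathcal{O}(A^2+B^2)$ inner-product vector, and from products such as $R_E\langle\partial_E\chi,U\rangle$; all are $\le C(A^2+B^2)^2$. (b) Terms linear in $(\wt U,\wt W)$, namely the $(A+B)(\wt U+\wt W)$ part of \eqref{N+-.est} and the products $(A^2+B^2)(\|\wt U\|_{L^2}+\|\wt W\|_{L^2})$ from \eqref{M.expansion}, paired against the bounded weights $\phi^k\psi,\phi^k\chi$; using $A^2+B^2\le C(|A|+|B|)$ these are bounded by $C(|A|+|B|)(\|\wt U\|_{L^2}+\|\wt W\|_{L^2})$. (c) Terms quadratic in $(\wt U,\wt W)$, namely the $\wt U^2+\wt W^2$ part of \eqref{N+-.est} paired with an $L^\infty$ weight, giving $C(\|\wt U\|_{L^2}^2+\|\wt W\|_{L^2}^2)$. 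Because $R_A, R_B$ depend on $R_\theta, R_E$ through \eqref{RA.def}--\eqref{RB.def}, I would estimate $\wt R_\theta, \wt R_E$ first and feed those bounds into $\wt R_A, \wt R_B$; coupling contributions such as $\wt R_E\cdot A\langle\partial_E\psi,\chi\rangle$ carry an extra factor $|A|$ and are thus of the required higher order. Remark \ref{remark-2} guarantees that all terms vanish at $A=B=0$, $\wt U=\wt W=0$, so no lower-order remainder is produced.

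The hard part will be the bookkeeping in the second step: one must track \emph{every} contribution to the cubic part of $R_A, R_B$ — not only $\langle\chi,N_-\rangle$ and $\langle\psi,N_+\rangle$ but also the products $R_E(\langle\partial_E\chi,U\rangle - A\langle\partial_E\psi,\chi\rangle)$ and $R_\theta(B\|\chi\|_{L^2}^2+\langle\chi,W\rangle)$ — and confirm they reassemble exactly into $C_4 A^2 B + C_5 B^3$ and $C_6 A^3 + C_7 AB^2$. A single mismatched cubic contribution would leave an uncancelled cubic monomial in $\wt R_A$ or $\wt R_B$, which is \emph{not} controlled by the right-hand side of \eqref{wtUW-eqn}; the whole estimate rests on the exact agreement between the solvability conditions for \eqref{quadratic-system}--\eqref{cubic-system} and the kernel projections \eqref{system-b-a}--\eqref{system-a-b}. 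Once that agreement is in hand, the remaining bounds are routine applications of Cauchy--Schwarz and the uniform estimates \eqref{L2-es-quadratic}--\eqref{L2-es-cubic} inside the small region \eqref{region}.
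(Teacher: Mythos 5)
Your proposal is correct and follows essentially the same route as the paper: expand $R_\theta,R_E$ via \eqref{R-theta-E.def} and \eqref{M.expansion}, use the parity of $\phi,\partial_E\phi,\psi,\chi$ and of the profiles from Lemmas \ref{quadratic}--\ref{cubic} to kill the wrong-order monomials, observe that the surviving leading coefficients are exactly the $C_i$ fixed by the solvability conditions, absorb everything else into \eqref{N+-.est}, and then feed the resulting bounds on $\wt{R}_\theta,\wt{R}_E$ into \eqref{RA.def}--\eqref{RB.def} to treat $\wt{R}_A,\wt{R}_B$. The only quibble is that the conditions fixing the $C_i$ are the symplectic orthogonality constraints (against $\partial_E\phi,\psi$ for $L^2_-$ and against $\phi,\chi$ for $L^2_+$), not orthogonality to the kernels $\mathrm{span}\{\psi_*\}$ and $\mathrm{span}\{\phi_*\}$ as you state; this does not affect the argument.
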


\begin{proof}
We recall from \eqref{R-theta-E.def} and \eqref{M.expansion} that
\begin{equation} \label{REthe}
 \left [\begin{matrix}
 R_{E}\\
R_{\theta} \end{matrix} \right] = \left( \langle \partial_E \phi, \phi \rangle_{L^2}^{-1}  +
\mathcal{O}( \|U\|_{L^2} + \| W \|_{L^2}) \right)
\left[ \begin{array}{cc} \langle \phi, N_-(A \psi + U,B \chi + W) \rangle_{L^2} \\
- \langle \partial_E \phi, N_+(A \psi + U,B \chi + W) \rangle_{L^2} \end{array} \right].
\end{equation}
By using the symmetry properties of $\phi$, $\psi$, and $\chi$,
as well as Lemmas \ref{quadratic} and \ref{cubic}, we get
\begin{equation} \label{Rethe-m}
\begin{split}
& \langle \phi, N_-(A \psi + U,B \chi + W) \rangle_{L^2}
        = \langle \partial_E \phi, \phi \rangle_{L^2}C_3(E)AB +  \langle \phi, \wt{N}_- \rangle_{L^2},\\
& -\langle \partial_E \phi, N_+(A \psi + U,B \chi + W) \rangle_{L^2}
   = \langle \partial_E \phi, \phi \rangle_{L^2}[C_1A^2 + C_2B^2 ] - \langle \partial_E \phi, \wt{N}_+ \rangle,
\end{split}
\end{equation}
where $C_1(E)$, $C_2(E)$, and $C_3(E)$ are defined in \eqref{C-1-2-3}. It then follows from
\eqref{system-truncated}, \eqref{N+-.est}, \eqref{REthe}, and \eqref{Rethe-m} that
$\wt{R}_E$ and $\wt{R}_\theta$ satisfy \eqref{wtUW-eqn}.

The computation of the terms $\wt{R}_A$ and $\wt{R}_B$ can be done exactly the same way using \eqref{RA.def}, \eqref{RB.def}, and the above estimates on $\wt{R}_E$ and $\wt{R}_\theta$.
\end{proof}

\subsection{Analysis of the remainder terms}

We shall now consider the remainder terms $\wt{U}$ and $\wt{W}$ satisfying
the system (\ref{system-infinite-dimen-revised}). Let us denote
$\tilde{\bf Z} = (\tilde{U},\tilde{W})$, $\tilde{\bf R} = (\tilde{R}_U,-\tilde{R}_W)$, and
$$
H(E) = \left[ \begin{array}{cc} 0 & L_-(E) \\ -L_+(E) & 0 \end{array} \right].
$$
The system (\ref{system-infinite-dimen-revised}) can be rewritten in the matrix-vector notations as
\begin{eqnarray}
\label{system-remainder}
\partial_t \tilde{\bf Z} = H(E) \wt{\bf Z} + \wt{\bf R}(E,A,B,\tilde{\bf Z}).
\end{eqnarray}

Let $P_c(E)$ be the projection operator associated to the complement
of the four-dimensional subspace spanned by
$$
\left\{ \left[ \begin{array}{c} \phi \\ 0 \end{array} \right], \;\;
\left[ \begin{array}{c} 0 \\ \partial_E \phi \end{array} \right], \;\;
\left[ \begin{array}{c} 0 \\ \psi \end{array} \right], \;\;
\left[ \begin{array}{c} \chi \\ 0 \end{array} \right] \right\}.
$$

We shall single out quartic terms in $(A,B)$ from the nonlinear term of the system (\ref{system-remainder}).
To do so, we expand
\begin{equation}
\label{wt-R-bf}
\wt{\bf R} = \sum_{i+j = 4} {\bf f}_{ij}(E) A^i B^j + \hat{\bf F}(E,A,B,\tilde{\bf Z}),
\end{equation}
where the $C^1$ functions ${\bf f}_{ij}(E) = P_c(E) {\bf f}_{ij}(E) \in H^2(\mathbb{R})$
can be computed explicitly  near $E = E_*$, whereas the function
$\hat{\bf F}(E,A,B,\tilde{\bf Z})$ satisfies the bounds
\begin{equation}
\label{bounds-preliminary}
\exists C_s > 0 : \quad \| \hat{\bf F} \|_{H^s} \leq C_s \left( |A|^5 + |B|^5 +(|A| +|B|) \|\wt{\bf Z} \|_{H^s}
+ \|\wt{\bf Z} \|_{H^s}^2 \right),
\end{equation}
for any $s > \frac{1}{2}$. (Recall here that $H^s(\R)$ is a Banach algebra for any $s > \frac{1}{2}$.)
Without loss of generality, we can work for $s = 1$.

Now, we mirror the decomposition (\ref{wt-R-bf}) and expand $\wt{\bf Z}$ as
\begin{equation}
\label{remainder-terms-revised}
\wt{\bf Z} = \sum_{i+j =4} {\bf z}_{ij}(E) A^i B^j + \hat{\bf Z}(E,A,B),
\end{equation}
where the $C^1$ functions ${\bf z}_{ij}(E) = P_c(E) {\bf z}_{ij}(E) \in H^2(\mathbb{R})$ can be computed explicitly
near $E = E_*$. The new variable satisfies
\begin{eqnarray}
\label{system-remainder-final}
\partial_t \hat{\bf Z} = H(E) \hat{\bf Z} + \hat{\bf R}(E,A,B,\hat{\bf Z}),
\end{eqnarray}
where the residual term $\hat{\bf R}$ is computed from
$\hat{\bf F}$ similar to how $\tilde{\bf R}$ is computed from $\tilde{\bf F}$.
Therefore, the residual term satisfies the bound
\begin{equation}
\label{bounds}
\exists C > 0 : \quad \| \hat{\bf R} \|_{H^1} \leq C \left( |A|^5 + |B|^5 + (|A| + |B|) \| \hat{\bf Z} \|_{H^1}
+ \| \hat{\bf Z} \|^2_{H^1} \right).
\end{equation}

Because $E$ depends on $t$, the spectral projections associated to the linearized operator $H(E)$
are time dependent. Since $E$ is close to $E_*$, we can fix the value $E_*$ before writing
the time evolution problem (\ref{system-remainder-final}) in the Duhamel form.
In other words, we first rewrite (\ref{system-remainder-final}) as
\begin{eqnarray}
\label{system-remainder-new}
\partial_t \hat{\bf Z} = P_c^*(E_*) H(E_*) P_c(E_*) \hat{\bf Z}
+ H_R(E) \hat{\bf Z} + \hat{\bf R}(E,A,B,\hat{\bf Z}),
\end{eqnarray}
where $H_R(E) = P_c^*(E) H(E) P_c(E) - P_c^*(E_*) H(E_*) P_c(E_*)$ is a $2 \times 2$
matrix-valued function. Thanks to $C^1$ smoothness of this function in $E$ and $x$,
it enjoys the bound
\begin{equation}
\label{bounds-linear}
\exists C > 0 : \quad \| H_R(E) \|_{C^1} \leq C |E - E_*|.
\end{equation}

Then we use the Duhamel principle and write
\begin{eqnarray}
\nonumber
& \phantom{t} & \hat{\bf Z}(t) = P_c^*(E_*) e^{t H(E_*)} P_c(E_*) \hat{\bf Z}(0)  \\
\label{system-integral} & + & P_c^*(E_*) \int_0^t  e^{(t-\tau) H(E_*)} P_c(E_*)
\left[ H_R(E(\tau)) \hat{\bf Z}(\tau) + \hat{\bf R}(E(\tau),A(\tau),B(\tau),\hat{\bf Z}(\tau)) \right] d \tau.
\end{eqnarray}

It follows from stability of the symmetric state $\phi(E)$ in Theorem \ref{theorem-stability} that
the operator $H(E_*)$ has zero eigenvalue of algebraic multiplicity {\em four} and the rest of the spectrum
is purely imaginary and bounded away from zero. Therefore, operator $P_c^* (E_*) e^{t H(E_*)} P_c(E_*)$ forms a semi-group from $H^s(\R)$ to $H^s(\R)$ for any $s \geq 0$ and there is $C_s > 0$ such that
\begin{equation}
\label{bound-semigroup}
\| P_c^* (E_*) e^{t H(E_*)} P_c(E_*) \|_{H^s \to H^s} \leq C_s.
\end{equation}

Local existence and uniquness of solutions $\hat{\bf Z}(t)$ of the integral equation (\ref{system-integral})
follows for any $t \in [0,t_0]$, where $t_0 > 0$ is sufficiently small for fixed-point arguments. The solution
can be continued over the maximal existence interval using standard continuation methods. We shall
now use Gronwall's inequality to control $\| \hat{\bf Z}(t) \|_{H^1}$ over $t \in [0,T]$, where $T$ is bounded by (\ref{region-2}) and $(A,B)$ belong to the domain (\ref{region}).

Using (\ref{bounds}),  (\ref{bounds-linear}), (\ref{system-integral}), and (\ref{bound-semigroup}),
we obtain for any $t \in [0,T]$ that
\begin{eqnarray*}
\exists C > 0 : \quad \| \hat{\bf Z}(t) \|_{H^1} & \leq & C \| \hat{\bf Z}(0) \|_{H^1} +
C \int_0^t |E(\tau) - E_*| \| \hat{\bf Z}(\tau)\|_{H^1} d \tau +
C \int_0^t |A(\tau)| \| \hat{\bf Z}(\tau) \|_{H^1} d \tau \\
& \phantom{t} &  \phantom{texttext} + C T |\Delta N|^{5/2}  + C \int_0^t \| \hat{\bf Z}(\tau) \|^2_{H^1} d \tau.
\end{eqnarray*}

By Gronwall's inequality, we have
\begin{eqnarray*}
\sup_{t \in [0,T]} \| \hat{\bf Z}(t) \|_{H^1} & \leq & C \left( \| \hat{\bf Z}(0) \|_{H^1}
+ T |\Delta N|^{5/2} + T \sup_{t \in [0,T]} \| \hat{\bf Z}(t) \|^2_{H^1} \right)
e^{C \int_0^T \left( |E(\tau) - E_*| + |A(\tau)| \right) d \tau}.
\end{eqnarray*}
If $A$, $T$, and $E$ are estimated by (\ref{region}), (\ref{region-2}), and (\ref{region-3}) respectively,
the last exponential term is bounded as $|\Delta N| \to 0$.
Elementary continuation arguments give that if
$\| \hat{\bf Z}(0) \|_{H^1} \leq C_0 (\Delta N)^2$, then
there is $C > 0$ such that
\begin{eqnarray*}
\sup_{t \in [0,T]} \| \hat{\bf Z}(t) \|_{H^1} \leq C (\Delta N)^2, \quad t \in [0,T],
\end{eqnarray*}
or, by virtue of equations (\ref{region}) and (\ref{remainder-terms-revised}),
there is $C > 0$ such that
\begin{equation}
\label{estimate-theorem-3}
\sup_{t \in [0,T]} \| \tilde{\bf Z}(t) \|_{H^1} \leq C (\Delta N)^2, \quad t \in [0,T],
\end{equation}

Bound (\ref{estimate-theorem-3}) provides the proof of the estimate (\ref{estimate-theorem-1}).
The estimate (\ref{estimate-theorem-2}) follows from (\ref{wtUW-eqn}) and (\ref{estimate-theorem-3}).
All together, the proof of Theorem \ref{theorem-remainder} is complete.

\subsection{Conserved quantities}

To complete the proof of Theorem \ref{theorem-main}, we need to show that the trajectories
of the system (\ref{system-truncated}) remain in the domain (\ref{region}) for $t \in [0,T]$
and satisfy the estimates (\ref{region-2}) and (\ref{region-3}).

Estimates (\ref{region-3}) follow from the first two equations of the system (\ref{system-truncated})
in the domain (\ref{region}) under the estimate (\ref{estimate-theorem-2}) on the error terms
and the estimate (\ref{region-2}) on the maximal time $T$. Therefore, we shall only prove
the estimates (\ref{region}) and (\ref{region-2}). To do so, we work with the last two equations
of the system (\ref{system-truncated}) and employ the conserved quantities (\ref{conservation-N}) and (\ref{conservation-H}).

Expanded at the quadratic terms in $(A,B)$, the conserved quantity for ${\cal N}_0$ becomes
$$
{\cal N}_0 = N_s(E) + A^2 \| \psi \|^2_{L^2}+ B^2 \| \chi \|^2_{L^2} + {\cal O}\left(
(A^2 + B^2)^2 + (A + B)(\| \wt{U} \|_{L^2} + \| \wt{W} \|_{L^2}) +
\|\wt{U} \|^2_{L^2} + \|\wt{W} \|_{L^2}^2 \right).
$$
We note that the terms involving $\wt{U}$ and $\wt{W}$ are controlled by the bound (\ref{estimate-theorem-1})
to be of the higher order than the terms involving $A$ and $B$ in the domain (\ref{region}).
To simplify our notations, we shall then rewrite ${\cal N}_0$ simply as
\begin{eqnarray} \label{conservation-N-truncated}
{\cal N}_0 = N_s(E) + A^2 \| \psi \|^2_{L^2}+ B^2 \| \chi \|^2_{L^2} + {\cal O}(A^2 + B^2)^2.
\end{eqnarray}

\begin{remark}
Computing the derivative of (\ref{conservation-N-truncated}) in time and
using system (\ref{system-truncated}) up to the quadratic order, we obtain
$$
N_s'(E) C_3(E) + 2(\|\psi(E) \|^2_{L^2} + \Lambda^2(E)  \| \chi(E) \|^2_{L^2}) = 0,
$$
which is identically satisfied from system (\ref{linearized-system}) thanks to the identity
\begin{eqnarray}
\nonumber
\|\psi \|^2_{L^2} + \Lambda^2  \| \chi \|^2_{L^2} & = & \langle \psi, L_- \chi \rangle_{L^2} -
\langle \chi, L_+ \psi \rangle_{L^2} = \langle \psi, (L_- - L_+) \chi \rangle_{L^2} \\ \label{tech-iden}
& = & 2p \langle \psi, \phi^{2p} \chi \rangle_{L^2} = -\frac{1}{2} N_s'(E) C_3(E).
\end{eqnarray}
\end{remark}

Expanded at the quadratic terms in $(A,B)$, the conserved quantity for
${\cal H}_0$ becomes
\begin{eqnarray}
\nonumber
{\cal H}_0 & = & H_s(E) + A^2 \int_{\R} \left( \psi_x^2 + V \psi^2  - (2p+1) \phi^{2p} \psi^2 \right) dx
+ B^2 \int_{\R} \left( \chi_x^2 + V \chi^2
- \phi^{2p} \chi^2 \right) dx \\
\label{conservation-H-truncated} & \phantom{t} & + {\cal O}(A^2 + B^2)^2,
\end{eqnarray}
Using the system (\ref{linearized-system}) and the normalization $\langle \psi, \chi \rangle_{L^2} = 1$,
we obtain
\begin{eqnarray*}
\int_{\R} \left( \psi_x^2 + V \psi^2 - (2p+1) \phi^{2p} \psi^2 \right) dx & = & - \Lambda^2(E)
- E \| \psi \|^2_{L^2}, \\
\int_{\R} \left( \chi_x^2 + V \chi^2 - \phi^{2p} \chi^2 \right) dx & = & 1 - E \| \chi \|^2_{L^2}.
\end{eqnarray*}
Using (\ref{conservation-N-truncated}), we can further simplify (\ref{conservation-H-truncated}) to the form
\begin{eqnarray}
\label{conservation-H-truncated-new}
{\cal H}_0 = H_s(E) + E (N_s(E) - {\cal N}_0) - \Lambda^2(E) A^2 + B^2  + {\cal O}(A^2 + B^2)^2.
\end{eqnarray}

We can now extend the conserved quantity ${\cal H}_0$ up to the quartic terms
and write it abstractly as
\begin{eqnarray}
\nonumber
{\cal H}_0 & = & H_s(E) + E (N_s(E) - {\cal N}_0) - \Lambda^2(E) A^2 + B^2 \\
\label{conservation-H-truncated-quartic} & \phantom{t} & + \frac{1}{2}
D_1(E) A^4 + D_2(E) A^2 B^2 + \frac{1}{2} D_3(E) B^4 + {\cal O}(A^2 + B^2)^3,
\end{eqnarray}
where $D_1$, $D_2$, and $D_3$ are some coefficients, which can be computed explicitly.
To avoid lengthy computations, we shall compute these coefficients from
the derivative of (\ref{conservation-H-truncated-quartic}) in time
and using the identity (\ref{var-principle-soliton}) and the system (\ref{system-truncated})
up to the quartic order. This procedure yields two relations between
three coefficients $D_1$, $D_2$, and $D_3$,
\begin{eqnarray*}
D_1 + \Lambda^2 D_2 & = & \frac{1}{2} C_3 ( \| \psi \|_{L^2}^2 + \partial_E \Lambda^2) + C_4 \Lambda^2 - C_6, \\
D_2 + \Lambda^2 D_3 & = & \frac{1}{2} C_3 \| \chi \|_{L^2}^2 + C_5 \Lambda^2 - C_7.
\end{eqnarray*}
Since $\Lambda^2(E_*) = 0$, coefficients $D_1(E_*)$ and $D_2(E_*)$ are determined uniquely
from this system. In particular, using (\ref{derivative-result}),
(\ref{C-1-2-3}), and (\ref{tech-iden}), we compute
\begin{eqnarray}
\nonumber
D_1(E_*) & = & - 2 p^2 (2p+1) \frac{\langle \partial_E \phi_*, \phi_*^{2p-1} \psi_*^2 \rangle_{L^2}
\langle \psi_*, \phi_*^{2p} \chi_* \rangle_{L^2}}{\langle \partial_E \phi_*, \phi_* \rangle_{L^2}} - Q \\
\label{represent-3}
& = & -p (2p+1) \frac{\langle \partial_E \phi_*, \phi_*^{2p-1} \psi_*^2 \rangle_{L^2}
\| \psi_* \|^2_{L^2}}{\langle \partial_E \phi_*, \phi_* \rangle_{L^2}} -Q.
\end{eqnarray}

\subsection{Analysis of dynamics as $E \to E_*$}

Hamiltonian system (\ref{system-truncated}) equipped with conserved quantities (\ref{conservation-N-truncated})
and (\ref{conservation-H-truncated-quartic}) is integrable in the sense of the Liouville up to the
error terms controlled by Theorem \ref{theorem-remainder}.
Using the conserved quantity (\ref{conservation-N-truncated}) and the assumption that
$N_s'(E_*) \neq 0$, we can exclude the variable $E$ near $E = E_*$. Then,
using the conserved quantity (\ref{conservation-H-truncated-quartic}), we can
plot the trajectories of the system (\ref{system-truncated}) on the phase plane $(A,B)$ and show
the topological equivalence of the phase portraits of the system (\ref{system-truncated})
to those of the second-order system (\ref{normal-form-equation-time}).

By the assumptions of Theorem \ref{theorem-main}, we have
$\lambda'(E_*) < 0$, $N_s'(E_*) > 0$, and ${\cal Q} < 0$.
As stated in Theorem \ref{theorem-large-separation}, these assumptions are
satisfied for the symmetric potential (\ref{potential}) in the limit of
large separation of wells, that is, as $s \to \infty$.
Hence the asymmetric states $\varphi_{\pm}(x;{\cal E})$ in the expansion (\ref{asymmetric-expansion})
exists for ${\cal E} > E_*$.

We now proceed with the phase plane analysis for the system (\ref{system-truncated}).
We denote $\Delta N = {\cal N}_0 - N_s(E_*)$ and $\Delta H = {\cal H}_0 - H_s(E_*)$. Assuming that
$\Delta N$ is small, we shall work in the domain (\ref{region}) and use the
expansions \eqref{conservation-N-truncated} and (\ref{conservation-H-truncated-quartic}) rewritten
again as
\begin{eqnarray*}
\Delta N = N_s'(E_*) (E - E_*) + A^2 \| \psi_* \|^2_{L^2}+ B^2 \| \chi_* \|^2_{L^2} + {\cal O}((E-E_*)^2,(E-E_*)(A^2+B^2),(A^2+B^2)^4)
\end{eqnarray*}
and
\begin{eqnarray*}
\Delta H & = & H_s(E) - H_s(E_*) + E (N_s(E) - N_s(E_*) - \Delta N) \\
& \phantom{t} & \phantom{text} - \Lambda^2(E) A^2 + B^2  +
\frac{1}{2} D_1(E) A^4 + D_2(E) A^2 B^2 + \frac{1}{2} D_3(E) B^4 + {\cal O}((A^2+B^2)^6), \\
& = & - E \Delta N + \frac{1}{2} N_s'(E_*) (E - E_*)^2 + \lambda'(E_*) \| \psi_* \|^2_{L^2}
(E - E_*) A^2 + B^2 + \frac{1}{2} D_1(E_*) A^4 + D_2(E_*) A^2 B^2  \\
& \phantom{t} & \phantom{text} + \frac{1}{2} D_3(E_*) B^4 +
{\cal O}((E-E_*)^3,(E-E_*)^2(A^2+B^2),(E-E_*)(A^2+B^2)^2,(A^2+B^2)^3),
\end{eqnarray*}
where we have used (\ref{perturbation-theory}) and (\ref{var-principle-soliton}).
The first conserved quantity is useful to eliminate $E$ in the domain (\ref{region}) by
\begin{eqnarray}
\label{delta-N}
N_s'(E_*) (E - E_*) = \Delta N - A^2 \| \psi_* \|^2_{L^2} - B^2 \| \chi_* \|^2_{L^2} + {\cal O}(\Delta N)^2.
\end{eqnarray}
The second conserved quantity can now be written in the form
\begin{eqnarray}
G = \frac{(\Delta N)}{N_s'(E_*)} \lambda'(E_*) \| \psi_* \|^2_{L^2} A^2 + B^2
+ \frac{1}{2} F_1 A^4 + F_2 A^2 B^2 + \frac{1}{2} F_3 B^4 + {\cal O}(\Delta N)^3,
\end{eqnarray}
where
\begin{eqnarray*}
G & = & \Delta H + E_* \Delta N + \frac{(\Delta N)^2}{2 N_s'(E_*)}, \\
F_1 & = & \frac{\| \psi_* \|^4_{L^2} (1 - 2 \lambda'(E_*))}{N_s'(E_*)} + D_1(E_*), \\
F_2 & = & \frac{\| \psi_* \|^2_{L^2} \| \chi_* \|^2_{L^2} (1 - \lambda'(E_*))}{N_s'(E_*)} + D_2(E_*), \\
F_3 & = & \frac{\| \chi_* \|^4_{L^2}}{N_s'(E_*)} + D_3(E_*).
\end{eqnarray*}

Using (\ref{bifurcation-coefficient}), (\ref{represent-0}), (\ref{perturbation-theory}),
and (\ref{represent-3}), we obtain
$$
F_1 = -\frac{{\cal Q} {\cal S}}{N_s'(E_*)}.
$$
In the domain (\ref{region}), where $A^2 = {\cal O}(\Delta N)$ and $B^2 = {\cal O}(\Delta N)^2$, 
$G$ can be rewritten in the form,
\begin{eqnarray}
\label{delta-H}
G = \frac{(\Delta N)}{N_s'(E_*)} \lambda'(E_*) \| \psi_* \|^2_{L^2} A^2 + B^2
- \frac{{\cal Q} {\cal S}}{2 N_s'(E_*)} A^4 + {\cal O}(\Delta N)^3.
\end{eqnarray}
This quantity is time-preserving for any $t \in [0,T]$. Truncation of (\ref{delta-H})
at the error term ${\cal O}(|\Delta N|^3)$ produces the second-order system (\ref{normal-form-equation-time}).

We can now look at four different cases in the dynamics of the system produced
by the conserved quantity (\ref{delta-H}) for any $t \in [0,T]$.
Recall that $N_s'(E_*) > 0$ and ${\cal Q} < 0$ by our assumption.

\subsubsection{Case $\Delta N > 0$ and ${\cal S} > 0$}

It follows from (\ref{delta-H}) that the critical point $(A,B) = (0,0)$ is a saddle point
of $G$ if $\Delta N > 0$ (recall that $\lambda'(E_*) < 0$). The level set $G = 0$
gives a curve on the phase plane $(A,B)$ given by
\begin{equation}
B^2 = \frac{(\Delta N)}{N_s'(E_*)} |\lambda'(E_*)| \| \psi_* \|^2_{L^2} A^2
+ \frac{{\cal Q} {\cal S}}{N_s'(E_*)} A^4 + {\cal O}(\Delta N)^3.
\end{equation}
This curve contains the point $(0,0)$ up to the terms of the order of $(\Delta N)^3$.

If ${\cal Q} < 0$ and ${\cal S} > 0$, the curve $G = 0$ consists of two symmetric loops on
the plane $(A,B)$ that enclose the points $(\pm A_*,0)$, where $G$ is minimal.
An elementary computation shows that
\begin{equation}
\label{nonzero-roots}
A_*^2 = \frac{\lambda'(E_*) \| \psi_* \|^2_{L^2}}{{\cal Q} {\cal S}} (\Delta N) + {\cal O}(\Delta N)^2.
\end{equation}
It is easy to check that expansion (\ref{nonzero-roots}) follows from
the stationary normal-form equation (\ref{normal-form-stat}) if $(E-E_*)$ is eliminated by
the conserved quantity (\ref{delta-N}).

The critical points $(\pm A_*,0)$ are minima of $G$ (center points). Therefore,
they are surrounded by continuous families of periodic orbits on the phase plane $(A,B)$.
Periodic orbits fill the domain enclosed by the two
loops of the level $G = 0$. There are also periodic orbits for $G > 0$ that surround
all three critical points $(\pm A_*,0)$ and $(0,0)$.

Thus, the truncated system (\ref{system-truncated}) approximates the dynamics near
the unstable symmetric state $\phi(x;E)$ and the stable asymmetric states $\varphi_{\pm}(x;{\cal E})$.
The phase portrait is topologically equivalent to the one on Figure \ref{figure-2} (top left).

\subsubsection{Case $\Delta N > 0$ and ${\cal S} < 0$}

The critical point $(A,B) = (0,0)$ is again a saddle point of $G$ if $\Delta N > 0$.
No other critical points of $G$ for small $A$ and $B = 0$ exist if ${\cal S} < 0$. Therefore,
trajectories on the phase plane $(A,B)$ near $(0,0)$ are all hyperbolic and
they leave the neighborhood of $(0,0)$ in a finite time, except for the stable manifolds.
The truncated system (\ref{system-truncated}) in this case approximates the dynamics near
the unstable symmetric state $\phi(x;E)$. The phase portrait
is topologically equivalent to the one on Figure \ref{figure-2} (top right).

\subsubsection{Case $\Delta N < 0$ and ${\cal S} > 0$}

It follows from (\ref{delta-H}) that the critical point $(A,B) = (0,0)$ is a minimum of $G$ if $\Delta N < 0$.
No other critical points of $G$ for small $A$ and $B = 0$ exist if ${\cal S} > 0$. Therefore,
$(0,0)$ is a center point, which is surrounded by a continuous family of periodic orbits on the
phase plane $(A,B)$. The truncated system (\ref{system-truncated}) in this case approximates the dynamics near
the stable symmetric state $\phi(x;E)$. The phase portrait
is topologically equivalent to the one on Figure \ref{figure-2} (bottom left).

\subsubsection{Case $\Delta N < 0$ and ${\cal S} < 0$}

The critical point $(A,B) = (0,0)$ is again a minimum of $G$ if $\Delta N < 0$.
If ${\cal S} < 0$, there exist two symmetric maxima of $G$ at the points $(\pm A_*,0)$, where $A_*^2$ is given by (\ref{nonzero-roots}). The points $(\pm A_*,0)$ are saddle points
and the level curve of $G$ at $A = A_*$ and $B = 0$ prescribes a pair of heteroclinic orbits connecting
$(\pm A_*,0)$. The pair of heteroclinic orbits encloses a continuous family of closed curves surrounding
the center point $(0,0)$ on the phase plane and corresponding to periodic orbits. The truncated system (\ref{system-truncated}) approximates the dynamics near the stable symmetric state $\phi(x;E)$
and the unstable asymmetric states $\varphi_{\pm}(x;{\cal E})$. The phase portrait in this case is topologically equivalent to the one on Figure \ref{figure-2} (bottom right).

\subsection{The end of the proof of Theorem \ref{theorem-main}}

It follows from Section 4.4 that all nontrivial solutions of the system (\ref{system-truncated})
in the domain (\ref{region}) are topologically equivalent to the ones given by the
second-order system (\ref{normal-form-equation-time}).

The estimate (\ref{region-2}) on the maximal time $T > 0$, during which the solutions
remain in the domain (\ref{region}), follows directly from the integration
of the system (\ref{normal-form-equation-time}) over the time $t \in [0,T]$.
The other estimates of Theorem \ref{theorem-main} follows from Theorem \ref{theorem-remainder}.
The proof of Theorem \ref{theorem-main} appears to be complete.

\end{document}